\title[Discrete time optimal control with frequency constraints]{Discrete time optimal control with frequency constraints for non-smooth systems}
\author[S. Kotpalliwar]{Shruti Kotpalliwar}
\author[P. Paruchuri]{Pradyumna Paruchuri}
\author[D. Chatterjee]{Debasish Chatterjee}
\author[R. Banavar]{Ravi Banavar}
\thanks{The authors were supported partially by the grant 17ISROC001 from the Indian Space Research Organization. They thank Soumitro Banerjee for useful pointers to the literature on control of power electronic circuits, and Karmvir Singh Phogat for helpful discussions and access to his software.}
\keywords{frequency constraints, optimal control, Pontryagin maximum principle, nonsmooth systems}
\date{\today}
\begin{document}
	\maketitle
	\begin{abstract}
		{ We present a Pontryagin maximum principle for discrete time optimal control problems with (a) pointwise constraints on the control actions and the states, (b) smooth \revision{frequency} constraints on the control and the state trajectories, and (c) nonsmooth dynamical systems. Pointwise constraints on the states and the control actions represent desired and/or physical limitations on the states and the control values; such constraints are important and are widely present in the optimal control literature. Constraints of the type (b), while less standard in the literature, effectively serve the purpose of describing important properties of inertial actuators and systems. The conjunction of constraints of the type (a) and (b) is a relatively new phenomenon in optimal control but are important for the synthesis control trajectories with a high degree of fidelity. The maximum principle established here provides first order necessary conditions for optimality that serve as a starting point for the synthesis of control trajectories corresponding to a large class of constrained motion planning problems that have high accuracy in a computationally tractable fashion. Moreover, the ability to handle a reasonably large class of nonsmooth dynamical systems that arise in practice ensures broad applicability of our theory, and we include several illustrations of our results on standard problems.}
	\end{abstract}	
			
	\section{Introduction}
	\label{sec:intro}

Optimal control theory, arguably, started in the avatar of the Brachystochrone problem of J.\ Bernoulli in the late 17th century \cite{ref:SusWil-97}, and over the intervening centuries has evolved into a subject that offers a powerful set of tools for control synthesis. Especially relevant in the context of practical applications are synthesis techniques that seamlessly integrate constraints on the states and control actions while maintaining computational tractability. The literature on such constrained problems is certainly not as vast as the counterpart for unconstrained problems, and our article contributes to precisely this body of work.

Under the overarching stipulation of computationally tractable control synthesis techniques, there are two specific contributions of our work. The first concerns the \emph{simultaneous} inclusion of the following four different classes of constraints in control problems:
\begin{itemize}[label=\(\circ\), leftmargin=*]
	\item constraints on the control actions pointwise in time,
	\item constraints on the states pointwise in time,
	\item \revision{frequency} constraints on the control trajectories, and
	\item \revision{frequency} constraints on the state trajectories.
\end{itemize}

\emph{Each} of these four types of constraints is extremely important in practical applications. Actuators are governed by the laws of physics and cannot deliver signals with magnitudes that are beyond their physical limitations; this naturally means that the set of admissible control actions is constrained. Controlled dynamical systems are typically permitted to operate only within reasonable boundaries to avoid fatigue, premature ageing and disintegration of their components; this naturally imposes constraints on their states. Frequency constraints on the control trajectories are needed for all inertial actuators, without which the synthesized control signals may contain frequencies that cannot be faithfully reproduced by the actuators due to their physical limitations; in such cases, differences arise between the predicted and observed behaviours, leading to loss of precision in their desired performances. To take care of this issue, frequently in practice, the synthesized controls are passed through a filter located before the actuator to ensure satisfaction of the desired spectrum. This procedure, however, distorts the original signal that was designed with the 
desired performance objectives in mind, and hence the system performance deteriorates. Frequency constraints on the state trajectories are relevant in inertial controlled systems: they are especially useful to prevent undesirable vibrations of flexible structures in mechanical objects such as satellites, aeroplanes, robotic arms, or to induce desirable damping in flexible structures, etc., and vibration control has been an active area of control for the past several decades.

On the one hand, the first two types of constraints have been studied extensively in the context of optimal control, and are, to a fair extent, amenable to computationally constructive synthesis techniques. Indeed, control action constraints are common place in optimal control theory: both the Pontryagin maximum principle and dynamic programming techniques \cite{Ioffe2009,liberzon2011, ref:Dub-78, ref:DubMil-65, Bertsekas} permit the inclusion of control action constraints pointwise in time. While dynamic programming is not always computationally tractable (e.g., for high-dimensional systems), there are numerical algorithms that employ the Pontryagin maximum principle to synthesize constrained optimal control trajectories. Algorithmic techniques relying on viability theory \cite{aubin2011viability}, while computationally demanding (to the point of being prohibitive for high-dimensional systems), permit the inclusion of state constraints pointwise in time in addition to control constraints. On the other hand, the inclusion of \revision{frequency} constraints on the control and state trajectories is relatively uncommon in the literature. Despite the fact that the classical feedback techniques of continuous-time \(H_\infty\) control \cite{Doyle92} do deal with \revision{frequency domain} behaviour of the control trajectories and are capable of indirectly realizing restrictions \revision{on frequency components} via penalization of certain bands, neither sharp cut-offs in the spectrum nor control and state constraints can be readily enforced via this technique. \revision{Frequency} constraints on the state trajectories, especially in the context of nonlinear systems, have been treated sparsely in the literature; typical existing approaches rely on frequency domain techniques that are ill-suited to handle state and control constraints and are difficult to apply to nonlinear systems. In fact, apart from the US Patent \cite{Brockett}, we are unaware of extensive studies that impose \revision{frequency} constraints on the states and control trajectories. In particular, our own previous studies \cite{ref:ParCha-19, FreqPMPLieGroup} have only focussed on \revision{frequency} constraints on the control trajectories, and the simultaneous inclusion of time and \revision{frequency} constraints with precise bounds and cut-offs on frequency bands are not to be found in the existing literature to the best of our knowledge. Our current results address these lacunae: we establish a version of the Pontryagin maximum principle for optimal control of discrete time nonlinear control systems for which the aforementioned constraints are all enforced \emph{simultaneously}. These constraints are specifically illustrated in \secref{Ex: Inverted Pendulum} on a linear approximation of an inverted pendulum on a cart around the unstable equilibrium position of the pendulum. We specifically impose state constraints so that the validity of the linearization is maintained, and demonstrate that simple point-to-point manoeuvres are dramatically altered by an ad-hoc posteriori application of frequency filters.

The second specific contribution of this article is the ability to encompass nonsmooth problem data. Nonsmooth dynamical systems arise naturally in a variety of application areas \cite{Brogliato}, including those where the physical characteristics of devices involve nonlinearities such as the norm instead of its square \cite{ref:GruPan-17}, or when the dynamical equations are defined piece-wise over disjoint domains of the state-space as a result of intrinsic properties of devices \cite{Banarjee2000}. The Pontryagin maximum principle established in this article, although not catering to the most general situation insofar as nonsmoothness is concerned, handles a large class of nonsmooth nonlinearities that arise in practical applications.\footnote{See also the discussion immediately preceding \eqref{e:DTOCPNSD} concerning possible generalizations.} This specific aspect of our results is illustrated on two examples in \secref{sec:Numerics}: a standard non-smooth dynamical system and the popular buck converter power electronic circuit.

The optimal control synthesis procedure presented in this article proceeds with models in discrete time. We presume that these discrete time models faithfully represent the dynamics of the system, and do not concern ourselves here with the issues of discretization of continuous-time systems. Our preference for discrete time models is motivated by two factors, the first of which is the observation that the final implementation of controllers are carried out digitally via sample-and-hold mechanisms, and therefore, naturally involves a time discretization. The other factor is essentially that of convenience: the mathematical and numerical complexity involved in incorporating the four types of constraints that we want in the continuous time is formidable if not outright impossible.

This article unfolds as follows: We provide in \secref{sec:problem} a precise mathematical statement of our optimal problem that incorporates pointwise constraints on the states and the control actions, and \revision{frequency} constraints on the control and the state trajectories, for a large class of nonsmooth dynamical systems. \secref{sec:prelims} consist of a set of preliminaries needed for the proof of main result. In \secref{sec:main result} we establish a set of first order necessary conditions for optimality in the optimal control problem defined in \secref{sec:problem}; this is the central contribution of this article. We follow up with some technical remarks and immediate corollaries. \secref{sec:proof} is devoted for the detailed proof of main result, and in the final \secref{sec:Numerics} we present several examples to illustrate the proposed necessary conditions. The appendices \secref{Appendix A} and \secref{sec:Lemmas} consist of several auxiliary results that are employed in the proof of our main result.

	\section{Problem Setup}
	\label{sec:problem}
		Consider a discrete time control system whose dynamics is governed by the difference equation
\begin{equation}
\label{eq:sys_dynamics}
	\st[t+1] = \dynamics[t] (\st, \ctrl) \quad \text{for } t = 0, \ldots, \hor-1,
\end{equation}
with the following data:
\begin{enumerate}[label={\textup{(\ref{eq:sys_dynamics}-\alph*)}}, leftmargin=*, widest=b, align=left]
	\item \label{eq:sys:st} \(\st \in \R^{\dimst}\) is the vector of state at time \(t\),
	\item \label{eq:sys:ctrl} \(\ctrl \in \R^{\dimctrl}\) is the vector of control action at time \(t\),
	\item \label{eq:sys:dynamics} \(\R^{\dimst} \times \R^{\dimctrl} \ni (\dummyst[], \dummyctrl[]) \mapsto \dynamics (\dummyst[], \dummyctrl[]) \in \R ^{\dimst}\) for \(t = 0, \ldots, \hor-1\), is a given family of locally Lipschitz continuous maps.
\end{enumerate}

In this article we derive first order necessary conditions \textemdash a Pontryagin Maximum Principle\textemdash for discrete time optimal control problems with nonsmooth dynamics and frequency constraints on the state and control trajectories in addition to the standard  constraints on the control magnitudes and the states. The constraints on the frequencies of the states and control trajectories refer to constraints on the discrete Fourier transform(DFTs) of the individual components of the states and controls. Constraints on the frequency spectra of the control trajectories appeared in the \cite{ref:ParCha-19}; here we move one step further by permitting constraints on the spectra of the state trajectories, to be present as a given stipulation. 

In addition to handling frequency constraints, our result also encompasses piecewise smooth dynamical system such as the ones mentioned in \secref{sec:intro}, and accordingly, we permit our dynamics to be nonsmooth. A more general framework where both the cost-per-stage and the dynamics are nonsmooth, is possible, and such a general framework is typically employed to compare the necessary conditions for a discrete time optimal control problem and a discrete time approximation of a continuous time problem; see, e.g., \cite{Mordukhovich-volII}. We do not aim for maximal generality in our work primarily to ensure a clean calculus; in fact, some of our assumptions are aimed at simplifying the necessary conditions that arise in more general situations.

Here is the precise mathematical statement of our problem:
\begin{equation}
	\boxed{\label{e:DTOCPNSD}
	\begin{aligned}
		& \minimize_{\seq{\ctrl}{t}{0}{\hor-1}} && \sum_{t=0}^{\hor-1} \cost(\st, \ctrl)+\cost[\hor](\st[\hor])\\
		& \sbjto &&
		\begin{cases}
			\text{dynamics } \eqref{eq:sys_dynamics}, \\
			\st \in \admst \quad \text{for } t = 0, \ldots, \hor,\\
			\ctrl \in \admctrl \quad \text{for } t = 0, \ldots, \hor-1,\\
			\stconstraints (\st[0], \ldots, \st[\hor]) = 0, \\
			\ctrlconstraints (\ctrl[0], \ldots, \ctrl[T-1]) = 0, 
		\end{cases}
	\end{aligned}
	}
\end{equation}
where 
\begin{enumerate}[label={\textup{(\ref{e:DTOCPNSD}-\alph*)}}, leftmargin=*, align=left]
	\item \label{eq:sys:cost} the mapping \(\R^{\dimst} \times \R^\dimctrl \ni (\dummyst[], \dummyctrl[]) \mapsto \cost (\dummyst[], \dummyctrl[]) \in \R\) for each \(t = 0, \ldots, \hor-1\), is continuously differentiable and defines a sequence of cost per stage functions, and \(\R^{\dimst}  \ni \dummyst[] \mapsto \cost[\hor] (\dummyst[] ) \in \R\) is a continuously differentiable final stage cost,
	\item \label{eq:sys:admstset} \(\seq{\admst}{t}{0}{\hor}\) is a sequence of closed subsets of \(\R^{\dimst}\) describing a tube (over time) of admissible states,
	\item \label{eq:sys:admctrlset} \(\seq{\admctrl}{t}{0}{\hor-1}\) is a sequence of closed subsets of \(\R^{\dimctrl}\) depicting the sets of admissible control actions at each time \(t\),

	\item \label{eq:sys:SFC} given \(\dimstconstraints \in \N\), some vector \(\Upsilon_{\st[]} \in \R^{\dimstconstraints}\), and smooth functions \(\sfc : \R^{ \dimst} \to \R^{\dimstconstraints}\) for \( t=0, \ldots, \hor \), we define the map \(\R^{\dimst (\hor+1)} \ni (\dummyst[0], \ldots, \dummyst[\hor]) \mapsto \stconstraints (\dummyst[0], \ldots, \dummyst[\hor]) \Let \Upsilon_{\st[]} + \sum_{t=0}^{\hor} \sfc (\dummyst) \in \R^{\dimstconstraints}\) (this map \(\stconstraints\) will be employed to quantify constraints on the frequency spectra of the state trajectories),
	\item \label{eq:sys:CFC} given \(\dimctrlconstraints \in \N\), some vector \(\Upsilon_{\ctrl[]} \in \R^{\dimctrlconstraints}\), and smooth functions \(\cfc :\R^{\dimctrl} \to \R^{\dimctrlconstraints}  \) for \(t=0,\ldots, \hor-1\), we define the map \(\R^{\dimctrl \hor} \ni (\dummyctrl[0], \ldots, \dummyctrl[\hor-1]) \mapsto \ctrlconstraints(\dummyctrl[0], \ldots, \dummyctrl[\hor-1]) \Let \Upsilon_{\ctrl[]} + \sum_{t=0}^{\hor-1} \cfc (\dummyctrl) \in \R^{\dimctrlconstraints}\) (this map \(\ctrlconstraints\) will be employed to quantify constraints on the frequency spectra of the control trajectories).
\end{enumerate}

\revision{%
\begin{remark}
	For the control trajectory \(\seq{\ctrl}{t}{0}{\hor-1}\), let \(\ctrl[] \kth \Let (\ctrl[t] \kth)_{t=0}^{\hor-1}\) denote the trajectory of the \(k^{\text{th}}\) component of the control for each \(k = 1, \ldots, \dimctrl\). For each component trajectory \(\ctrl[] \kth\), we obtain the vector of DFT coefficients \(\widehat{\ctrl[] \kth} \Let \mathbf{F} \ctrl[] \kth \in \mathbb{C}^{\hor}\) where \(\mathbf{F} \in \mathbb{C}^{\hor \times \hor}\) denotes the DFT matrix corresponding to signal length \(\hor\) (see \cite[Chapter 7]{Stein-Shakarchi}). Let \(\widehat{\ctrl[]} \Let \pmat{\widehat{\ctrl[] \kth[1]} \transpose & \widehat{\ctrl[] \kth[2]} \transpose & \ldots & \widehat{\ctrl[] \kth[\dimctrl]} \transpose} \transpose \in \mathbb{C}^{\dimctrl \hor}\) denote the vector of stacked frequency coeffiecients of all components \(\ctrl[] \kth\), \(k = 1, \ldots, \dimctrl\). That is,
	\[
	\mathbb{C}^{\dimctrl \hor} \ni \widehat{\ctrl[]} \Let \pmat{\widehat{\ctrl[] \kth[1]}\\ \vdots\\ \widehat{\ctrl[] \kth[\dimctrl]}} = \pmat{\mathbf{F} \ctrl[] \kth[1]\\ \vdots\\ \mathbf{F} \ctrl[] \kth[\dimctrl]}
	\]
	By frequency constraints on the control trajectory, we refer to constraints on the DFT coefficients of the trajectories of the components of control. We specialize our results in this article to constraints that eliminate certain frequency components of the control trajectories, i.e., restrict certain entries of \(\widehat{\ctrl[]}\) to zero. By carefully choosing the rows corresponding to the desired frequency components to be restricted, separating the real and imaginary parts, and using a suitable rearrangement, one obtains constraints of the form mentioned in \ref{eq:sys:CFC}, with \(\dimctrlconstraints\) denoting the number of entries of \(\widehat{\ctrl[]}\) being restricted, counting real and imaginary parts separately, and the maps \(\cfc\) being the linear transformations containing the rows of DFT matrix in a specific arrangement. In our presentation, we generalize this class of constraint maps,  \(\cfc\), to be just smooth. For more detailed explanation see \cite{arXivCopy}.
	
	The frequency constraints on the state trajectories described in \ref{eq:sys:SFC} are obtained in a similar manner. If one were to consider the DFT coefficients, one needs to use the DFT matrix corresponding to a signal of length \(\hor + 1\).
\end{remark}}

In the sequel we refer to the sequence \( \seq{\ctrl}{t}{0}{\hor-1} \) as a \emph{control trajectory}, with \( \ctrl\) being the \emph{control action} at time \(t\).

	\section{Preliminaries}
	\label{sec:prelims}
		In this section we review some elementary facts from analysis in a nonsmooth setting; we refer the reader to \cite{Clarke, Baptiste, Guler} for further information on this topic.

For us \(\inprod{\cdot}{\cdot}\) refers to the usual inner product in \(\R^{\genDim}\) and \(\norm{\cdot}\) to the norm induced by \(\inprod{\cdot}{\cdot}\). For \(\epsilon > 0\) and \(\genVar \in \R^{\genDim}\), \(\ball{\genVar}\) refers to the open ball \(\set{y \in \R^{\genDim} \suchthat \norm{y - \genVar} < \epsilon}\) of radius \(\epsilon\) centered at \(\genVar\).

 A function \(\function \colon \R^{\genDim} \lra \R\) is said to be \emph{Lipschitz continuous near \(\genVar \in \R^{\genDim}\)} if there exists \( r,L > 0\) such that
\[
	\abs{\function(y)-\function(z)}\leq L \norm{y-z}\quad \text{for all } y, z \in \ball[r]{\genVar}.
\]
	A set \(\genSet \subset \R^{\genDim}\) is \embf{convex} if for all \(x, y \in \genSet\) and  for all \(\alpha \in [0,1]\), we have \((1 - \alpha) x +  \alpha y \in \genSet\). A nonempty  set \( K \subset \R^\genDim\) is a \embf{cone} if for every \(y \in K\) and for every  \(\alpha \geq 0, \) we have \( \alpha y \in K \). A set \(\ccone \subset \R^{\genDim}\) is a \embf{cone with vertex} \(\vertex \in \R^{\genDim}\) if  for all  \(\alpha \ge 0\) and for all \(y \in \ccone\), \(\alpha (y - \vertex) \in \ccone\). For a convex cone \(\ccone \subset \R^{\genDim}\) with vertex \(\vertex\), its \embf{dual cone} \(\dualCone{\ccone}\) is defined via polarity by
	\[
		\dualCone{\ccone} \Let \set[\big]{y \in \R^{\genDim} \suchthat \inprod{y}{\genVar - \vertex} \le 0 \quad \text{for all } \genVar \in \ccone}.
	\]
\begin{definition}\label{def:directional derivative}\cite[Section 1.4 on p.\ 20]{Clarke-13}
	Let \(\function \colon \R^{\dimst} \times \R^{\dimctrl} \lra \R^{\genDim}\) be a continuous map and \(\function_{1}, \ldots, \function_{\genDim}\) are its components. For \(y \in \R^{\dimctrl}\) and a vector \(\dir \in \R^{\dimst}\), we denote by \(\dirder{\function(\cdot, y)}(x)\) \embf{the directional derivative of \(\function(\cdot, y)\)} along \(\dir\) at \(x\), whenever the following limit exists:
	\begin{align}\label{eq:directional derivative}
	 \dirder{\function(\cdot,y)}(\st[]) \Let & \lim\limits_{\theta \downarrow 0} \frac{\function(\st[] + \theta \dir, y) - \function(\st[], y)}{\theta} \\
		= & \left(\lim \limits_{\theta \downarrow 0} \frac{\function_1(\st[] + \theta \dir, y) - \function_1(\st[], y)}{\theta}, \ldots, \lim\limits_{\theta \downarrow 0}\frac{\function_{\genDim}(\st[] + \theta \dir, y) - \function_{\genDim}(\st[], y)}{\theta}\right)\transpose \nonumber
	\end{align}
	We note that the directional derivative above is defined as a right-hand (one-sided) limit. If \( \function \) is continuously differentiable, then \( \dirder{\function(\cdot,y)}(\st[]) = \derivative{\function}{\st[]}(\st[],y). \dir\). However, there are Lipschitz functions for which the left-hand and right-hand limit exist for all the directions at a point \( \genVar\) (which happens if the function is continuous at \genVar), but they are not equal i.e.,  \(\lim\limits_{\theta \downarrow 0} \frac{\function(\st[] + \theta \dir) - \function(\st[])}{\theta}\neq\lim\limits_{\theta \uparrow 0} \frac{\function(\st[] + \theta \dir) - \function(\st[])}{\theta}\).
	For example, for the absolute value function \(\R\ni \genVar \mapsto f(\genVar)=\abs{\genVar} \in \R \) at \( \genVar =0\), \(\dirder{f(0)}=\abs{v}=\begin{cases}
	\dir & \text{ for } \dir \geq 0, \\
	-\dir & \text{ for } \dir < 0,
	\end{cases}\) and \(\lim\limits_{\theta \uparrow 0}\frac{f(0+\theta \dir)-f(0)}{\theta}= -\abs{v}.\) Of course, \(\abs{\cdot} \) is not Gateaux differentiable at \(\genVar=0\).\footnote{Recall that if the right-hand limit of \(\function \colon \R^{\dimst} \times \R^{\dimctrl} \lra \R^{\genDim}\) at \(\genVar\) is equal to the left-hand limit at  \(\genVar\) then, the function \( \function\) is \emph{Gateaux differentiable} at  \(\genVar\). In this case, \(\dirder{\function(\cdot,y)}(\st[])=\derivative{\function(\cdot,y)}{\dummyst[]}(x) \cdot \dir\).}

	\end{definition}
	\begin{definition}
	If \(h : \R^{\genDim} \lra \R\) is a Lipschitz continuous function, then the \embf{generalized directional derivative} \(\gdd[h]{\genVar}{\dir}\) of \(h\) at \(\genVar\) along the direction \(\dir\) is defined by
	\[
		\gdd[h]{\genVar}{\dir} \Let \gdddef{y}{\genVar}{t}{0}{h}{\dir}.
	\]
	In general, the generalized directional derivative takes values in \(\R \cup \{+ \infty\}\).
	The \embf{generalized gradient} \(\ggrad{h}{\genVar}\) of \(h\) at \(\genVar\) is a nonempty compact subset of \(\dualspace{\R^{\genDim}}\) defined by
	\[
		\ggrad{h}{\genVar} \Let \set[\big]{\dummyst[] \in \dualspace{\R^{\genDim}} \suchthat \inprod{\dummyst[]}{\dir} \le \gdd[h]{\genVar}{\dir} \quad \text{for all } \dir \in \R^{\genDim}}.
	\]
\end{definition}

\revision{\begin{definition}
	A Lipschitz continuous function \(\function \colon \R^{\dimst} \lra \R\) is said to be \embf{regular} at \( x \in \R^{\dimst} \) if the directional derivative of \( \function \) at \(x\) along any \( \dir \in \R^{\dimst} \) exists and is equal to its generalized directional derivative at \( x \) along  that \( \dir \), i.e., 
		\[
			\dirder{\function}(x) = \gdd[\function]{x}{\dir} < + \infty.
		\]
\end{definition}}

We look at two examples:
	\begin{itemize}[leftmargin=*, label=\( \circ \)]
		\item Let \( \R \ni x \mapsto f(x)\Let \text{max} \{0, x\} \in \R\). It is clear that \(f\) is Lipschitz continuous and  is differentiable everywhere except at \(x=0\). 
		The directional derivative, the  generalized directional derivative and the generalized gradient of \( f \) at \(0\) and \(1\) are
		\[
		 \revision{\dirder{f}(0) = f(\dir),} \quad \gdd[f]{0}{\dir}=f(\dir), \quad  \ggrad{f}{0}=[0,1], 
		\]
		\[
		  \revision{ \dirder{f}(1) = \dir,} \quad \gdd[f]{1}{\dir}=\dir \quad \text{ and } \quad \ggrad{f}{1}=\{1\}.
		\] 
		\item Let \( \R^{\dimst} \ni x \mapsto f(x) \Let \norm{x} \in \R\). Clearly, \(f\) is a Lipschitz continuous function differentiable everywhere except at \(0\). 
			The  generalized directional derivative and the generalized gradient of \( f \) at \(0\) are 
			\[
			 \revision{\dirder{f}{(0)} = \norm{\dir},} \quad \gdd[f]{0}{\dir}=\norm{\dir} \text{ and } \quad  \ggrad{f}{0}=\closure (\ball[1]{0}).
			\] 
	\end{itemize}
	\revision{Note that the functions in the above two examples are regular at \(0 \).}

	Let \(\genSet \subset \R^{\genDim}\) be a nonempty and closed set. The \embf{distance} \(\dist{\genSet}(\genVar)\) of a point \(\genVar \in \R^{\genDim}\) from \(\genSet\) is defined by 
	\[
		\R^{\genDim} \ni \genVar \mapsto \dist{\genSet}(\genVar) \Let \inf_{s \in \genSet} \norm{\genVar - s} \in [0, +\infty[.
	\]
\begin{definition}	
	The \embf{Clarke tangent and normal cones} to  \(\genSet\) at a point \(\genVar \in \genSet\), denoted by \(\Tcs_{\genSet}(\genVar)\) and \(\Ncs_{\genSet}(\genVar)\) respectively, are defined as
	\begin{align*}
		& \Tcs_{\genSet}(\genVar) \Let \set[\big]{\dir \in \R^{\genDim} \suchthat \gdd[\dist{\genSet}]{\genVar}{\dir} \le 0},\\
		& \Ncs_{\genSet}(\genVar) \Let \set[\big]{\dummyst[] \in \dualspace{\R^{\genDim}} \suchthat \inprod{\dummyst[]}{\dir} \le 0 \quad \text{for all } \dir \in \Tcs_{\genSet}(\genVar)}.
	\end{align*}
\end{definition}

\begin{figure}[!h]
	\begin{tikzpicture}[
    scale=2.5,
    axis/.style={very thick, ->, >=stealth'},
    important line/.style={thick, color= blue},
    dashed line/.style={dashed, thin},
    pile/.style={thick, ->, >=stealth', shorten <=2pt, shorten
    >=2pt},
    every node/.style={color=black}
    ]
    \begin{scope}
    \coordinate (O) at (0,0);
      \coordinate (A) at (-1,1);
      \coordinate (C) at (1,1);
    \draw[axis] (-1,0)  -- (1.1,0) node(xline)[right]
        {$x_{1}$};
    \draw[axis] (0,-1) -- (0,1.3) node(yline)[above] {$x_{2}$};
   \draw[important line] (O) -- (A)
        node[right, text width=5em] {$x_{2}=|x_{1}|$};
   \draw[important line] (O)  -- (C);
    \shade[left color=blue, bottom color=blue, right color=blue, opacity=0.25]
   		(-1,1) -- (-0.9,1.0618) -- (-0.8,1.1176) -- (-0.7,1.1618)-- (-0.6, 1.1902) -- (-0.5,1.2) -- 
   		(-.4,1.1902) -- (-.3, 1.1618) -- (-.2, 1.1176) -- (-0.1,1.0618 )  -- (0,1) -- (0.1, 0.9382) --
   		(0.2, 0.8824) -- (0.3, 0.8382 ) -- (0.4, 0.8098 ) -- (0.5, 0.8 ) -- (0.6, 0.8098) -- (0.7, 0.8382) --
   		(0.8, 0.8824) -- (0.9, 0.9382) -- (1,1) -- (0,0) -- cycle;
    \shade[left color=red, bottom color=red, right color=red, opacity=0.35]
         (1,1) -- (0.9,1.0618) -- (0.8,1.1176) -- (0.7,1.1618)-- (0.6, 1.1902) -- (0.5,1.2) -- 
        (.4,1.1902) -- (.3, 1.1618) -- (0.2, 1.1176) -- (0.1,1.0618 )  -- (0,1) -- (-0.1, 0.9382) --
        (-0.2, 0.8824) -- (-0.3, 0.8382 ) -- (- 0.4, 0.8098 ) -- (- 0.5, 0.8 ) -- (- 0.6, 0.8098) -- 
        (-0.7, 0.8382) -- (-0.8, 0.8824) -- (-0.9, 0.9382) -- (-1,1) -- (0,0) -- cycle;                 
    \draw[very thick, color=orange] (O) -- (-1,-1);
    \draw[very thick, color=orange] (O) -- (1,-1);
    \shade[left color=orange, bottom color=orange, right color=orange, opacity=0.25]
                  (-1,-1) -- (0,0) -- (1,-1)-- cycle;
    \node[above] at (0,0) [anchor=north east] {$o$};
   \node[ color= blue, above] at (0.25,0.81){ $S_1$};
   \node[ color=red, above] at (0.25,0.4) {$T^{C}_{S_{1}}(o)$};   
   \node[color=orange, above] at (0.25,-0.8) {$N^{C}_{S_{1}}(o)$} ;       
   \end{scope}
    \begin{scope}[shift={(3,0)}]
    	\coordinate (O) at (0,0);
        \coordinate (A) at (0.5,1);
        \coordinate (B) at (-0.5,1);
       \draw[axis] (-1,0)  -- (1.1,0) node(xline)[right]{$x_{1}$};
       \draw[axis] (0,-1) -- (0,1.1) node(yline)[above] {$x_{2}$};
    \draw[important line] (O) -- (A);  
    \draw[important line] (O) -- (B) ;
    \node[above] at (0.6,.5) {$x_{2}=2|x_{1}|$};
    \shade[left color=blue, bottom color=blue, right color=blue, opacity=0.25]
    	(-0.5,1) -- (-1,1) -- (-1,-1) -- (1,-1) -- (1,1) -- (0.5,1) -- (0,0)-- cycle; 
    
    \draw[very thick, color=red] (O) -- (-0.48,-0.96);
    \draw[very thick, color=red] (O) -- (0.48,-0.96);
    \draw[very thick, dashed, color=red,  fill=red, opacity=0.25]
    	plot[smooth] 
        coordinates{(-0.48,-0.95)  (-0.2,-0.93) (0,-0.88) (0.2,-0.9) (0.48,-0.97) (0.38,-0.76) (0,0) };   
   \draw[very thick, color=orange] (O) -- (-0.9,0.45);
   \draw[very thick, color=orange] (O) -- (0.9,0.45);
   \shade[left color=orange, bottom color=orange, right color=orange, opacity=0.35]
   					(-0.90, 0.4500) -- (-0.85, 0.4718) -- (-0.80, 0.4934) -- (-0.75, 0.5147) -- (-0.70, 0.5355) -- (-0.65, 0.5557) -- (-0.60, 0.5750) -- (-0.55, 0.5934) -- (-0.50, 0.6107) -- (-0.45, 0.6268) -- (-0.40, 0.6415) -- (-0.35, 0.6548) -- (-0.30, 0.6665) -- (-0.25, 0.6766) -- (-0.20, 0.6849) -- (-0.15, 0.6915) -- (-0.10, 0.6962) -- (-0.05, 0.6990) -- (0.00, 0.7000) -- (0.05, 0.6990) -- (0.10, 0.6962) -- (0.15, 0.6915) -- (0.20, 0.6849) -- (0.25, 0.6766) -- (0.30, 0.6665) -- (0.35, 0.6548) -- (0.40, 0.6415) -- (0.45, 0.6268) -- (0.50, 0.6107) -- (0.55, 0.5934) -- (0.60, 0.5750) -- (0.65, 0.5557) -- (0.70, 0.5355) -- (0.75, 0.5147) -- (0.80, 0.4934) -- (0.85, 0.4718) -- (0.90, 0.4500) -- (0, 0) -- cycle;
   \node[color= red, above] at (0.21,-.94){$T_{S_{2}}^{C}(o)$};
   \node[color= blue,  above] at (.5,-0.5){ $S_2$};
   \node[color=orange, above] at (0.21,0.74){$N_{S_{2}}^{C}(o)$};
   \node at (0,0) [anchor=north east] {$o$};
   \end{scope}
   \end{tikzpicture}   
	\caption{Tangent cone to the sets \(S_{1}, S_{2}\) at \(o=(0,0)\)}
	\label{fig:tangentcone1}
\end{figure}
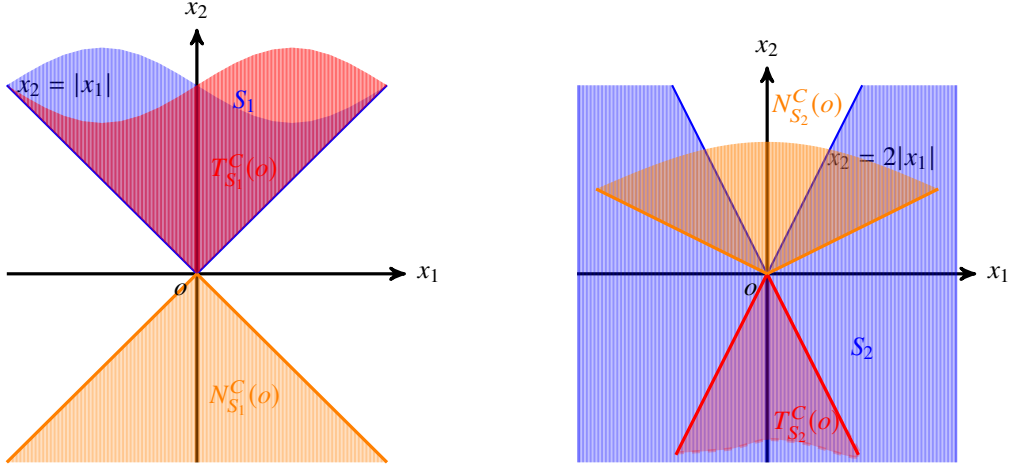
The Clarke tangent cone \(\Tcs_{\genSet}(\genVar)\) at \(\genVar\) is a closed convex set and the normal cone \(\Ncs_{\genSet}(\genVar)\) at \(\genVar\) is the polar of the tangent cone \(\Tcs_{\genSet}(\genVar)\). Intuitively, the tangent cone at \(\genVar\) to \(\genSet\) is the set of directions along which it is possible to `enter' \(\genSet\) from \(\genVar\), and the normal cone provides the set of directions along which one can most efficiently `exit' \(\genSet\) from \(\genVar\). The Clarke tangent cone and normal cone to the sets \(S_{1}=\{x\in\R^2| x_{2} \geq \abs{x_{1}}\}\) and \(S_{2}=\{x\in\R^2| x_{2} \leq 2\abs{x_{1}}\}\) at origin $o=(0,0)$ are shown in Figure \ref{fig:tangentcone1}.

	\section{Main result}
	\label{sec:main result}
		\revision{\begin{assumption}
			\label{assum:regularity of dynamics}
	We stipulate that the maps \( \dynamics \)'s in \ref{eq:sys:dynamics} are regular at every \( (\dummyst[], \dummyctrl[]) \in  \R^{\dimst \times \dimctrl}\).
\end{assumption}
}

Our main result is the following theorem:\footnote{In the sequel there will arise the need to take partial derivatives of multivariable functions relative to specific variables, and we take care to precisely indicate the variables with respect to which we take these partial derivatives by introducing dummy variables in the definitions. The adjoint variables (a.k.a.\ `multipliers') corresponding to the cost, the dynamics, the state-constraints, and the frequency constraints of the state and the control trajectories appear here, and we distinguish between them by introducing the different super-scripts of the single Greek letter \(\eta\). Various objects in frequency space are distinguished by a `hat'. While this mechanism leads to a multitude of sub-/super-scripts, we believe that it ensures much-needed transparency and tractability.}

\begin{theorem}
\label{th:nonsmooth pmp}
	\revision{Consider the problem \eqref{e:DTOCPNSD} along with its associated data, and suppose that Assumption \ref{assum:regularity of dynamics} holds. Let \(\seq{\optCtrl}{t}{0}{\hor-1}\) be a control trajectory that solves the optimal optimal control problem \eqref{e:DTOCPNSD}, and let \(\optState[] \Let \seq{\optState}{t}{0}{\hor}\) be the corresponding state trajectory.} Define the Hamiltonian
	\begin{equation}
	\label{e:hamiltonian}
	\begin{aligned}
		& \R \times \dualspace{\R^{\dimstconstraints}} \times \dualspace{\R^{\dimctrlconstraints}} \times \dualspace{\R^{\dimst}} \times \{0,\ldots,\hor-1\} \times \R^{\dimst} \times \R^\dimctrl \ni \left(\dummymulticost, \dummymultisfc, \dummymulticfc, \dummyadjoint,\dummytime,\dummyst[],\dummyctrl[] \right) \mapsto\\
		& \quad \hamiltonian[\dummymulticost, \dummymultisfc, \dummymulticfc](\dummyadjoint, \dummytime, \dummyst[], \dummyctrl[]) \Let \inprod{\dummyadjoint}{\dynamics[\dummytime] (\dummyst[], \dummyctrl[])} -\dummymulticost \cost[\dummytime] (\dummyst[], \dummyctrl[]) -\inprod{\dummymultisfc}{\sfc[\dummytime] (\dummyst[])} -\inprod{\dummymulticfc}{\cfc[\dummytime] (\dummyctrl[])} \in \R.
	\end{aligned}
	\end{equation}
	Then there exist \begin{itemize}
	\item an adjoint trajectory \(\seq{\adjoint}{t}{0}{\hor-1} \subset \dualspace{\R^\dimst}\),
	\item a sequence \(\seq{\multiplierstNcs}{t}{0}{\hor} \subset \dualspace{\R^\dimst}\), and
	\item a triplet \(\bigl(\multipliercost, \multipliersfc, \multipliercfc \bigr) \in \R \times \dualspace{\R^{\dimstconstraints}} \times \dualspace{\R^{\dimctrlconstraints}}\),
	\end{itemize} satisfying
	\begin{enumerate}[label={\textup{(C-\roman*)}}, leftmargin=*, align=left, widest=iii]
		\item \label{non-negativity} the non-negativity condition: \(\multipliercost \in \set{0, 1};\)
	\item \label{non-triviality} the non-triviality condition: the adjoint trajectory \(\seq{\adjoint}{t}{0}{\hor-1}\) and the triplet \(\bigl(\multipliercost, \multipliersfc, \multipliercfc \bigr)\) do not simultaneously vanish;
	\item \label{state and adjoint_equation} the state and adjoint dynamics:
		\begin{align*}
			& \optState[t+1]  =   \derivative{\hamiltonian (\adjoint, t, \optState, \optCtrl)}{\dummyadjoint} \quad \text{for } t = 0, \ldots, \hor-1, \\
			& \inprod{\adjoint[t-1]}{\dirst}  \geq \dirder[\dirst]{\hamiltonian (\adjoint,t,\cdot, \optCtrl)}(\optState)-\inprod{ \multiplierstNcs  }{\dirst}\\
			& \qquad\qquad\qquad\text{for all \(y \in \R^{\dimst}\) and some \( \multiplierstNcs \in \Ncs_{\admst}(\optState)\),  for \( t=1, \ldots, \hor-1\)};
		\end{align*}	  
 	\item \label{transversality} the transversality conditions:
		\begin{align*}
			& \dirder[\dirst]{\hamiltonian (\adjoint[0], 0, \cdot, \optCtrl[0])}(\optState[0]) - \inprod{\multiplierstNcs[0]}{\dirst} \leq 0 \\
			  & \qquad\qquad\qquad\text{ for all } y \in \R^{\dimst} \text{ and some }\multiplierstNcs[0] \in \Ncs_{\admst[0]}(\optState[0]), \text{ and } \\
	   	     & \adjoint[\hor-1] = -\multipliercost \derivative{\cost[\hor]}{\dummyst[]}(\optState[\hor]) -\biggl(\derivative{\sfc[\hor]}{\dummyst[]} (\optState[])\biggr) \transpose \multipliersfc - \multiplierstNcs[\hor]\\
			 & \qquad\qquad\qquad\text{for some \(\multiplierstNcs[\hor] \in \Ncs_{\admst[\hor]}(\optState[\hor])\)};
	   \end{align*}
	\item \label{Hamiltonian Maximization} the Hamiltonian maximization condition:
		\begin{align*}
			\dirder[\dirctrl]{\hamiltonian (\adjoint, t, \optState, \cdot)}(\optCtrl) \leq 0\quad
		 \text{ for all }  \dirctrl \in \Tcs_{\admctrl}(\optCtrl) \;\text{ and for } t=0 \ldots, \hor-1;
	    \end{align*}	  
	\item \label{State Frequency Constraints} frequency constraints on the state trajectory \seq{\optState}{t}{0}{\hor}:
		\[
			\stconstraints(\optState[0], \ldots, \optState[\hor]) = 0;
		\]
	\item \label{Control Frequency Constraints} frequency constraints on the control action trajectory \seq{\optCtrl}{t}{0}{\hor-1}:
		\[
			\ctrlconstraints(\optCtrl[0], \ldots, \optCtrl[\hor-1]) = 0.
		\]
	\end{enumerate}
\end{theorem}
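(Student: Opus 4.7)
The plan is to recast \eqref{e:DTOCPNSD} as a finite-dimensional nonsmooth mathematical program on the augmented decision tuple $z \Let (\st[0], \ldots, \st[\hor], \ctrl[0], \ldots, \ctrl[\hor-1])$, and then to apply a directional Lagrange/Fritz-John multiplier rule in the Clarke calculus to extract the multipliers promised in conditions \ref{non-negativity}--\ref{Hamiltonian Maximization}. In the augmented variable the objective is $\sum_{t=0}^{\hor-1}\cost(\st,\ctrl)+\cost[\hor](\st[\hor])$; the equality constraints comprise (i) the dynamics \eqref{eq:sys_dynamics} written as $\st[t+1]-\dynamics(\st,\ctrl)=0$ for each $t$, (ii) the state frequency constraint $\stconstraints(\st[0],\ldots,\st[\hor])=0$, and (iii) the control frequency constraint $\ctrlconstraints(\ctrl[0],\ldots,\ctrl[\hor-1])=0$; the geometric constraints $\st\in\admst$, $\ctrl\in\admctrl$ are retained as set-membership conditions. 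Because Assumption \ref{assum:regularity of dynamics} forces the maps $\dynamics(\cdot,\dummyctrl[])$ and $\dynamics(\dummyst[],\cdot)$ to be regular, the directional derivatives $\dirder{\dynamics(\cdot,\optCtrl)}(\optState)$ and $\dirder{\dynamics(\optState,\cdot)}(\optCtrl)$ along arbitrary perturbation directions $(\dirst,\dirctrl)$ exist, coincide with the corresponding generalized directional derivatives, and serve as a bona fide linearization of the dynamics at the candidate optimum $(\optState[],\optCtrl[])$. This is the bridge that keeps the final statement a clean directional inequality rather than a set-valued inclusion.

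Next, I would set up the linearized feasibility cone at $(\optState[],\optCtrl[])$ consisting of those $(\dirst[0],\dirctrl[0],\ldots,\dirst[\hor])$ that respect the linearized dynamics, annihilate the differentials of $\stconstraints$ and $\ctrlconstraints$, and lie in $\prod_t\Tcs_{\admst}(\optState)\times\prod_t\Tcs_{\admctrl}(\optCtrl)$. Optimality forces empty intersection of this cone with the strict descent cone of the augmented cost, and a Hahn-Banach separation in the augmented space then delivers a non-zero tuple $\bigl(\multipliercost,\multipliersfc,\multipliercfc,\{\adjoint\}_{t=0}^{\hor-1},\{\multiplierstNcs\}_{t=0}^{\hor}\bigr)$ in $\R\times\dualspace{\R^{\dimstconstraints}}\times\dualspace{\R^{\dimctrlconstraints}}\times\prod_t\dualspace{\R^{\dimst}}\times\prod_t\dualspace{\R^{\dimst}}$, with each $\adjoint$ realized as the multiplier attached to the $t$-th dynamics equality and each $\multiplierstNcs\in\Ncs_{\admst}(\optState)$ realized as a polar element of the state tangent cone. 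Normalization produces $\multipliercost\in\set{0,1}$ (condition \ref{non-negativity}) and the impossibility of simultaneous vanishing is precisely \ref{non-triviality}. Isolating in the resulting variational inequality the coefficient of each $\dirst$ for $t=1,\ldots,\hor-1$ and performing a discrete summation-by-parts converts it into the backward adjoint inequality of \ref{state and adjoint_equation}; the coefficient of $\dirst[0]$ yields the initial transversality; the coefficient of $\dirst[\hor]$ collects the differentials of $\cost[\hor]$ and $\sfc[\hor]$, together with $\multiplierstNcs[\hor]$, into the terminal transversality formula of \ref{transversality}; and the coefficient of $\dirctrl$ yields the Hamiltonian inequality \ref{Hamiltonian Maximization} on $\Tcs_{\admctrl}(\optCtrl)$. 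Conditions \ref{State Frequency Constraints}--\ref{Control Frequency Constraints} are direct feasibility of $(\optState[],\optCtrl[])$.

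The principal obstacle lies in two places. The first is legitimizing the directional chain rule across the composition of the locally Lipschitz maps $\dynamics[0],\ldots,\dynamics[\hor-1]$ when one propagates a perturbation of $\ctrl[s]$ into later state perturbations $\dirst[t]$, $t>s$. For arbitrary locally Lipschitz $\dynamics$ this chain rule is set-valued and must pass through $\ggrad{\dynamics}{(\optState,\optCtrl)}$; regularity rescues the single-valued directional behaviour and lets the separation argument output a single $\adjoint$ at each time rather than a selection from a generalized Jacobian. The second is the proper bookkeeping of the stacked frequency-constraint differentials $\bigl(\derivative{\sfc}{\dummyst[]}(\optState)\bigr)\transpose\multipliersfc$ and $\bigl(\derivative{\cfc}{\dummyctrl[]}(\optCtrl)\bigr)\transpose\multipliercfc$, which enter the adjoint recursion and the Hamiltonian maximization respectively through the structure of $\stconstraints$ and $\ctrlconstraints$ as sums of per-stage maps; the sum structure isolated in \ref{eq:sys:SFC}--\ref{eq:sys:CFC} is what permits these terms to be absorbed neatly into the Hamiltonian \eqref{e:hamiltonian} time-stage by time-stage. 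Once these two points are settled, assembling the Hamiltonian form is a routine re-indexing and the remaining items of the theorem follow.
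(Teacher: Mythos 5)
Your first step (lifting \eqref{e:DTOCPNSD} to a static nonsmooth program in the stacked variable) coincides with the paper's Step~I, and your final bookkeeping (testing the resulting variational inequality against directions supported on a single time slot to read off the adjoint recursion, the transversality conditions, and the Hamiltonian condition) matches the paper's Step~III. The divergence, and the gap, is in the middle. The paper does not run a primal separation argument: it invokes Clarke's Fritz--John multiplier rule for Lipschitz data (Theorem \ref{Clarke's multiplier rule}) to obtain the stationarity inclusion \eqref{Opt_problem_necessary_condition} directly, then decomposes \(\Ncs_{\admProc}(\optProc)\) into a sum of per-stage normal cones, and only afterwards uses Assumption \ref{assum:regularity of dynamics} to replace the generalized gradient of \(\proc \mapsto \inprod{\Multiplierfd}{\fdy(\proc)}\) by directional derivatives. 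Your route instead tries to manufacture the multipliers by separating a ``linearized feasibility cone'' from a descent cone, and two steps of that route do not go through as stated. First, for a regular locally Lipschitz map the directional derivative \(\dir \mapsto \dirder{\dynamics(\cdot,\optCtrl)}(\optState)\) is positively homogeneous and componentwise sublinear but in general not linear; hence the set of directions satisfying your linearized dynamics is the graph of a genuinely nonlinear positively homogeneous map, which is a \emph{nonconvex} cone (for the absolute value at the origin the graph is a union of two rays), so Hahn--Banach separation cannot be applied to it. Second, even granting convexity, the assertion that optimality forces the linearized feasibility cone and the strict descent cone to be disjoint is precisely the hard step for equality constraints: it requires either a Lyusternik-type surjectivity/implicit-function argument to convert a linearized feasible descent direction into an actual feasible point, or a degenerate-case analysis producing \(\multipliercost = 0\); neither is available off the shelf for merely directionally differentiable Lipschitz maps, and you do not supply it.

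Two smaller points. The decomposition of the normal cone to \(\admProc\) into a sum of elements \(\multiplierstNcs \in \Ncs_{\admst}(\optState)\) and \(\multiplierctrlNcs \in \Ncs_{\admctrl}(\optCtrl)\) is not free: by Theorem \ref{th:Dual cone of intersection is union of dual cone} the dual of an intersection of convex cones is the \emph{closure} of the convex hull of the union of the duals, and the paper must first prove (via the block structure of the lifted sets and Theorem \ref{th:closure of convex hull}) that this convex hull is already closed before the sum representation can be written down; your proposal assumes it. Moreover, a separation argument of the kind you describe would at best give non-vanishing of the full dual tuple \emph{including} the normal-cone elements \(\multiplierstNcs\), which is strictly weaker than \ref{non-triviality}, where the \(\multiplierstNcs\) are excluded; the paper obtains the stronger statement because Clarke's rule keeps the normal-cone term separate from the multiplier vector \(\bigl(\multipliercost, \Multiplierfd, \multipliersfc, \multipliercfc\bigr)\). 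If you wish to keep your architecture, the cleanest repair is to do what the paper does: cite a nonsmooth multiplier rule for the lifted program, and spend your effort on the closedness of the convex hull of the normal cones and on the regularity-based passage from generalized gradients to directional derivatives.
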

A detailed proof of Theorem \ref{th:nonsmooth pmp} is postponed to the next section. In the remainder of the current section we shall \revision{briefly discuss the conditions \ref{non-negativity}--\ref{Control Frequency Constraints} in the forthcoming remarks and} examine several special cases of the main problem \eqref{e:DTOCPNSD}.
\begin{remark}
	\label{Remark:Abnormal Multiplier}
 	The scalar  \( \multipliercost\), called the abnormal multiplier, takes the value \(0 \) or \(1\).  If \( \multipliercost=0\), then  the \emph{extremal lift}  \( \left( \multipliercost,  \seq{\adjoint}{t}{0}{\hor}, \multipliersfc, \multipliercfc, \seq{\optState}{t}{0}{\hor}, \seq{\optCtrl}{t}{0}{\hor-1} \right) \) corresponding to an optimal pair \( \left(\seq{\optState}{t}{0}{\hor}, \seq{\optCtrl}{t}{0}{\hor-1}\right) \) is called an \emph{abnormal extremal}. If \( \multipliercost=1\), then the corresponding the extremal lift is called \emph{normal}. 
 \end{remark}
 \begin{remark}
 	\label{Remark:adjoint Equation}
  		The entries of the sequence \(\seq{\adjoint}{t}{0}{\hor-1}\) are called adjoint vectors or co-states; their evolution is governed by the adjoint dynamics \ref{state and adjoint_equation}, and the transversality conditions \ref{transversality} provide its boundary conditions. Due to the nonsmooth nature of the state dynamics, the adjoint recursion is an inclusion
		\[
			\adjoint[t-1] \in \set[\big]{\eta \in \R^{\dimst} \suchthat \inprod{\eta}{y} \ge \dirder[\dirst]{\hamiltonian (\adjoint,t,\cdot, \optCtrl)}(\optState)-\inprod{ \multiplierstNcs  }{\dirst} \text{ for all } y \in \R^{\dimst}}
		\]
		as opposed to an equation. In Corollary \ref{cor:smooth sys PMP} we shall observe that if the dynamics are continuously differentiable, then this inclusion turns into an equation, which is the standard adjoint equation in the classical Pontryagin maximum principle.
\end{remark}
\revision{\begin{remark}
In the transversality condition \ref{transversality} the terms \( \Ncs_{\admst[0]}(\optState[0])\) and \(\Ncs_{\admst[\hor]}(\optState[\hor]) \) are Clarke normal cones. However, there is a more general notion of a normal cone due to Mordukhovich called the \emph{basic} or \emph{limiting normal cone} \cite[Chapter 1]{Mordukhovich-volI}, and the approximate discrete-time PMP in \cite{Mordukhovich-volII} presents necessary conditions for optimality in terms of this limiting normal cone. Our approach of the proof uses the convexity of the Clarke normal cone in an essential way, while it is known that the limiting normal cone may fail to be convex; consequently, the result and its proof provided here does not carry over in an elegant fashion while involving the limiting normal cone.
\end{remark}}

\begin{remark}
	\label{Remark:freq Constraints terms}
	The sequence of multipliers \(\seq{\multiplierstNcs}{t}{0}{\hor}\) correspond to the pointwise state constraints.
	The definition \eqref{e:hamiltonian} of the Hamiltonian  features two new terms, \( \inprod{\dummymultisfc}{\sfc[\dummytime] \dummyst[]}\) and \(\inprod{\dummymulticfc}{\cfc[\dummytime] \dummyctrl[]}, \) compared to the standard definition (e.g., \cite{ref:Bol-75}). These terms account for the frequency constraints on the state trajectory and the control trajectory, respectively. One similar term also appeared in the Hamiltonian  in \cite{ref:ParCha-19}, where frequency constraints on the control trajectories were considered.	The vectors \(\multipliersfc, \multipliercfc \) are the multipliers corresponding to the frequency constraints on the states and the controls, respectively.
\end{remark}
\begin{remark}
	\label{Remark:H Maximization}
	  \revision{The condition \ref{Hamiltonian Maximization} states that at every \( t \) there exist a neighborhood of \( \optCtrl\) (say \(\ball{\optCtrl}\) ) such that
	  	\[
		  	\hamiltonian (\adjoint, t, \optState, \optCtrl + p) \leq \hamiltonian (\adjoint, t, \optState, \optCtrl ) \quad \text{ for all } p \in \ball{\optCtrl} \cap \Tcs_{\admctrl}(\optCtrl).
	  	\] 
	  The Clarke tangent cone \( \Tcs_{\admctrl}(\optCtrl) \) provides a convex conic approximation of the set \(\admctrl \) at the point \(\optCtrl\). (Here the term approximation of a set \(\admctrl \) at a point \(\optCtrl\) stands for a set of directions along which it is possible to enter the set \(\admctrl \) from the point \( \optCtrl \).) Then the condition \ref{Hamiltonian Maximization} implies that the value Hamiltonian at \(\optCtrl\) is greater than the value Hamiltonian at the points in the convex conic approximation of the set \(\admctrl \) at the point \(\optCtrl\) and that are close to \( \optCtrl \). Consequently, \ref{Hamiltonian Maximization} does not imply Hamiltonian maximization which is a well-known phenomenon in the discrete time optimal control literature \cite{Mor04,pshenichnyi71}. Although not entirely appropriate, we call this condition ``the Hamiltonian maximization'' to maintain a similarity with the continuous time PMP. Besides, under suitable further assumptions on the admissible control action set, \ref{Hamiltonian Maximization} implies maximization of the Hamiltonian. Indeed, if the admissible set \( \admctrl \) is convex and compact, then \(\admctrl \subset \Tcs_{\admctrl}(\optCtrl) \), and the condition \ref{Hamiltonian Maximization} simplifies to
		\begin{align*}
			& \text{at every \(t\) there exist a neighborhood of \( \optCtrl\), i.e., \(\ball{\optCtrl} \), such that}\\
			& \hamiltonian (\adjoint, t, \optState, \optCtrl + p) \leq \hamiltonian (\adjoint, t, \optState, \optCtrl ) \quad \text{ for all } p \in \ball{\optCtrl} \cap \admctrl.
		\end{align*}
	  In other words, \( \optCtrl \) is a local maximizer of the Hamiltonian at time \(t\). Therefore, if we assume that at each \( t \) the Hamiltonian \ref{e:hamiltonian} is concave with respect to the control variable then the condition \ref{Hamiltonian Maximization} implies the local maximization of the Hamiltonian at each instant along the optimal trajectories.}
\end{remark}

\begin{corollary}
	Suppose that \( \bigl( \multipliercost,  \seq{\adjoint}{t}{0}{\hor}, \multipliersfc, \multipliercfc, \seq{\optState}{t}{0}{\hor}, \seq{\optCtrl}{t}{0}{\hor-1} \bigr) \) is an extremal of \eqref{e:DTOCPNSD} and let \( \hamiltonian \) be the Hamiltonian defined in \eqref{e:hamiltonian}.
\begin{enumerate}[label={\textup{(\roman*)}}, leftmargin=*, align=left, widest=iii]
	\item \label{smooth state and adjoint_equation} 
	If the dynamics \( \dynamics[i]\) in \ref{eq:sys:dynamics} for every \( i \in \set{1,\ldots, \hor-1}\) are smooth with respect to the state variable \( \dummyst[] \) at \(\optState[i]\), then the adjoint dynamics in  \ref{state and adjoint_equation} of Theorem \ref{th:nonsmooth pmp} at \( t = i\) strengthens to the recursion
	\begin{align*}
		 \adjoint[i-1]  = \derivative{\hamiltonian (\adjoint[i], i, \optState[i], \optCtrl[i])}{\dummyst[]}- \multiplierstNcs[i]  \quad \text{for some }  \multiplierstNcs[i] \in \Ncs_{\admst[i]}(\optState[i]).
	\end{align*}	
	\item \label{smooth transversality} 
		If the dynamics \(\dynamics[0] \) in \ref{eq:sys:dynamics} is smooth with respect to the state variable \( \dummyst[] \) at \(\optState[0]\), then the transversality condition \ref{transversality} of Theorem \ref{th:nonsmooth pmp} strengthens to:
	\begin{align*}
		& \derivative{\hamiltonian (\adjoint[0], 0, \optState[0], \optCtrl[0])}{\dummyst[]} = \multiplierstNcs[0]\quad \text{for some }  \multiplierstNcs[0] \in \Ncs_{\admst[0]}(\optState[0]),\\
		& \adjoint[\hor-1] = -\multipliercost \derivative{\cost[\hor]}{\dummyst[]}(\optState[\hor]) -\biggl(\derivative{\sfc[\hor]}{\dummyst[]} (\optState[])\biggr)\transpose \multipliersfc - \multiplierstNcs[\hor] \quad \text{ for some } \multiplierstNcs[\hor] \in  \Ncs_{\admst[\hor]}(\optState[\hor]).
	\end{align*}
	\item \label{eq:smooth Hamiltonian Maximization} 
	If the dynamics \( \dynamics[j]\) in \ref{eq:sys:dynamics} for every \( j \in \set{0, \ldots, \hor-1}\) are smooth with respect to the control variable \( \dummyctrl[] \) at \(\optCtrl[j] \), then the Hamiltonian maximization \ref{Hamiltonian Maximization} condition in Theorem \ref{th:nonsmooth pmp}  at \(t=j\) simplifies to:
	\begin{align*}
		\inprod{\derivative{\hamiltonian (\adjoint[j], j, \optState[j], \optCtrl[j])}{\dummyctrl[]}}{\dirctrl} \leq 0 
		\quad \text{ for all } \dirctrl \in \Tcs_{\admctrl[j]}(\optCtrl[j]).
	\end{align*}

\end{enumerate}
\end{corollary}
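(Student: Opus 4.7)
The plan is to exploit a single pointwise-smoothness observation at each of the three locations in Theorem~\ref{th:nonsmooth pmp}: if a real-valued map $h$ on $\R^n$ is continuously differentiable at $x$, then by Definition~\ref{def:directional derivative} one has $\dirder[y]{h}(x) = \inprod{\derivative{h}{\dummyst[]}(x)}{y}$ for every direction $y$. Since the Hamiltonian in \eqref{e:hamiltonian} is a sum of the form $\inprod{\dummyadjoint}{\dynamics[t](\dummyst[], \dummyctrl[])} - \dummymulticost \cost[t](\dummyst[], \dummyctrl[]) - \inprod{\dummymultisfc}{\sfc[t](\dummyst[])} - \inprod{\dummymulticfc}{\cfc[t](\dummyctrl[])}$ and the three latter terms are already $C^1$ by \ref{eq:sys:cost}, \ref{eq:sys:SFC} and \ref{eq:sys:CFC}, smoothness of $\dynamics[i]$ in $\dummyst[]$ at $\optState[i]$ (respectively, of $\dynamics[j]$ in $\dummyctrl[]$ at $\optCtrl[j]$) translates directly into smoothness of $\hamiltonian$ in the same variable at the optimal point, and therefore reduces the generalized directional derivative there to an ordinary gradient inner product.

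Having this, I would feed the linearized directional derivatives into \ref{state and adjoint_equation}, \ref{transversality} and \ref{Hamiltonian Maximization} one at a time. For item \ref{smooth state and adjoint_equation}, the adjoint inequality at $t = i$ becomes $\inprod{\adjoint[i-1] - \derivative{\hamiltonian(\adjoint[i], i, \optState[i], \optCtrl[i])}{\dummyst[]} + \multiplierstNcs[i]}{\dirst} \ge 0$ for every $\dirst \in \R^{\dimst}$; replacing $\dirst$ by $-\dirst$ forces the bracketed vector to vanish and yields the claimed recursion, the multiplier $\multiplierstNcs[i] \in \Ncs_{\admst[i]}(\optState[i])$ being inherited unchanged from Theorem~\ref{th:nonsmooth pmp}. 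For item \ref{smooth transversality}, the same replacement-by-negative trick converts the initial-time inequality into the equation $\derivative{\hamiltonian(\adjoint[0], 0, \optState[0], \optCtrl[0])}{\dummyst[]} = \multiplierstNcs[0]$, while the terminal transversality in Theorem~\ref{th:nonsmooth pmp} is already an equation and is transcribed verbatim. For item \ref{eq:smooth Hamiltonian Maximization}, the directional derivative in \ref{Hamiltonian Maximization} collapses to $\inprod{\derivative{\hamiltonian(\adjoint[j], j, \optState[j], \optCtrl[j])}{\dummyctrl[]}}{\dirctrl}$, and the inequality on $\Tcs_{\admctrl[j]}(\optCtrl[j])$ carries over as stated.

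I do not anticipate any substantive obstacle; the whole corollary is a routine specialization obtained by upgrading a single nonsmooth term in an otherwise $C^1$ sum. The only points requiring minor care are the sign of the Clarke-normal contribution on the right-hand side of the adjoint inequality (so that it matches the sign appearing in the resulting equation) and the observation that the multipliers $\multiplierstNcs[i], \multiplierstNcs[0], \multiplierstNcs[\hor]$ are supplied directly by Theorem~\ref{th:nonsmooth pmp}, so no fresh existence argument is required.
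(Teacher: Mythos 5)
Your proposal is correct and follows essentially the same route as the paper's own proof: smoothness of the dynamics in the relevant variable makes the Hamiltonian $C^1$ there (the other summands already being $C^1$), the directional derivatives in \ref{state and adjoint_equation}, \ref{transversality} and \ref{Hamiltonian Maximization} collapse to gradient inner products, and the substitution $\dirst \mapsto -\dirst$ upgrades the inequalities over all of $\R^{\dimst}$ to equations. The sign bookkeeping and the inheritance of the multipliers $\multiplierstNcs[i]$ from Theorem~\ref{th:nonsmooth pmp} are handled exactly as in the paper.
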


\begin{proof}
	\begin{enumerate}[label={\textup{(\roman*)}}, leftmargin=*, align=left, widest=iii]
	\item For \( i \in \set{1, \ldots, \hor-1} \) if the dynamics \ref{eq:sys:dynamics} is smooth with respect to \(\dummyst[]\) at \(\optState[i]\) then Hamiltonian \( \hamiltonian\) in \eqref{e:hamiltonian} is smooth with respect to \(\dummyst[]\) at \(\optState[i]\).  Since the Hamiltonian is smooth at \( \optState[i]\), the directional derivative of \(\hamiltonian (\adjoint[i], i, \cdot, \ctrl[i])\) at \( \optState[i] \) in any direction \(\dirst \in \R^\dimst\) is the inner product of the gradient of 
	\( \hamiltonian (\adjoint[i], i, \cdot, \ctrl[i]) \) at \( \optState[i] \) and \( \dirst\). In other words, referring to the definition of the Hamiltonian \(\hamiltonian\) in \eqref{e:hamiltonian} and the directional derivative in \eqref{eq:directional derivative},
	\begin{equation*}
	\begin{aligned}
 		\dirder[\dirst]{\hamiltonian (\adjoint[i], i, \cdot,\optCtrl[i])}(\optState[i])=\inprod{\derivative{\hamiltonian (\adjoint[i], i, \optState[i], \optCtrl[i])}{\dummyst[]}}{\dirst}.
 	\end{aligned}
 	\end{equation*}	
 	The condition \ref{state and adjoint_equation} of Theorem \ref{th:nonsmooth pmp} on adjoint dynamics gives us
 		\[
 			\inprod{\adjoint[i-1]	-\derivative{\hamiltonian (\adjoint[i], i, \optState[i], \optCtrl[i])}{\dummyst[]}+ \multiplierstNcs[i]  }{\dirst}\geq  0 \quad	\text{ for all } \dirst \in \R^\dimst.			   
 		\]
 		Since the inner product is linear in \( \dirst \) and the preceding inequality is true for all \( \dirst \in \R^\dimst \), we may replace \(\dirst\) with \(-\dirst \) in the above inequality, leading to
 		\[
 			\adjoint[i-1]-\derivative{\hamiltonian (\adjoint[i], i, \optState[i], \optCtrl[i])}{\dummyst[]}+ \multiplierstNcs[i]=0
 		\]
		as desired.
 	\item For \(t=0\) if  the dynamics \ref{eq:sys:dynamics} is smooth with respect to state variable \( \dummyst[]\) at \(\optState[0]\) then the Hamiltonian \(\hamiltonian\) in \eqref{e:hamiltonian} is continuously differentiable with respect to \( \dummyst[]\) at \(\optState[0]\). Therefore, the directional derivative of Hamiltonian \(\hamiltonian (\adjoint[0],0,\cdot,\ctrl[0])\) at \( \optState[0] \) in any direction \(\dirst \in \R^\dimst\) is
	\begin{equation*}
	\begin{aligned}	
	\dirder[\dirst]{\hamiltonian (\adjoint[0], 0, \cdot, \optCtrl[0])}(\optState[0]) & =               \inprod{\derivative{\hamiltonian (\adjoint[0], 0, \optState[0], \optCtrl[0])}{\dummyst[]}}{\dirst}.
	\end{aligned}
	\end{equation*}
	Then the transversality condition \ref{transversality} of Theorem \ref{th:nonsmooth pmp} and the linearity of the above inner product with respect to \( \dirst \) leads to
	\[
			\derivative{\hamiltonian (\adjoint[0], 0, \optState[0], \optCtrl[0])}{\dummyst[]} = \multiplierstNcs[0] .
	\]
	\item For \( j \in \set{ 0, \ldots, \hor-1}\) if the dynamics \( \dynamics[j]\) in \ref{eq:sys:dynamics} is smooth with respect to the control variable \( \dummyctrl[]\) at \( \optCtrl[j]\), then we can write the directional derivative of the Hamiltonian \( \hamiltonian (\adjoint[j], j, \optState[j], \cdot) \) with respect to control variable \( \dummyctrl[]\) at \( \optCtrl[j]\) along the direction \( \dirctrl \) as the inner product
	\[
		\dirder[\dirctrl]{\hamiltonian (\adjoint[j], j, \optState[j], \cdot)}(\optCtrl[j]) = \inprod{\derivative{\hamiltonian (\adjoint[j], j, \optState[j], \optCtrl[j])}{\dummyctrl[]}}{\dirctrl}.
	\]
	Then for \( t= j\) the Hamiltonian maximization condition \ref{Hamiltonian Maximization} of Theorem \ref{th:nonsmooth pmp} specializes to
	\begin{equation*}
	\inprod{\derivative{\hamiltonian (\adjoint[j], j, \optState[j], \optCtrl[j])}{\dummyctrl[]}}{\dirctrl} \leq 0. \qedhere
	\end{equation*}
	\end{enumerate}
\end{proof}

\begin{remark}

\label{cor:smooth sys PMP}
		Suppose that in \eqref{e:DTOCPNSD} we replace the Lipschitz continuous dynamics hypothesis \ref{eq:sys:dynamics} of \eqref{eq:sys_dynamics} with each
		\[
		\R^\dimst \times \R^\dimctrl \ni (\dummyst[], \dummyctrl[]) \mapsto \dynamics (\dummyst[], \dummyctrl[]) \in \R ^\dimst \quad \text{ for } t = 0, \ldots, \hor-1,
		\]
		being continuously differentiable, then the Hamiltonian as defined in \eqref{e:hamiltonian} is  continuously differentiable in \( \dummyst[]\) and \(\dummyctrl[] \). In this setting, the conditions \ref{state and adjoint_equation}, \ref{transversality} and \ref{Hamiltonian Maximization} of Theorem 
		\ref{th:nonsmooth pmp} can be strengthened to \ref{smooth state and adjoint_equation_2}, \ref{smooth transversality_2} and \ref{smooth Hamiltonian Maximization} respectively, given below:
		\begin{enumerate}[label={\textup{(\roman*\(^{\ast}\))}}, leftmargin=*, align=left, widest=iii]
			\setcounter{enumi}{2}
			\item \label{smooth state and adjoint_equation_2} state and adjoint dynamics:
			\begin{align*}
				& \optState[t+1] =   \derivative{\hamiltonian (\adjoint, t, \optState, \optCtrl)}{\dummyadjoint} \quad \text{ for }t = 0, \ldots, \hor-1,
				\\ 
				& \adjoint[t-1]  = \derivative{\hamiltonian (\adjoint, t, \optState, \optCtrl)}{\dummyst[]}- \multiplierstNcs \\  
				&\quad \quad \quad \text{for some }  \multiplierstNcs \in \Ncs_{\admst}(\optState) \text{ and for }t=1 \ldots, \hor-1;
			\end{align*}		
			\item \label{smooth transversality_2} transversality conditions:
			\begin{align*}
				& \derivative{\hamiltonian (\adjoint[0], 0, \optState[0], \optCtrl[0])}{\dummyst[]} = \multiplierstNcs[0]\quad \text{for all } \dirst \in \R^\dimst \text{ and } \text{for some }  \multiplierstNcs[0] \in \Ncs_{\admst[0]}(\optState[0]),\\
				& \adjoint[\hor-1] = -\multipliercost \derivative{\cost[\hor]}{\dummyst[]}(\optState[\hor]) -\biggl(\derivative{\sfc[\hor]}{\dummyst[]} (\optState[])\biggr)\transpose \multipliersfc - \multiplierstNcs[\hor] \quad \text{ for some } \multiplierstNcs[\hor] \in  \Ncs_{\admst[\hor]}(\optState[\hor]);
			\end{align*}
			
			\item \label{smooth Hamiltonian Maximization} Hamiltonian maximization condition:
			\begin{align*}
				\inprod{\derivative{\hamiltonian (\adjoint, t, \optState, \optCtrl)}{\dummyctrl[]}}{\dirctrl} \leq 0 
				\quad \text{ for all } \dirctrl \in \Tcs_{\admctrl}(\optCtrl)  \text{ and for }t=0 \ldots, \hor-1.
			\end{align*}
		\end{enumerate}

\end{remark}
We have  a second (immediate) special case, whose proof follows at once from the preceding discussion.

\begin{corollary}
\label{remark:linear sys}
	If the controlled system in \eqref{e:DTOCPNSD} is linear, that is, \eqref{eq:sys_dynamics} is replaced by
	\[
		\linearsystem \quad \text{ where } A_t\in \R^{\dimst\times\dimst} \text{ and } \quad B_t\in \R^{\dimst\times\dimctrl}  \quad \text{for } t=0,\ldots,\hor-1,
	\]
	then for the Hamiltonian defined in \eqref{e:hamiltonian}, the assertions of Theorem \ref{th:nonsmooth pmp} hold with the adjoint dynamics in \ref{state and adjoint_equation} given by
	\begin{equation}
	\label{eq:adjoint linearsys}
		\adjoint[t-1]=A_{t} \transpose \adjoint - \multipliercost \derivative{\cost}{\dummyst}(\optState,\optCtrl) - \sfc(\optState) \transpose \multipliersfc- \multiplierstNcs \quad \text{ for } t=0,\ldots,\hor-1.
	\end{equation}
	Moreover, if each \(\admctrl\) is non-empty, convex, and compact, then the condition \ref{Hamiltonian Maximization} in Theorem \ref{th:nonsmooth pmp} becomes the standard Hamiltonian maximization condition
	\[
		\hamiltonian (\adjoint, t, \optState, \optCtrl) = \max_{\dummyctrl[] \in \admctrl} \hamiltonian (\adjoint, t, \optState, \dummyctrl[]) \quad \text{ for } t=0,\ldots,\hor-1.		
	\]
\end{corollary}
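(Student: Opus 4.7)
The claim is an essentially immediate specialization of Theorem \ref{th:nonsmooth pmp}, refined through Remark \ref{cor:smooth sys PMP}, so my plan is chiefly a matter of bookkeeping and proceeds in three short steps.

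\textbf{Step 1.} The maps $(\dummyst[], \dummyctrl[]) \mapsto A_t \dummyst[] + B_t \dummyctrl[]$ are $C^\infty$, in particular continuously differentiable in both arguments, so the hypotheses of Remark \ref{cor:smooth sys PMP} are met. Consequently, the smooth refinements \ref{smooth state and adjoint_equation_2}--\ref{smooth Hamiltonian Maximization} of the adjoint recursion, transversality, and Hamiltonian maximization conditions of Theorem \ref{th:nonsmooth pmp} are all available.

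\textbf{Step 2.} Specializing the Hamiltonian \eqref{e:hamiltonian} to the linear dynamics yields
\begin{align*}
	\hamiltonian (\dummyadjoint, t, \dummyst[], \dummyctrl[]) = \inprod{\dummyadjoint}{A_t \dummyst[] + B_t \dummyctrl[]} - \dummymulticost \cost (\dummyst[], \dummyctrl[]) - \inprod{\dummymultisfc}{\sfc (\dummyst[])} - \inprod{\dummymulticfc}{\cfc (\dummyctrl[])},
\end{align*}
whose partial derivative with respect to $\dummyst[]$ evaluated along the optimal trajectory is
\begin{align*}
	\derivative{\hamiltonian (\adjoint, t, \optState, \optCtrl)}{\dummyst[]} = A_t \transpose \adjoint - \multipliercost \derivative{\cost}{\dummyst[]}(\optState, \optCtrl) - \biggl(\derivative{\sfc}{\dummyst[]}(\optState)\biggr) \transpose \multipliersfc.
\end{align*}
Plugging this identity into the smooth adjoint recursion of \ref{smooth state and adjoint_equation_2}, namely $\adjoint[t-1] = \derivative{\hamiltonian}{\dummyst[]} - \multiplierstNcs$, produces \eqref{eq:adjoint linearsys} directly.

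\textbf{Step 3.} Suppose now that each $\admctrl$ is non-empty, convex, and compact. Convexity of $\admctrl$ implies that the feasible direction $\dummyctrl[] - \optCtrl$ lies in $\Tcs_{\admctrl}(\optCtrl)$ for every $\dummyctrl[] \in \admctrl$. Applying condition \ref{smooth Hamiltonian Maximization} to these feasible directions yields $\inprod{\derivative{\hamiltonian (\adjoint, t, \optState, \optCtrl)}{\dummyctrl[]}}{\dummyctrl[] - \optCtrl} \le 0$ for all $\dummyctrl[] \in \admctrl$. Following the discussion in Remark \ref{Remark:H Maximization} and invoking concavity of $\dummyctrl[] \mapsto \hamiltonian (\adjoint, t, \optState, \dummyctrl[])$---which, under linear dynamics, reduces to concavity of the residual terms $-\multipliercost \cost(\optState, \cdot) - \inprod{\multipliercfc}{\cfc(\cdot)}$ since $\inprod{\adjoint}{B_t \dummyctrl[]}$ is already linear in $\dummyctrl[]$---this first-order condition upgrades to the global equality $\hamiltonian (\adjoint, t, \optState, \optCtrl) = \max_{\dummyctrl[] \in \admctrl} \hamiltonian (\adjoint, t, \optState, \dummyctrl[])$, with attainment guaranteed by compactness of $\admctrl$.

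The only conceptual subtlety is the concavity required to convert the tangent-cone inequality into genuine global maximization; this is already flagged in Remark \ref{Remark:H Maximization}, and the remaining steps are routine differentiation and substitution.
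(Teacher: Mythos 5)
Your proposal is correct and follows exactly the route the paper intends: the paper gives no separate argument for this corollary beyond the remark that it ``follows at once from the preceding discussion,'' and your Steps 1 and 2 are precisely that discussion made explicit \textemdash\ linear dynamics are continuously differentiable, so the refinements of Remark \ref{cor:smooth sys PMP} apply, and differentiating the Hamiltonian in \(\dummyst[]\) gives the adjoint recursion. (Incidentally, your form \(\bigl(\derivative{\sfc}{\dummyst[]}(\optState)\bigr)\transpose\multipliersfc\) is the correct reading; the paper's display \eqref{eq:adjoint linearsys} writes \(\sfc(\optState)\transpose\multipliersfc\), which is evidently a typo.) The substantive point is your Step 3. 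You are right that convexity of \(\admctrl\) places every feasible direction \(\dummyctrl[]-\optCtrl\) in \(\Tcs_{\admctrl}(\optCtrl)\), so \ref{smooth Hamiltonian Maximization} yields the variational inequality \(\inprod{\derivative{\hamiltonian(\adjoint,t,\optState,\optCtrl)}{\dummyctrl[]}}{\dummyctrl[]-\optCtrl}\le 0\) for all \(\dummyctrl[]\in\admctrl\); but upgrading this first-order condition to the global equality \(\hamiltonian(\adjoint,t,\optState,\optCtrl)=\max_{\dummyctrl[]\in\admctrl}\hamiltonian(\adjoint,t,\optState,\dummyctrl[])\) genuinely requires concavity of \(\dummyctrl[]\mapsto\hamiltonian(\adjoint,t,\optState,\dummyctrl[])\), i.e., convexity of \(\multipliercost\cost(\optState,\cdot)+\inprod{\multipliercfc}{\cfc(\cdot)}\), which is not among the corollary's stated hypotheses. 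This is a gap in the statement rather than in your argument: the paper's own Remark \ref{Remark:H Maximization} concedes that \ref{Hamiltonian Maximization} does not by itself imply Hamiltonian maximization in discrete time and that concavity must be assumed, so your explicit flagging of the concavity requirement makes your write-up more careful than the source; you might add that, under concavity, compactness of \(\admctrl\) is not even needed for attainment, since the gradient inequality already exhibits \(\optCtrl\) as the maximizer.
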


\revision{
	The following result addresses the optimal control problem \eqref{e:DTOCPNSD} under a different set of hypotheses than Assumption \ref{assum:regularity of dynamics}. Here we assume that 
	\begin{itemize}
		\item the dynamics are smooth and 
		\item the cost functions are regular but may be nonsmooth
	\end{itemize}
	while retaining the rest of the problem data. We observe that the necessary conditions for this modified problem are similar to the necessary conditions in Theorem \ref{th:nonsmooth pmp} except the transversality conditions; these necessary condition cater to, e.g., \(\ell_1\)-minimization problems that may be employed to enforce sparsity. The precise statement is as follows:

	\begin{theorem}
		\label{th2:ns-Cost_S-Sys}
		Consider the problem \eqref{e:DTOCPNSD} with the following modifications:
		\begin{enumerate}[label={\textup{(\alph*)}}, leftmargin=*, align=left]
			\item \label{th2:smooth:sysDyn} the functions \( \dynamics \) in \eqref{eq:sys_dynamics}  are continuously differentiable everywhere, and  
			\item \label{th2:nonsmooth:cost} the functions  \( \cost\) in \ref{eq:sys:cost} are regular at every \( (\dummyst[], \dummyctrl[]) \in  \R^{\dimst \times \dimctrl}\).\footnote{We emphasize that the assumption of continuous differentiability of the cost functions is being removed; consequently, \( \cost \) may fail to be differentiable at some \((\dummyst[], \dummyctrl[]) \in \R^{\dimst \times \dimctrl}\).}
		\end{enumerate}
		Let \(\seq{\optCtrl}{t}{0}{\hor-1}\) be a control trajectory that solves the optimal optimal control  problem \eqref{e:DTOCPNSD} with  the modifications \ref{th2:smooth:sysDyn} and \ref{th2:nonsmooth:cost}, and let \(\optState[] \Let \seq{\optState}{t}{0}{\hor}\) be the corresponding state trajectory. Then, for the Hamiltonian defined in \eqref{e:hamiltonian}, there exist
		\begin{itemize}
			\item an adjoint trajectory \(\seq{\adjoint}{t}{0}{\hor-1} \subset \dualspace{\R^\dimst}\),
			\item a sequence \(\seq{\multiplierstNcs}{t}{0}{\hor} \subset \dualspace{\R^\dimst}\), and
			\item a triplet \(\bigl(\multipliercost, \multipliersfc, \multipliercfc \bigr) \in \R \times \dualspace{\R^{\dimstconstraints}} \times \dualspace{\R^{\dimctrlconstraints}}\),
		\end{itemize} satisfying
		\begin{enumerate}[label={\textup{(\roman*)}}, leftmargin=*, align=left, widest=iii]
			\item \label{th2:non-negativity} the non-negativity condition \(\multipliercost \in \set{0, 1};\)
			\item \label{th2:non-triviality} the non-triviality condition the adjoint trajectory \(\seq{\adjoint}{t}{0}{\hor-1}\) and the triplet \(\bigl(\multipliercost, \multipliersfc, \multipliercfc \bigr)\) do not simultaneously vanish;
			\item \label{th2:state and adjoint_equation} the state and adjoint dynamics \ref{state and adjoint_equation};
			\item \label{th2:transversality} the transversality conditions
				\begin{align*}
					& \dirder[\dirst]{\hamiltonian (\adjoint[0], 0, \cdot, \optCtrl[0])}(\optState[0]) - \inprod{\multiplierstNcs[0]}{\dirst} \leq 0 \nonumber \\
					& \qquad\qquad\qquad\text{ for all } y \in \R^{\dimst} \text{ and some }\multiplierstNcs[0] \in \Ncs_{\admst[0]}(\optState[0]), \text{ and } \nonumber \\
					& \inprod{\adjoint[\hor-1]}{\dirst}  + \multipliercost \dirder[\dirst]{\cost[\hor]}(\optState[\hor]) + \inprod{\biggl(\derivative{\sfc[\hor]}{\dummyst[]} (\optState[])\biggr) \transpose \multipliersfc + \multiplierstNcs[\hor]}{\dirst} \geq 0 \nonumber \\
					& \qquad\qquad\qquad\text{ for all } y \in \R^{\dimst} \text{ and some } \multiplierstNcs[\hor] \in \Ncs_{\admst[\hor]}(\optState[\hor]) \nonumber;
				\end{align*}
			\item \label{th2:Hamiltonian Maximization} the Hamiltonian maximization condition \ref{Hamiltonian Maximization};
			\item \label{th2:State Frequency Constraints} frequency constraints on the state trajectory \seq{\optState}{t}{0}{\hor} \ref{State Frequency Constraints};
			\item \label{th2:Control Frequency Constraints} frequency constraints on the control action trajectory \seq{\optCtrl}{t}{0}{\hor-1} \ref{Control Frequency Constraints}.
		\end{enumerate}
	\end{theorem}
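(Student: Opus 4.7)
The plan is to reduce Theorem \ref{th2:ns-Cost_S-Sys} to Theorem \ref{th:nonsmooth pmp} by absorbing the (possibly nonsmooth) costs into the system dynamics via a state augmentation. Introduce an auxiliary scalar state \(z_t \in \R\) with \(z_0 = 0\) evolving by \(z_{t+1} = z_t + \cost(\st[t], \ctrl[t])\) for \(t = 0, \ldots, \hor-1\), and extend the horizon by one step to absorb the terminal cost similarly by setting \(\st[\hor+1] = \st[\hor]\) and \(z_{\hor+1} = z_\hor + \cost[\hor](\st[\hor])\) with a dummy singleton control action at \(t = \hor\). The resulting augmented dynamics on \(\R^{\dimst + 1}\), whose \(\st\)-update is \(\dynamics\) and whose \(z\)-update contains \(\cost\) additively, is regular at every point because \(\dynamics\) is smooth (hence regular), \(\cost\) and \(\cost[\hor]\) are regular by hypothesis, and a sum of a smooth function and a regular function is regular (a standard consequence of Clarke's calculus). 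Lift the pointwise state constraint sets to \(\admst \times \R\), take the new terminal cost to be the linear (hence smooth) projection onto the \(z\)-coordinate, set the augmented running costs to zero, and extend the state-frequency-constraint map by zero at the dummy step; the control constraints and both sets of frequency constraints involve only the original components and are left unchanged.

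With this construction, the augmented optimal control problem satisfies Assumption \ref{assum:regularity of dynamics} and carries a smooth terminal cost, so Theorem \ref{th:nonsmooth pmp} applies and yields an augmented adjoint trajectory with \(\st\)-component \(\adjoint[t]\) and \(z\)-component \(\xi_t\), together with the multipliers \(\bigl(\multipliercost, \multipliersfc, \multipliercfc\bigr)\) and pointwise-state multipliers whose \(\st\)-slot is \(\multiplierstNcs\). The augmented adjoint recursion in the \(z\)-component, noting that \(z\) enters the augmented dynamics linearly with unit coefficient and does not appear in any augmented cost or constraint map, immediately yields that \(\xi_t\) is constant in \(t\); the augmented terminal transversality then pins this constant to \(\xi_t \equiv -\multipliercost\). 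Substituting \(\xi_t = -\multipliercost\) into the \(\st\)-component of the augmented Hamiltonian collapses it to the original Hamiltonian \eqref{e:hamiltonian} modulo a \(z\)-dependent term that is inert under \(\st\)- and \(\ctrl\)-directional derivatives; this reproduces \ref{th2:state and adjoint_equation}, \ref{th2:Hamiltonian Maximization}, and the initial-time part of \ref{th2:transversality} \emph{verbatim}. The terminal transversality inequality of \ref{th2:transversality}, in its directional-derivative form, is produced by the adjoint inclusion of Theorem \ref{th:nonsmooth pmp} applied to the extra augmented step, whose nonsmoothness resides entirely in \(\cost[\hor]\); the directional derivative \(\dirder[\dirst]{\cost[\hor]}(\optState[\hor])\) emerges on the right-hand side exactly as claimed.

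The main obstacle will be careful bookkeeping through the augmentation, specifically three routine-but-delicate items: (i) verifying that a sum of a smooth and a regular function is regular, so that Assumption \ref{assum:regularity of dynamics} holds for the augmented system; (ii) showing that the Clarke normal cone to \(\admst \times \R\) factors as \(\Ncs_{\admst}(\optState[t]) \times \set{0}\), so that the \(z\)-component of the augmented pointwise-state multiplier vanishes and its \(\st\)-component coincides with the original \(\multiplierstNcs\); and (iii) identifying the directional derivative of the augmented dynamics at the dummy step along \((\dirst, 0)\) with \((\dirst, \dirder[\dirst]{\cost[\hor]}(\optState[\hor]))\) in the \(z\)-slot, so that the sign computation \(\xi_\hor = -\multipliercost\) produces exactly the terminal transversality inequality of Theorem \ref{th2:ns-Cost_S-Sys}. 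Once these identifications are carried out, conditions \ref{th2:non-negativity}, \ref{th2:non-triviality}, \ref{th2:State Frequency Constraints}, and \ref{th2:Control Frequency Constraints} transfer verbatim from the conclusion of Theorem \ref{th:nonsmooth pmp} applied to the augmented problem.
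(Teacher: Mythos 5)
Your proposal is correct, but it takes a genuinely different route from the paper. The paper does not reduce Theorem \ref{th2:ns-Cost_S-Sys} to Theorem \ref{th:nonsmooth pmp}; instead it re-runs the lift-to-static-optimization argument of \secref{sec:proof} with the roles of the smooth and the merely Lipschitz-regular ingredients swapped in the generalized-gradient sum rule (Lemma \ref{ggrad_of Lipschitz_+_C1}): the stationarity condition \eqref{eq:optimizationCondition} becomes \(0 \in \multipliercost \ggrad{\Cost}{\optProc}+ \derivative{\inprod{\Multiplierfd}{\fdy(\cdot)}}{\proc}(\optProc) + \bigl(\derivative{\Sfc}{\proc}(\optProc)\bigr)^{*} \multipliersfc + \bigl(\derivative{\Cfc}{\proc}(\optProc)\bigr)^{*} \multipliercfc + \Ncs_{\admProc}(\optProc)\), the regularity of the costs is used to convert \(\ggrad{\multipliercost\Cost}{\optProc}\) into directional-derivative inequalities, and the projection step then delivers the modified transversality conditions directly. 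Your Mayer-type state augmentation instead absorbs \(\cost\) into an auxiliary coordinate \(z\), invokes Theorem \ref{th:nonsmooth pmp} as a black box, and recovers the statement by identifying \(\xi_t \equiv -\multipliercost\); this works, and the three bookkeeping items you flag are all routine and resolve as you expect (a smooth-plus-regular sum is regular; \(\Ncs_{\admst \times \R} = \Ncs_{\admst} \times \{0\}\) by the product rule for Clarke normal cones; the dummy terminal step with terminal cost \((\dummyst[],z)\mapsto z\), trivial state constraint at time \(\hor+1\), and a singleton dummy control action reproduces the terminal inequality of \ref{th2:transversality} from the adjoint inclusion at \(t=\hor\), and also makes the augmented non-triviality condition collapse to \ref{th2:non-triviality} since \(\adjoint[\hor]^{\text{aug}}=(0,-\multipliercost)\)). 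What your approach buys is reuse of the already-proven theorem and a conceptual explanation of why nonsmooth costs and nonsmooth dynamics are interchangeable hypotheses; what it costs is the horizon extension and the care needed to verify that none of the frequency or pointwise constraints are disturbed by the augmentation. The paper's route is shorter because the entire static-optimization machinery is already set up and only one line of the Claim changes. One point worth making explicit if you write this up: the scalar one-sided directional derivative satisfies \(\dirder[\dirst]{(c f)}(x) = c\,\dirder[\dirst]{f}(x)\) for every real \(c\) (including \(c=-\multipliercost\le 0\)), which is what lets the \(z\)-adjoint coefficient pass through \(\dirder[\dirst]{\cost[\hor]}(\optState[\hor])\) with the correct sign in the terminal inequality.
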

	\begin{proof}
		A proof of the above theorem follows the steps of the proof of the Theorem \ref{th:nonsmooth pmp} given in the next section; we omit the details in the interest of brevity. The proof will follow the steps similar to the proof of  Theorem \ref{th:nonsmooth pmp} present in the next section. To summarize, we start with Step I in \secref{subsec:Step1}, and follow Step II in \secref{subsec:Step2} until the Claim. In the Claim itself, the nonsmooth cost and smooth dynamics are changed from \eqref{eq:optimizationCondition} into the following condition:
		\[
			0 \in \multipliercost \ggrad{\Cost}{\optProc}+  \derivative{\inprod{\Multiplierfd}{\fdy(\cdot)}}{\proc}(\optProc) + \biggl(\derivative{\Sfc}{\proc}(\optProc)\biggr) \transpose \multipliersfc + \biggl(\derivative{\Cfc}{\proc}(\optProc)\biggr) \transpose \multipliercfc + \Ncs_{\admProc}(\optProc).
		\]
		For the above new condition, we define the function \(h(z) \Let \sum_{t=0}^{\hor-1}h_{t}(z) = \sum_{t=0}^{\hor-1}\multipliercost \cost(\st, \ctrl) \) and continue Step II, followed by Step III in \secref{subsec:Step3} to arrive at the end result.
	\end{proof}
}

	\section{Proof of Theorem \lowercase{\ref{th:nonsmooth pmp}}}
	\label{sec:proof}
		In this section we provide a detailed proof of the main result Theorem \ref{th:nonsmooth pmp}. Flowchart \ref{flowchart:proofsketch} gives an idea of a proof, and we elaborate the main steps below:

\textbf{Sketch of the proof:} We proceed as per the three steps below:
\begin{enumerate}[label={\textup{Step (\Roman*)}}, leftmargin=*, align=left, widest=iii]
 \item \label{step1:Constructing optimization problem}  
       Our optimal control problem is lifted to an equivalent optimization problem in a suitable high-dimensional product space.
\item \label{step2:First order necessary condition} 
       First order necessary conditions for the optimization problem in \ref{step1:Constructing optimization problem} are obtained using Clarke's necessary condition for non-smooth optimization problems.
\item  \label{step3: projection to original problem}
       The necessary conditions obtained in \ref{step2:First order necessary condition} are projected to appropriate factors to arrive at necessary condition of the original control problem. 
\end{enumerate}
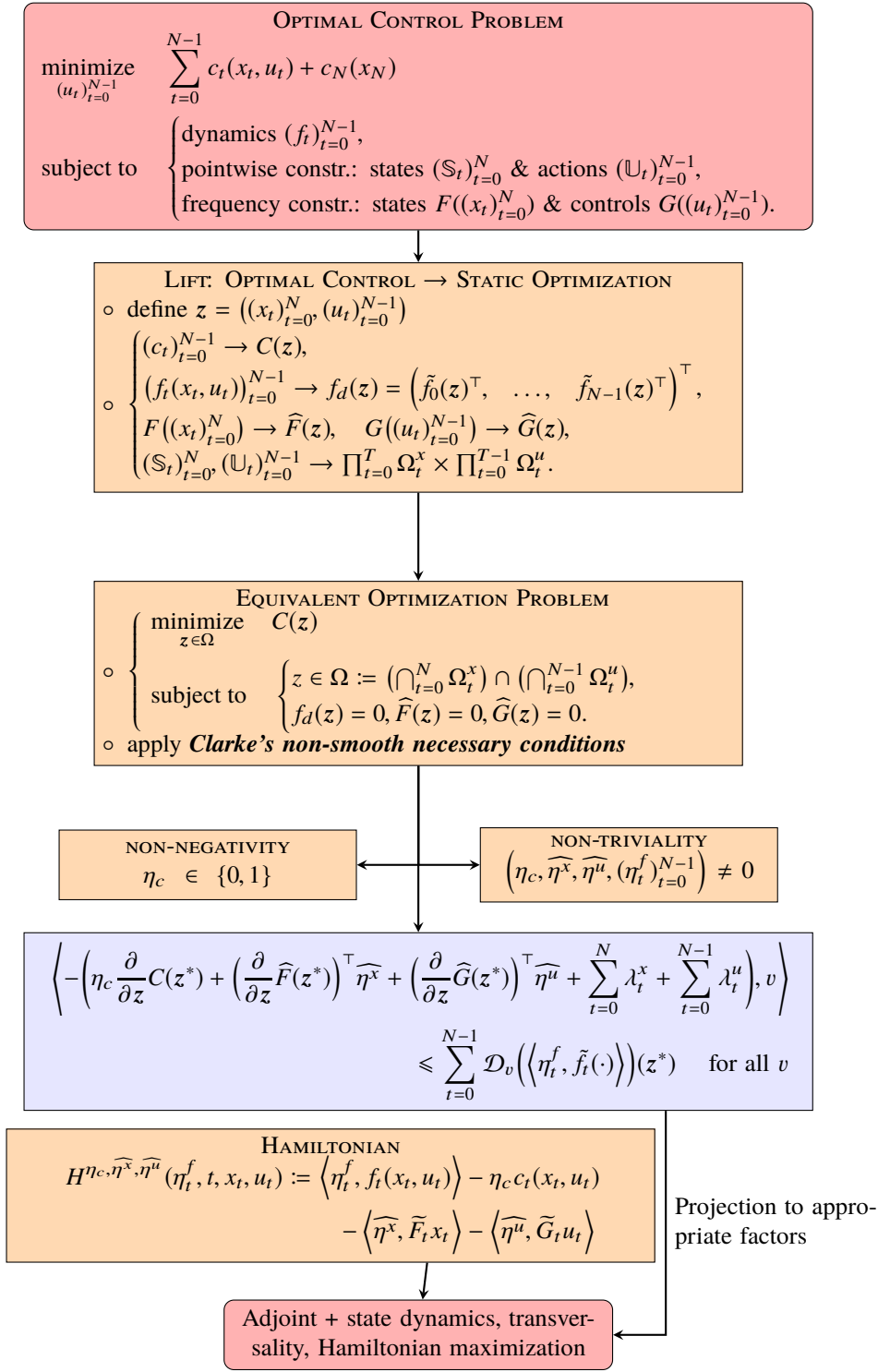
\begin{figure}
		\begin{tikzpicture}[node distance=2cm, auto]
	\node (optimal control problem) [startstop] {$  \textsc{ Optimal Control Problem } $ \\
								$\begin{aligned}                                
                           		& \minimize_{\seq{\ctrl}{t}{0}{\hor-1}} && \sum_{t=0}^{\hor-1} \cost(\st, \ctrl)+\cost[\hor](\st[\hor])\\
                           		& \sbjto &&
                           		\begin{cases}
                           			\text{dynamics } \seq{\dynamics}{t}{0}{\hor-1},\\
                           			\text{pointwise constr.: states \seq{\admst}{t}{0}{\hor} \& actions \seq{\admctrl}{t}{0}{\hor-1}},\\
                           			\text{frequency constr.: states \stconstraints(\seq{\st}{t}{0}{\hor}) \& controls \ctrlconstraints(\seq{\ctrl}{t}{0}{\hor-1})}.
                           		\end{cases}
                           	\end{aligned}
                          $};
	\node (lift) [process, below of=optimal control problem, yshift=-1.75cm] 
		{\textsc{Lift: Optimal Control \(\to\) Static Optimization}\\
			\begin{itemize}[label=\(\circ\), leftmargin=*]
				\item define $ \proc = \bigl( \seq{\st}{t}{0}{\hor},\seq{\ctrl}{t}{0}{\hor-1} \bigr) $\\
                \item $\begin{cases}
            	\seq{\cost}{t}{0}{\hor-1} \to \Cost(\proc), \\
            	\bigl(\dynamics(\st,\ctrl)\bigr)_{t=0}^{\hor-1} \to \fdy(\proc)=\pmat{\dylifted[0](\proc) \transpose, & \ldots, & \dylifted[\hor-1](\proc) \transpose} \transpose,\\
            	\stconstraints\bigl(\seq{\st}{t}{0}{\hor}\bigr) \to \Sfc(\proc),\quad\ctrlconstraints\bigl(\seq{\ctrl}{t}{0}{\hor-1}\bigr) \to \Cfc (\proc),\\
				(\admst)_{t=0}^{\hor}, (\admctrl)_{t=0}^{\hor-1} \to \prod_{t=0}^{T}\admstHD_{t}\times\prod_{t=0}^{T-1}\admctrlHD_{t}.
				\end{cases}$ %
            	
            	\
        \end{itemize}
        };
     
    \node(optimization problem)[process, below of=lift, yshift=-2.25cm]
    	{\textsc{ Equivalent Optimization Problem\\}
    	\begin{itemize}[label=\(\circ\), leftmargin=*]
			\item $\left\{\begin{aligned}
    			&\minimize_{\proc \in \admProc}  && \Cost(\proc) \\
    			&\sbjto  &&\begin{cases}
    				z\in \admProc  \Let \bigl(\bigcap_{t=0}^{\hor} \admstHD_{t} \bigr) \cap \bigl(\bigcap_{t=0}^{\hor-1} \admctrlHD_{t} \bigr),\\
    				\fdy (\proc) = 0, \Sfc (\proc) = 0, \Cfc (\proc) = 0.\\
    		\end{cases}
    	\end{aligned}\right.$\\
		\item apply \embf{Clarke's non-smooth necessary conditions}  \\
    	 \end{itemize}
    	};
\node(con1)[substep, below of=optimization problem, xshift=-3cm, yshift=-0.75cm]{\textsc{non-negativity}\\$\multipliercost \in \{0,1 \}$};
\node(con2)[substep, below of=optimization problem, xshift=3cm, yshift=-0.75cm]{\textsc{non-triviality}\\$\Bigl(\multipliercost,\multipliersfc,\multipliercfc,\seq{\adjoint}{t}{0}{\hor-1}\Bigr)\neq0$};
\node(con3)[bigprocess, below of= optimization problem, yshift=-3cm]{ 
    		 $\begin{aligned}
    			\inprod{-\biggl(\multipliercost \derivative{\Cost}{\proc}(\optProc) + \Bigl(\derivative{\Sfc}{\proc}(\optProc)\Bigr) \transpose \multipliersfc +  \Bigl(\derivative{\Cfc}{\proc}(\optProc)\Bigr) \transpose \multipliercfc + \sum_{t=0}^{\hor}\MultiplierSt + \sum_{t=0}^{\hor-1}\MultiplierCon \biggr)}{\dir} \\ \leq  \sum_{t=0}^{\hor-1}\dirder{\Bigl(\inprod{\adjoint}{\dylifted(\cdot)}\Bigr)} (\optProc)
    			\quad \text{ for all } \dir 
    			\end{aligned}$
    	     	};
\node(Hamiltonian) [process,  left of=con3, yshift=-2.5cm, xshift=0.75cm]
    	{\textsc{Hamiltonian}
    	$\begin{aligned}
    	 \hamiltonian (\adjoint, t, \st, \ctrl) \Let \inprod{\adjoint}{\dynamics (\st, \ctrl)} &-\multipliercost \cost (\st, \ctrl)\\ -\inprod{\multipliersfc}{\sfc \st} &-\inprod{\multipliercfc}{\cfc \ctrl}
    	 \end{aligned}$
    	};

\node(end)[stop, below of=con3, yshift=-2.5cm]
	{Adjoint + state dynamics, transversality, Hamiltonian maximization};
\draw[arrow](optimal control problem) --(lift);
\draw[arrow](lift)--(optimization problem);
\draw[arrow](optimization problem)|-(con1);
\draw[arrow](optimization problem)|-(con2);
\draw[arrow](optimization problem)--(con3);
\draw[arrow](Hamiltonian.325)--(end);
\draw[arrow](con3.340)|-node [label, near start] {Projection to appropriate factors} (end);
\end{tikzpicture}
\caption{Flowchart of a proof}
\label{flowchart:proofsketch}
\end{figure}
	
\subsection{\ref{step1:Constructing optimization problem}} \textbf{Equivalent optimization problem:}
\label{subsec:Step1} 
The objective of this step is to transform the original optimal control problem \eqref{e:DTOCPNSD} to an equivalent optimization problem. The approach is to lift the optimal control problem to an appropriate high-dimensional product space. Here by "lift" we mean the concatenation of \( \hor+1 \) vectors (say \( \dummyst[0], \ldots, \dummyst[\hor]\)) from the space \(\R^\dimst\) corresponding to the states  and  \( \hor \) vectors (say \( \dummyctrl[0], \ldots, \dummyctrl[\hor-1]\)) from the space \(\R^\dimctrl\) corresponding to the control actions. Thus, every vector in the lifted high-dimensional space \(\R^\dimProc\) is of the form 
 \begin{equation}\label{eq:lifted vector}
 \proc \Let (\dummyst[0], \ldots, \dummyst[\hor], \dummyctrl[0], \ldots, \dummyctrl[\hor-1]) \in \R^{\dimProc},
 \end{equation} where \(\dimProc \Let \dimst(\hor +1)+\dimctrl\hor, \ \seq{\dummyst}{t}{0}{\hor}\subset \R^\dimst\) and \(\seq{\dummyctrl}{t}{0}{\hor-1}\subset \R^\dimctrl\).
 For brevity of notation, we define \(\proc \Let \bigl(\bar{\dummyst[]}, \bar{\dummyctrl[]}),\) where
\(
	\bar{\dummyst[]} \Let (\dummyst[0], \ldots, \dummyst[\hor]) \in \R^{\dimst (\hor+1)} \text{ and } \bar{\dummyctrl[]} \Let (\dummyctrl[0], \ldots, \dummyctrl[\hor-1]) \in \R^{\dimctrl \hor}.
\)

In order to extract a vector \(\dummyst \in \R^\dimst \) for some \( t \in \{ 0,\ldots,\hor \} \) and  \(\dummyctrl \in \R^\dimctrl\) for some \( t \in \{0,\ldots,\hor-1 \}\) from \(\proc\), we employ standard state projection maps and control projection maps defined by
\begin{equation}\label{eq:projection}
\begin{aligned}
	\R^{\dimProc} \ni \proc = (\bar{\dummyst[]}, \bar{\dummyctrl[]}) & \mapsto \stprojection (\proc) \Let \dummyst \in \R^{\dimst} \quad \text{for } t = 0, \ldots, \hor,\\
	\R^{\dimProc} \ni \proc =(\bar{\dummyst[]}, \bar{\dummyctrl[]}) & \mapsto \ctrlprojection (\proc) \Let \dummyctrl \in \R^{\dimctrl} \quad \text{for } t = 0, \dots, \hor-1,
\end{aligned}
\end{equation}
  In the above notation the superscript \(\st[] \) or \(\ctrl[]\) of the projection map \(\pi\) indicates the space of states \(\st[]\) or control action \(\ctrl[]\) respectively, and the subscript denotes the time instance.

Let us define the functions and the sets involved in the optimal control problem in the space \(\R^\dimProc\) to arrive at an equivalent optimization problem.
\begin{itemize}[label=\( \circ\), leftmargin=*, align=left]
\item \emph{The lift of the total cost:} Define the function 
		\begin{equation}\label{eq:CostHD}
			\R^{\dimProc} \ni \proc \mapsto \Cost(\proc) \Let \Cost(\bar{\dummyst[]}, \bar{\dummyctrl[]}) = \sum_{t=0}^{\hor-1} \cost(\dummyst, \dummyctrl)+  \cost[\hor](\dummyst[T]) \in \R.
		\end{equation}
\item \emph{The lift of the dynamics:} Define the function 
		\begin{align}\label{eq:process}
			\R^{\dimProc}\ni \proc \mapsto \fdy(\proc) \Let \pmat{\dylifted[0](\proc) \transpose, & \ldots, & \dylifted[\hor-1](\proc) \transpose} \transpose \in \R^{\dimst \hor},
		\end{align}
		where 
		\begin{align*} 
			\R^{\dimProc}\ni \proc \mapsto \dylifted (\proc)  \Let \dummyst[t+1]-\dynamics(\dummyst, \dummyctrl)\in \R^\dimst \text{ for each } t = 0, \ldots \hor-1.
		\end{align*}
		Clearly, if a vector \( \proc=(\dummyst[0], \ldots, \dummyst[\hor], \dummyctrl[0], \ldots, \dummyctrl[\hor-1]) \) belongs to the set given by \(\{ y \in \R^\dimProc \ | \  \fdy(y)=0\}\), with the state and control projections \(\seq{\dummyst}{t}{0}{\hor}\), \(\seq{\dummyctrl}{t}{0}{\hor-1}\) then \(\seq{\dummyst}{t}{1}{\hor}\) is a solution of the dynamical system \eqref{eq:sys_dynamics}, corresponding to the initial condition \(\dummyst[0]\) and the control sequence \(\seq{\dummyctrl}{t}{0}{\hor-1}\). Similarly, if \(\seq{\dummyctrl}{t}{0}{\hor-1}\) is an admissible control sequence and  with \(\seq{\dummyst}{t}{0}{\hor-1}\) being the corresponding solution  of the dynamical system \eqref{eq:sys_dynamics}, for the initial condition \(\dummyst[0]\) then the concatenated vector \(\proc\Let(\dummyst[0],\ldots,\dummyst[\hor],\dummyctrl[0],\ldots,\dummyctrl[\hor-1])\) belongs to the set \(\{ y \in \R^\dimProc \ | \  \fdy(y)=0\}.\)
		Therefore, the dynamics \eqref{eq:sys_dynamics} in the optimal control problem \eqref{e:DTOCPNSD} can be equivalently modeled by an equality constraint \( \fdy(\proc)=0 \) in \(\R^\dimProc\).
\item \emph{The lift of the sets corresponding to point-wise state and control action constraints:} Define the sets 
		\begin{align}
			\label{eq: st constraints HD} & \admstHD_{t} \Let \set[\big]{\proc \in \R^{\dimProc} \suchthat \stprojection (\proc) \in \admst} \quad \text{for } t = 0, \ldots, \hor, \\
			\label{eq: ctrl constraints HD} & \admctrlHD_{t} \Let \set[\big]{\proc \in \R^{\dimProc} \suchthat \stprojection (\proc) \in \admctrl} \quad \text{for } t = 0, \ldots, \hor-1.
		\end{align}
		Observe that if the sets \(\seq{\admst}{t}{0}{\hor}\) and \(\seq{\admctrl}{t}{0}{\hor-1}\) are closed then the corresponding lifted sets \((\admstHD_t)_{t=0}^{\hor}\) and \((\admctrlHD_t)_{t=0}^{\hor-1}\) are also closed. Hence the closedness of sets is preserved under the defined lift.
		
		Further any vector \( z=(\dummyst[0],\ldots,\dummyst[\hor],\dummyctrl[0],\ldots,\dummyctrl[\hor-1]) \in \admstHD_{t} \) for some \(t \in \{0,\ldots,\hor \}\) if and only if the corresponding state projection \( \dummyst \in \R^\dimst \) satisfies the state constraints given by \ref{eq:sys:admstset} (i.e., \(\dummyst \in \admst\)). Similarly for any \(z=(\dummyst[0],\ldots,\dummyst[\hor],\dummyctrl[0],\ldots,\dummyctrl[\hor-1]) \in \admctrlHD_{t} \) for some \(t \in \{0,\ldots,\hor-1 \}\) if and only if the corresponding control projection \( \dummyctrl \in \R^\dimctrl \) satisfies the control constraints given by \ref{eq:sys:admctrlset} (i.e., \(\dummyctrl \in \admctrl\)).
		
		Therefore \(\proc=(\dummyst[0],\ldots,\dummyst[\hor],\dummyctrl[0],\ldots,\dummyctrl[\hor-1])  \in \bigl(\bigcap_{t=0}^{\hor} \admstHD_{t} \bigr) \cap \bigl(\bigcap_{t=0}^{\hor-1}\admctrlHD_{t} \bigr)\) if and only if \( \dummyst \in \admst \) for \( t=0, \ldots, \hor ,\) and \( \dummyctrl \in \admctrl \) for \( t=0, \ldots, \hor-1 \). Hence the point-wise state and control constraints is  equivalently given by  the constraint \(\proc \in \bigl(\bigcap_{t=0}^{\hor} \admstHD_{t} \bigr) \cap \bigl(\bigcap_{t=0}^{\hor-1}\admctrlHD_{t} \bigr)\) in \(\R^\dimProc \).
\item \emph{The lift of the frequency constraints on state and control trajectory:} Define the functions 
		\begin{align}
			\label{eq: st freq constraints HD} & \R^{\dimProc}\ni \proc \mapsto \Sfc (\proc) \Let \stconstraints \bigl(\stprojection[0](\proc), \ldots, \stprojection[\hor](\proc) \bigr)\in \R^{\dimstconstraints} \quad \text{and}\\
			& \label{eq: ctrl freq constraints HD}\R^{\dimProc}\ni \proc \mapsto \Cfc (\proc) \Let \ctrlconstraints \bigl(\ctrlprojection[0] (\proc), \ldots, \ctrlprojection[\hor-1](\proc) \bigr)\in \R^{\dimctrlconstraints}.
		\end{align}
		From the definition of \( \Sfc \) and \( \Cfc \) it is clear that the equality constraints \(\Sfc (\proc)=0\) and \(\Cfc (\proc)=0\) in \(\R^\dimProc \) are equivalent to the state frequency constraints \ref{eq:sys:SFC} and the control frequency constraints  \ref{eq:sys:CFC}, respectively.
\end{itemize} 
In view of the various definitions above, we define the optimization problem 
\begin{equation}\label{e:opt prob}
\boxed{
	\begin{aligned}
		&\minimize_{\proc \in \admProc}  && \Cost(\proc) \\
		&\sbjto  &&\begin{cases}
			\admProc  \Let \bigl(\bigcap_{t=0}^{\hor} \admstHD_{t} \bigr) \cap \bigl(\bigcap_{t=0}^{\hor-1} \admctrlHD_{t} \bigr),\\
			\fdy (\proc) = 0,\\
			\Sfc (\proc) = 0, \\
			\Cfc (\proc) = 0,\\
	\end{cases}
\end{aligned}
}
\end{equation}

Note that the cost and the constraints of \eqref{e:opt prob} and  \eqref{e:DTOCPNSD} are identical; consequently,  these are equivalent problems.

		\subsection{\ref{step2:First order necessary condition}} \textbf{Necessary condition for the equivalent optimization problem:}
\label{subsec:Step2}

 Let \(\optProc\) denote a solution to the optimization problem \eqref{e:opt prob}, comprising of an optimal control sequence \(\seq{\optCtrl}{t}{0}{\hor-1}\) and its corresponding state trajectory \(\seq{\optState}{t}{0}{\hor}\); that is, \( \optProc=(\optState[0],\cdots,\optState[\hor],\optCtrl[0],\cdots,\optCtrl[\hor-1])\). Since \eqref{e:opt prob} is equivalent to the optimal control problem \eqref{e:DTOCPNSD}, if \(\optProc \) is a solution to the optimization problem \eqref{e:opt prob}, then  \(\seq{\optCtrl}{t}{0}{\hor-1}\) is a solution to the optimal control problem \eqref{e:DTOCPNSD}  and \(\seq{\optState}{t}{0}{\hor}\) is its corresponding optimal state trajectory.

 The following theorem (\secref{Clarke's multiplier rule})) provides necessary condition for \(\optProc\) to be a solution of \eqref{e:opt prob}:
 \begin{theorem}\label{Th:Clarke's Opt Condition}
  If \(\optProc\) is a solution to the optimization problem \eqref{e:opt prob}, then there exists a non-trivial vector \(\bigl(\multipliercost, \Multiplierfd, \multipliersfc, \multipliercfc \bigr) \in \{0,1\} \times \dualspace{\R^{\dimst \hor}} \times \dualspace{\R^{\dimstconstraints}} \times \dualspace{\R^{\dimctrlconstraints}}\) such that
\begin{equation}
\label{Opt_problem_necessary_condition}
	0 \in \ggrad{\Big({\multipliercost \Cost(\cdot) +\inprod{\Multiplierfd}{\fdy (\cdot)} + \inprod{\multipliersfc}{\Sfc (\cdot)} + \inprod{\multipliercfc}{\Cfc (\cdot)}}\Big)}{\optProc} +\Ncs_{\admProc}(\optProc).
\end{equation}
\end{theorem}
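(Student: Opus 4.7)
The plan is to recognize \eqref{e:opt prob} as a finite-dimensional nonsmooth mathematical program with a locally Lipschitz objective, several locally Lipschitz equality constraints, and a geometric constraint given by a closed set, and then to apply Clarke's generalized multiplier rule (recalled in Appendix \secref{Clarke's multiplier rule}) essentially off the shelf. Before invoking it, I would consolidate the three separate equality constraints in \eqref{e:opt prob} into a single vector-valued map \(G(\proc) \Let \bigl(\fdy(\proc), \Sfc(\proc), \Cfc(\proc)\bigr) \in \R^{\dimst \hor + \dimstconstraints + \dimctrlconstraints}\), so that the problem assumes the canonical form \(\min_{\proc \in \admProc} \Cost(\proc)\) subject to \(G(\proc) = 0\), to which the generalized multiplier rule directly speaks.

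Next I would verify the standing hypotheses needed to invoke the rule. The aggregated cost \(\Cost\) in \eqref{eq:CostHD} is continuously differentiable since each \(\cost\) and \(\cost[\hor]\) is so by \ref{eq:sys:cost}, and composition with the linear projections \eqref{eq:projection} preserves smoothness; the dynamics lift \(\fdy\) in \eqref{eq:process} is locally Lipschitz because each \(\dynamics\) is locally Lipschitz by \ref{eq:sys:dynamics}; the frequency-constraint maps \(\Sfc\) and \(\Cfc\) inherit smoothness from the smoothness of \(\sfc\) and \(\cfc\) asserted in \ref{eq:sys:SFC}--\ref{eq:sys:CFC}; and the set \(\admProc\), being a finite intersection of preimages of the closed sets \(\admst\) and \(\admctrl\) under the continuous projections \eqref{eq:projection}, is closed. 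With these ingredients in place, the multiplier rule produces a nontrivial pair \((\multipliercost, \mu) \in [0,+\infty) \times \dualspace{\R^{\dimst \hor + \dimstconstraints + \dimctrlconstraints}}\) satisfying
\[
	0 \in \ggrad{\bigl(\multipliercost \Cost(\cdot) + \inprod{\mu}{G(\cdot)}\bigr)}{\optProc} + \Ncs_{\admProc}(\optProc),
\]
and decomposing \(\mu = (\Multiplierfd, \multipliersfc, \multipliercfc)\) according to the block structure of \(G\) and using bilinearity of the inner product reproduces \eqref{Opt_problem_necessary_condition}.

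The main delicate point, I expect, will be the normalization of \(\multipliercost\) together with the preservation of non-triviality. If the rule returns \(\multipliercost > 0\), dividing the entire tuple \((\multipliercost, \Multiplierfd, \multipliersfc, \multipliercfc)\) through by \(\multipliercost\) puts \(\multipliercost = 1\) and keeps the inclusion intact by positive homogeneity of both the generalized gradient and the Clarke normal cone; if instead \(\multipliercost = 0\), the non-triviality asserted by the rule forces at least one of \(\Multiplierfd, \multipliersfc, \multipliercfc\) to be nonzero, and one may keep \(\multipliercost = 0\) without losing the conclusion. A secondary concern is the invocation of the sum rule for generalized gradients in expanding \(\ggrad{\bigl(\multipliercost \Cost + \inprod{\mu}{G(\cdot)}\bigr)}{\optProc}\); since \(\Cost\) is \(C^1\) and \(\inprod{\mu}{G(\cdot)}\) is locally Lipschitz, the standard Clarke sum rule applies with equality (rather than as an inclusion), so no hidden approximation loss is incurred when passing from the abstract inclusion produced by the rule to the form stated in \eqref{Opt_problem_necessary_condition}.
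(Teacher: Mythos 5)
Your proposal is correct and follows essentially the same route as the paper, which obtains this statement as a direct application of Clarke's multiplier rule (Theorem \ref{Clarke's multiplier rule}) to the lifted problem \eqref{e:opt prob} with the three equality constraints and the geometric constraint \(\proc \in \admProc\). The only superfluous step is your normalization discussion: the version of the rule quoted in Appendix \ref{Appendix A} already delivers \(\multipliercost \in \set{0,1}\) together with non-triviality, so no rescaling of the multiplier tuple is needed.
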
 
In  the setting of Theorem \ref{Th:Clarke's Opt Condition}, the various scalars/vectors, \( \multipliercost,\Multiplierfd, \multipliersfc,  \multipliercfc \) are called multipliers  corresponding to the lifted cost \(\Cost\), the function \( \fdy \), the lifted state frequency constraints \(\Sfc\), and  the lifted control frequency constraints \( \Cfc \), respectively. 

The condition \eqref{Opt_problem_necessary_condition} is a set theoretic necessary condition for optimality, which means that there exist some non-trivial vectors in the sets 
\[
 \ggrad{\Big({\multipliercost \Cost(\cdot) +\inprod{\Multiplierfd}{\fdy (\cdot)} + \inprod{\multipliersfc}{\Sfc (\cdot)} + \inprod{\multipliercfc}{\Cfc (\cdot)}}\Big)}{\optProc} \text{ and } \Ncs_{\admProc}(\optProc) 
 \] such that their sum is zero. In order to simplify the condition above we characterize the elements of these sets. 

\begin{itemize}[label=\( \circ\),leftmargin=*, align=left]
\item 
Consider, first, the set \( \Ncs_{\admProc}(\optProc)\). We have the following characterization of \( \Ncs_{\admProc}(\optProc)\):

\embf{Claim:} For any vector \(\gamma \in \Ncs_{\admProc}(\optProc)\), there exist vectors \(\gamma_{t}^{\st[]} \in \Ncs_{\admstHD_{t}}(\optProc)\) and  \(\gamma_{t}^{\ctrl[]} \in \Ncs_{\admctrlHD_{t}}(\optProc)\) such that
\[
	\gamma = \sum_{t=0}^{\hor} \gamma_{t}^{\st[]} + \sum_{t=0}^{\hor-1} \gamma_{t}^{\ctrl[]}.
\]

\emph{Proof of Claim:}
	We know that the Clarke normal cone to any non-empty set is a closed and convex cone. Therefore, both \(\bigl(\Ncs_{\admstHD_{t}}(\optProc)\bigr)_{t=0}^{\hor}\) and \(\bigl(\Ncs_{\admctrlHD_{t}}(\optProc)\bigr)_{t=0}^{\hor-1}\) are sequences of closed and convex cones. 
	
	Consider the set \(\genSet \Let \chull \biggl( \Bigl(\bigcup_{t=0}^{\hor} \Ncs_{\admstHD_{t}}(\optProc) \Bigr) \cup \Bigl( \bigcup_{t=0}^{\hor-1} \Ncs_{\admctrlHD_{t}}(\optProc)\Bigr) \biggr)\). If \(\genSet\) is not closed, then by Theorem \ref{th:closure of convex hull} there exist vectors \(\MultiplierSt \in \Ncs_{\admstHD_{t}}(\optProc),\) and \(\MultiplierCon \in \Ncs_{\admctrlHD_{t}}(\optProc) \text{ for } t=0,\ldots, \hor-1, \ \MultiplierSt[\hor] \in \Ncs_{\admstHD_{\hor}}(\optProc)\), not all of them zero, such that \(\sum_{t=0}^{\hor}\MultiplierSt+\sum_{t=0}^{\hor-1}\MultiplierCon=0.\)

By the definition of the Clarke normal cone, if \(\MultiplierSt \in \Ncs_{\admstHD_{t}}(\optProc),\) then for any vector \(\dir_t\in \Tcs_{\admstHD_{t}}(\optProc)\) we have \(  \inprod{\MultiplierSt}{\dir_t}\leq 0.\) For the set \(\admstHD_{t}\) defined in \eqref{eq: st constraints HD} 
\(\stprojection[s](\dir_t) \) for \( s \in \set{0, \ldots, \hor} \setminus \set{t}\)  and \(\ctrlprojection[r](\dir_t) \) for  \(r\in \set{0, \ldots, \hor-1} \) are arbitrary, and the corresponding coordinates in \(\MultiplierSt[t]\) are zeros. That is,  \( \stprojection[s](\MultiplierSt)=0 \) for \( s \in \set{0, \ldots, \hor}\setminus \set{t} \) and \(\ctrlprojection[r](\MultiplierSt)=0\) for \(r \in \set{0, \ldots, \hor-1} \). Similarly, it follows that  \( \stprojection[s](\MultiplierCon)=0 \text{ for } s \in \set{0, \ldots, \hor}\) and \(\ctrlprojection[s](\MultiplierCon)=0 \text{ for } s \in \set{0, \ldots, \hor-1} \setminus \set{t}. \) Therefore,  \(\MultiplierSt\) will be of form
\begin{equation}
\begin{aligned}\label{eq:Multipliers in HD}
	\MultiplierSt &= 
			\pmat{0, \ldots, 0, \multiplierstNcs, 0, \ldots, 0} \quad   \text{ for } t = 0, \ldots, \hor,\\
	\MultiplierCon & =\pmat{0, \ldots, 0, \multiplierctrlNcs, 0, \ldots, 0} \quad  \text{ for } t =0, \ldots, \hor-1 ,
		\end{aligned}	
\end{equation}
where \(\multiplierstNcs \Let \stprojection{(\MultiplierSt)}, \  \multiplierctrlNcs \Let \ctrlprojection{(\MultiplierCon)} \text{ for } t = 0, \ldots, \hor-1,  \text{ and }  \multiplierstNcs[\hor]\Let\stprojection[\hor]{(\MultiplierSt[\hor])}\).
Hence, \( \sum_{t=0}^{\hor}\MultiplierSt + \sum_{t=0}^{\hor-1}\MultiplierCon=0 \) implies  \(\MultiplierSt=\MultiplierCon=\MultiplierSt[\hor]=0 \) for \(0 \leq t\leq \hor-1\). This contradicts the non-triviality assertion on \(\seq{\MultiplierSt}{t}{0}{\hor},\seq{\MultiplierCon}{t}{0}{\hor-1}.\)
Therefore, our assumption is not true which means \(\genSet\) is closed.

From Lemma \ref{Tcs to intersection is intersection of Tcs}, \(\Tcs_{\admProc}(\optProc) = \bigl(\bigcap_{t=0}^{\hor} \Tcs_{\admstHD_{t}}(\optProc)\bigr) \cap \bigl(\bigcap_{t=0}^{\hor-1} \Tcs_{\admctrlHD_{t}}(\optProc) \bigr)\). Therefore, from Theorem \ref{th:Dual cone of intersection is union of dual cone} we see that
\[
\Ncs_{\admProc}(\optProc) = \closure \biggl( \chull \biggl( \Bigl(\bigcup_{t=0}^{\hor} \Ncs_{\admstHD_{t}}\Bigr) \cup \Bigl( \bigcup_{t=0}^{\hor-1} \Ncs_{\admctrlHD_{t}}\Bigr) \biggr)\biggr) = \chull \biggl( \Bigl(\bigcup_{t=0}^{\hor} \Ncs_{\admstHD_{t}}\Bigr) \cup \Bigl( \bigcup_{t=0}^{\hor-1} \Ncs_{\admctrlHD_{t}}\Bigr) \biggr).
\]
	The assertion of the claim follows at once from the fact that \( (\Ncs_{\admstHD_{t}})_{t=0}^{\hor}\) and \( (\Ncs_{\admctrlHD_{t}})_{t=0}^{\hor-1} \) are sequences cones. \qed


\item Consider, second, the set \( \ggrad{\Big({\multipliercost \Cost(\cdot) +\inprod{\Multiplierfd}{\fdy (\cdot)} + \inprod{\multipliersfc}{\Sfc (\cdot)} + \inprod{\multipliercfc}{\Cfc (\cdot)}}\Big)}{\optProc} \) in \eqref{Opt_problem_necessary_condition}.
The functions \(\Cost, \Sfc ,\Cfc\) in \eqref{e:opt prob} are continuously differentiable. Using Lemma \ref{ggrad_of Lipschitz_+_C1} we get,
\begin{align*}
\label{eq:genGradtermProof}
	& \ggrad\Bigl{{\multipliercost \Cost + \inprod{\Multiplierfd}{\fdy} + \inprod{\multipliersfc}{\Sfc} + \inprod{\multipliercfc}{\Cfc}}\Bigr}({\optProc}) = \nonumber \\
	& \quad \multipliercost \derivative{\Cost}{\proc}(\optProc) + \ggrad{\inprod{\Multiplierfd}{\fdy(\cdot)}}{\optProc} + \Bigl( \derivative{\Sfc}{\proc}(\optProc)\Bigr) \transpose \multipliersfc +\Bigl( \derivative{\Cfc}{\proc}(\optProc) \Bigr) \transpose \multipliercfc.
\end{align*}
In other words, \eqref{Opt_problem_necessary_condition} can be simplified to
\begin{equation}
\label{eq:optimizationCondition}
	0 \in \multipliercost \derivative{\Cost}{\proc}(\optProc)+ \ggrad{\inprod{\Multiplierfd}{\fdy(\cdot)}}{\optProc} + \biggl(\derivative{\Sfc}{\proc}(\optProc)\biggr) \transpose \multipliersfc + \biggl(\derivative{\Cfc}{\proc}(\optProc)\biggr) \transpose \multipliercfc + \Ncs_{\admProc}(\optProc).
\end{equation}
To characterize the elements of the set \(\ggrad{\inprod{\Multiplierfd}{\fdy(\cdot)}}{\optProc}\), let us define the functions
 \begin{equation}  
 \label{eq:function innerprod of dynamics and adjoint}
 \begin{aligned}
 	 \admProc \ni \proc\mapsto \dylifted(\proc) & \Let \dummyst[t+1]-\dynamics(\dummyst,\dummyctrl) \text{ for } t=0,\ldots,\hor-1, \\
	 \admProc \ni \proc\mapsto \funInprod_t(\proc) & \Let \inprod{\stprojection(\Multiplierfd)}{\dylifted(\proc)}=\inprod{\adjoint}{\dylifted(\proc)}\in \R  \text{ for } t=0,\ldots,\hor-1, \\
	 \admProc \ni \proc\mapsto \funInprod(\proc)&\Let\sum_{t=0}^{\hor-1}\funInprod_t(\proc) =\inprod{\Multiplierfd}{\fdy(\proc)}\in \R.   
 \end{aligned}
 \end{equation}
We immediately see that \(\ggrad{\inprod{\Multiplierfd}{\fdy(\cdot)}}{\optProc} = \ggrad{\funInprod}{\optProc}.
\)

If \(\omega \in \ggrad{\funInprod}{\optProc}\), then from the definition of generalized gradient \eqref{def:directional derivative} \(  \inprod{\omega}{\dir}\leq \gdd[\funInprod]{\optProc}{\dir}\)  for all \( \dir \in \R^\dimProc .\) Give that \( \dynamics \)'s are regular functions ( cf. \ref{assum:regularity of dynamics}), the function \( \funInprod \) defined above is also a regular function. Moreover, since \( \funInprod \) is a regular function, \( \funInprod \) has a directional derivative in each direction at \(\optProc\) and \( \gdd[\funInprod]{\optProc}{\dir}=\dirder{\funInprod}(\optProc) =\sum_{t=0}^{\hor}\dirder{\funInprod_t}(\optProc) \). Consequently, 
\begin{equation}\label{ineq:directinal derr lifted dynamics}
\inprod{\omega}{\dir}\leq \sum_{t=0}^{\hor}\dirder{\funInprod_t}(\optProc) \quad  \text{  for all }  \dir \in \R^\dimProc.
\end{equation} 

Let \(\omega \in \ggrad{\inprod{\Multiplierfd}{\fdy(\cdot)}}{\optProc}\) and \(\lambda=\sum_{t=0}^{\hor} \MultiplierSt +\sum_{t=0}^{\hor-1} \MultiplierCon \in \Ncs_{\admProc}(\optProc),\) where \( \MultiplierSt \in \Ncs_{\admstHD_{t}}(\optProc)\), \( \MultiplierCon \in \Ncs_{\admctrlHD_{t}}(\optProc)\), be vectors such that
\[
	\multipliercost \derivative{\Cost}{\proc}(\optProc) + \omega +\biggl(\derivative{\Sfc}{\proc}(\optProc)\biggr) \transpose \multipliersfc + \biggl(\derivative{\Cfc}{\proc}(\optProc)\biggr) \transpose \multipliercfc + \sum_{t=0}^{\hor} \MultiplierSt +\sum_{t=0}^{\hor-1} \MultiplierCon = 0,
 \]
and a transposition leads to
\begin{equation*}
\label{eq:condition for optimization}
	-\Biggl( \multipliercost \derivative{\Cost}{\proc}(\optProc) + \biggl(\derivative{\Sfc}{\proc}(\optProc)\biggr) \transpose \multipliersfc + \biggl(\derivative{\Cfc}{\proc}(\optProc)\biggr) \transpose \multipliercfc + \sum_{t=0}^{\hor}\MultiplierSt + \sum_{t=0}^{\hor-1}\MultiplierCon\Biggr) =  \omega.
\end{equation*}
 Here \(\omega \in \ggrad{\inprod{\Multiplierfd}{\fdy(\cdot)}}{\optProc}\) implies that \(\inprod{\omega}{\dir}\leq \dirder{\funInprod}(\optProc)  \text{ for all } \dir \in \R^\dimProc\), and hence the inequality simplifies \eqref{ineq:directinal derr lifted dynamics} to
\begin{equation}
\label{eq:necessary condition for optimization problem}
\begin{aligned}
	\inprod{-\Biggl(\multipliercost \derivative{\Cost}{\proc}(\optProc) + \biggl(\derivative{\Sfc}{\proc}(\optProc)\biggr) \transpose \multipliersfc +  \biggl(\derivative{\Cfc}{\proc}(\optProc)\biggr) \transpose \multipliercfc + \sum_{t=0}^{\hor}\MultiplierSt + \sum_{t=0}^{\hor-1}\MultiplierCon \Biggr)}{\dir}& \\ \leq  \sum_{t=0}^{\hor}\dirder{\funInprod_t}(\optProc) 
	\quad \text{ for all } \dir \in \R^\dimProc. &
\end{aligned}
\end{equation}
\end{itemize}
In other words we have established the following proposition:
\begin{proposition}\label{prop: reduced necessary condition}
If \(\optProc\) is a solution of problem \eqref{e:opt prob}, then there exist a non-trivial vector \(\bigl(\multipliercost, \Multiplierfd, \multipliersfc, \multipliercfc \bigr) \in \{0,1\} \times \dualspace{\R^{\dimst \hor}} \times \dualspace{\R^{\dimstconstraints}} \times \dualspace{\R^{\dimctrlconstraints}}\), vector \( \MultiplierSt \in \Ncs_{\admstHD_{t}}(\optProc)\) for each \(t=0,\ldots,\hor\), and vector \( \MultiplierCon \in \Ncs_{\admctrlHD_{t}}(\optProc)\) for each \(t=0,\ldots,\hor-1\) such that \eqref{eq:necessary condition for optimization problem} holds true, where \(\funInprod_t\) is as defined in \eqref{eq:function innerprod of dynamics and adjoint}.
\end{proposition}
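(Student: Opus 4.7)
The strategy is to apply Clarke's nonsmooth multiplier rule (Theorem \ref{Th:Clarke's Opt Condition}) to the lifted problem \eqref{e:opt prob}; this immediately produces a non-trivial tuple $(\multipliercost, \Multiplierfd, \multipliersfc, \multipliercfc)$ and the set-theoretic inclusion \eqref{Opt_problem_necessary_condition}. The work then reduces to decomposing the two constituents of that inclusion, namely the Clarke normal cone $\Ncs_{\admProc}(\optProc)$ and the generalized gradient of the weighted sum of $\Cost$, $\fdy$, $\Sfc$, $\Cfc$, into per-stage and per-constraint pieces, and finally to translating the resulting inclusion into the inequality \eqref{eq:necessary condition for optimization problem} using the defining property of the generalized gradient together with Assumption \ref{assum:regularity of dynamics}.

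For the normal cone term, the product structure $\admProc = \bigl(\bigcap_{t=0}^{\hor}\admstHD_t\bigr)\cap\bigl(\bigcap_{t=0}^{\hor-1}\admctrlHD_t\bigr)$ combined with Lemma \ref{Tcs to intersection is intersection of Tcs} gives $\Tcs_{\admProc}(\optProc)$ as the intersection of the component tangent cones, and dualization via Theorem \ref{th:Dual cone of intersection is union of dual cone} writes $\Ncs_{\admProc}(\optProc)$ as the closure of the convex hull of the union of the component normal cones. The principal technical obstacle is to remove this closure operation, so that a generic element decomposes as a \emph{finite} sum $\sum_{t=0}^{\hor}\MultiplierSt + \sum_{t=0}^{\hor-1}\MultiplierCon$ with $\MultiplierSt \in \Ncs_{\admstHD_t}(\optProc)$ and $\MultiplierCon \in \Ncs_{\admctrlHD_t}(\optProc)$. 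I plan to argue this by contradiction via Theorem \ref{th:closure of convex hull}: non-closedness would yield a nontrivial collection of such vectors summing to zero, but each vector in $\Ncs_{\admstHD_t}(\optProc)$ or $\Ncs_{\admctrlHD_t}(\optProc)$ is coordinate-sparse — the only nonzero block sits in the $\stprojection$ or $\ctrlprojection$ slot — so the vanishing sum forces every summand to be zero, contradicting non-triviality. The conicity of each normal cone then converts the resulting convex-combination representation into an honest sum of normal-cone vectors.

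For the generalized gradient term, the three maps $\Cost$, $\Sfc$, $\Cfc$ are $C^1$ by hypothesis \ref{eq:sys:cost}, \ref{eq:sys:SFC}, \ref{eq:sys:CFC} while only $\inprod{\Multiplierfd}{\fdy(\cdot)}$ is genuinely nonsmooth, so the sum rule in Lemma \ref{ggrad_of Lipschitz_+_C1} splits the Clarke subdifferential into ordinary gradients plus the residual $\ggrad{\inprod{\Multiplierfd}{\fdy(\cdot)}}{\optProc}$. This residual is converted into an inequality on directional derivatives through the defining property $\inprod{\omega}{\dir} \le \gdd[\funInprod]{\optProc}{\dir}$ of the generalized gradient, where $\funInprod \Let \inprod{\Multiplierfd}{\fdy(\cdot)}$. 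Assumption \ref{assum:regularity of dynamics} plays the essential role here: regularity of each $\dynamics$ transfers to each $\funInprod_t$ and thence to $\funInprod$, which both identifies the generalized directional derivative with the ordinary one-sided directional derivative and permits the telescoping $\gdd[\funInprod]{\optProc}{\dir} = \dirder{\funInprod}(\optProc) = \sum_{t=0}^{\hor-1}\dirder{\funInprod_t}(\optProc)$. Combining the two decompositions and transposing the smooth gradient terms together with the normal-cone summands to the left-hand side of the inclusion yields exactly \eqref{eq:necessary condition for optimization problem}, with the non-triviality of $(\multipliercost, \Multiplierfd, \multipliersfc, \multipliercfc)$ inherited directly from Clarke's rule.
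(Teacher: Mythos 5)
Your proposal is correct and follows essentially the same route as the paper's proof: Clarke's multiplier rule on the lifted problem, the closure-removal argument via Theorem \ref{th:closure of convex hull} exploiting the coordinate-sparsity of the component normal-cone vectors, the sum rule of Lemma \ref{ggrad_of Lipschitz_+_C1} to isolate the nonsmooth residual, and regularity to identify the generalized directional derivative with the sum of the one-sided directional derivatives of the \(\funInprod_t\). The only (cosmetic) difference is the order in which the dualization and the closedness argument are presented; no gap.
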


The inequality \eqref{eq:necessary condition for optimization problem} obtained is a necessary condition for the solutions of the equivalent optimization problem \eqref{e:opt prob}.

%
		\subsection{\ref{step3: projection to original problem}} \textbf{Projecting the condition in equation \eqref{eq:necessary condition for optimization problem} to the original factor spaces:}
\label{subsec:Step3}

We use Proposition \ref{prop: reduced necessary condition} to arrive at a set of necessary conditions for a solution of the optimal control problem \eqref{e:DTOCPNSD}. It may be observed, in particular, that the non-negativity condition \ref{non-negativity} in the main result follows directly from \eqref{Opt_problem_necessary_condition}.

For some \(t \in \set{0, \ldots, \hor}\), choose a vector \(\dir^{\st[]}_{t}\in \R^\dimst\) and define
\begin{equation} \label{eq:dummy dir state}
	\dummydir[] \Let(0, \ldots, \dir^{\st[]}_{t}, 0, \ldots, 0)\in \R^\dimProc.
\end{equation}
The projections of \(\dummydir[]\) \eqref{eq:projection} are
\begin{equation}\label{eq:projection of dummy direction}
	\stprojection[i]{(\dummydir[])}=\begin{cases}
										\dir_{t}^{\st[]} & \text{for }  i=t,\\
										0  & \text{otherwise,}
									\end{cases} \quad \text{and} \quad  \ctrlprojection[i]{(\dummydir[])} = 0 \quad \text{for } i=0,\ldots,\hor-1.
	\end{equation}
 Substituting \(\dir=\dummydir[]\) in \eqref{eq:necessary condition for optimization problem}, we get
\begin{equation}\label{eq:necessary condition optimization with tildev}
\begin{aligned}
	\inprod{-\biggl(\multipliercost \derivative{\Cost}{\proc}(\optProc) + \Bigl(\derivative{\Sfc}{\proc}(\optProc)\Bigr) \transpose \multipliersfc +  \Bigl(\derivative{\Cfc}{\proc}(\optProc)\Bigr) \transpose \multipliercfc + \sum_{i=0}^{\hor}\MultiplierSt[i] + \sum_{i=0}^{\hor-1}\MultiplierCon[i] \biggr)}{\dummydir[]} \\
	\leq  \sum_{i=0}^{\hor}\dirder[\tilde{\dir}]{\funInprod_i}(\optProc). 
\end{aligned}
\end{equation}
Let us look at each term individually, starting from the left, in the above inequality.
\begin{itemize}[label=\(\circ\), leftmargin=*]
	\item The first term corresponds to the cost function (\(\Cost\)) of the optimization problem. From the definition \eqref{eq:CostHD} of \(\Cost\), its gradient is 
	\begin{equation*}
	\begin{aligned}
		& \derivative{\Cost}{\proc}(\optProc)= \\
		& \begin{pmatrix}
		\derivative{\cost[0]}{\dummyst[]}(\optState[0],\optCtrl[0])\transpose &
		\cdots  & 
		\derivative{\cost[\hor]}{\dummyst[]}(\optState[\hor])\transpose &
		\derivative{\cost[0]}{\dummyctrl[]}(\optState[0],\optCtrl[0])\transpose &
		\cdots &
		\derivative{\cost[\hor-1]}{\dummyctrl[]}(\optState[\hor-1],\optCtrl[\hor-1])\transpose
		\end{pmatrix}\transpose,\\
	\end{aligned}
	\end{equation*}
	and therefore,
	\[
	\inprod{ \multipliercost \derivative{\Cost}{\proc}(\optProc)}{\dummydir[]}=\multipliercost\left(\sum\limits_{i=0}^{\hor}\inprod{\derivative{\cost[i]}{\dummyst[]}(\optState[i],\optCtrl[i])}{\stprojection[i]{(\dummydir[])}}+\sum\limits_{i=0}^{\hor-1}\inprod{\derivative{\cost[i]}{\dummyctrl[]}(\optState[i],\optCtrl[i])}{\ctrlprojection[i]{(\dummydir[])}}\right).
	\]
	Substituting the projections  of \(\dummydir[]\) from \eqref{eq:projection of dummy direction} gives
	\begin{equation}\label{eq:st projection cost}
	\begin{aligned}
	\inprod{ \multipliercost \derivative{\Cost}{\proc}(\optProc)}{\dummydir[]}=\inprod{ \multipliercost \derivative{\cost[t]}{\dummyst[]}(\optState[t],\optCtrl[t])}{\stprojection[t]{(\dummydir[])}}=\inprod{ \multipliercost \derivative{\cost[t]}{\dummyst[]}(\optState[t],\optCtrl[t])}{\dir^{\st[]}_{t}} .
	\end{aligned}
	\end{equation}
	\item The second term in \eqref{eq:necessary condition optimization with tildev} corresponds to the constraints on the state trajectory (in particular, to the function \(\stconstraints\),) of the original optimal control problem, equivalently represented by the function  \(\Sfc\) in \(\R^{\dimProc}\).
	Recalling the definition \eqref{eq: st freq constraints HD} of \(\Sfc \), we see that 
	\[
	\Sfc(\dummyst[0],\cdots,\dummyst[\hor],\dummyctrl[0],\cdots,\dummyctrl[\hor-1])=\stconstraints(\dummyst[0],\cdots,\dummyst[\hor]) \in \R^{\dimstconstraints}.
	\]
	Let \(\Sfc =\pmat{\Sfc_{1} & \cdots & \Sfc_{\dimstconstraints}}\transpose.\) Its gradient is given by\\
	\[ 
	\derivative{\Sfc}{\proc}
			=\begin{pmatrix}
				\derivative{\Sfc}{\dummyst[0]} & \cdots &  \derivative{\Sfc}{\dummyst[\hor]}& \derivative{\Sfc}{\dummyctrl[0]} & \cdots & \derivative{\Sfc}{\dummyctrl[\hor-1]}
			 \end{pmatrix} \in \R^{\dimstconstraints \times \dimProc}, 
	\]
	where 
	\begin{align*}
			  \derivative{\Sfc}{\dummyst[i]}
			 		 =  & \begin{pmatrix}
			 		\derivative{\Sfc_{1}}{\dummyst[i]^1} & \cdots &  \derivative{\Sfc_{1}}{\dummyst[i]^\dimst}\\  
			 		\vdots & \ddots  &\vdots \\
			 		\derivative{\Sfc_{\dimstconstraints}}{\dummyst[i]^1} & \cdots &  \derivative{\Sfc_{\dimstconstraints}}{\dummyst[i]^\dimst}  
			 		\end{pmatrix}\in \R^{\dimstconstraints\times\dimst} 
			 		\quad \text{and}\\
			 		 \derivative{\Sfc}{\dummyctrl[i]}
			 		  =  & \begin{pmatrix}
			 		\derivative{\Sfc_{1}}{\dummyctrl[i]^1} & \cdots &  \derivative{\Sfc_{1}}{\dummyctrl[i]^\dimctrl}\\  
			 		\vdots & \ddots  &\vdots \\
			 		\derivative{\Sfc_{\dimstconstraints}}{\dummyctrl[i]^1} & \cdots &  \derivative{\Sfc_{\dimstconstraints}}{\dummyctrl[i]^\dimctrl}  
			 		\end{pmatrix}\in \R^{\dimstconstraints\times\dimctrl}.
	\end{align*}
	From the definition of \(\Sfc\) we see at once  that \(\derivative{\Sfc}{\dummyctrl[i]}=0  \) for  \(i= 0, \ldots, \hor-1\) and \(\derivative{\Sfc}{\dummyst[i]}= \derivative{\stconstraints}{\dummyst[i]} \) for \(i= 0, \ldots, \hor\), which shows that
		\[ 
			\derivative{\Sfc}{\proc}
					=\begin{pmatrix}
						\derivative{\stconstraints}{\dummyst[0]} & \cdots &  \derivative{\stconstraints}{\dummyst[\hor]}& \mathbf{0}_{\dimstconstraints \times \dimctrl\hor} 
					 \end{pmatrix} \in \R^{\dimstconstraints \times \dimProc}.
			\]
		Moreover, from the definition \ref{eq:sys:SFC}  of \(\stconstraints,\) we see that \(\derivative{\stconstraints}{\dummyst[i]}=\derivative{\sfc[i]}{\dummyst[]}\) for \(i=0,\ldots,\hor.\) This implies, for a vector \(\multipliersfc \in \R^{\dimstconstraints},\) 
		\[
		\biggl(\derivative{\Sfc}{\proc}\biggr)\transpose\multipliersfc=\begin{pmatrix}
		\Bigl(\derivative{\sfc[0]}{\dummyst[]}\Bigr)\transpose \multipliersfc  \\
		\vdots \\
		\Bigl(\derivative{\sfc[\hor]}{\dummyst[]}\Bigr)\transpose \multipliersfc\\
		\mathbf{0}_{\dimctrl\hor\times 1}
		\end{pmatrix} \in \R^{\dimProc}.
		\]
	Therefore, the second term in \eqref{eq:necessary condition optimization with tildev} for the projection \eqref{eq:projection of dummy direction} of given \(\dummydir[]\)  is 
	\begin{equation}\label{eq:st projection st freq constraints}
	\begin{aligned}
	\inprod{\biggl(\derivative{\Sfc}{\proc}(\optProc)\biggr)\transpose\multipliersfc}{\dummydir[]}
	=\sum\limits_{i=0}^{\hor}\inprod{\biggl(\derivative{\sfc[i]}{\dummyst[]}(\optState[i])\biggr)\transpose \multipliersfc}{\stprojection[i]{(\dummydir[])}} =\inprod{\biggl(\derivative{\sfc[t]}{\dummyst[]}(\optState[t])\biggr)\transpose \multipliersfc}{\dummydir[t]^{\st[]}}.
	\end{aligned}
	\end{equation}
	\item The third term in \eqref{eq:necessary condition optimization with tildev} corresponds to the frequency constraints on the control trajectory (in particular, to the function \( \ctrlconstraints \),) of the original optimal control problem, equivalently represented by the function  \( \Cfc \) in \(\R^{\dimProc}\). Recalling the definition \eqref{eq: ctrl freq constraints HD} of \( \Cfc \) we see that 
		\[
			\Cfc(\dummyst[0], \ldots, \dummyst[\hor], \dummyctrl[0], \ldots, \dummyctrl[\hor-1]) = \ctrlconstraints(\dummyctrl[0], \ldots, \dummyctrl[\hor-1]) \in \R^{\dimctrlconstraints}.
		\]
		Say \(\Cfc = \pmat{\Cfc_{1} & \cdots & \Cfc_{\dimctrlconstraints}} \transpose\). Its gradient is given by
		\[ 
			\derivative{\Cfc}{\proc} = \pmat{\derivative{\Cfc}{\dummyst[0]} & \cdots &  \derivative{\Cfc}{\dummyst[\hor]} & \derivative{\Cfc}{\dummyctrl[0]} & \cdots & \derivative{\Cfc}{\dummyctrl[\hor-1]}} \in \R^{\dimctrlconstraints \times \dimProc},
		\]
		where
		\begin{align*}
			 \derivative{\Cfc}{\dummyst[i]}
				 		& = \begin{pmatrix}
				 		\derivative{\Cfc_{1}}{\dummyst[i]^1} & \cdots &  \derivative{\Cfc_{1}}{\dummyst[i]^\dimst}\\  
				 		\vdots & \ddots  &\vdots \\
				 		\derivative{\Cfc_{\dimctrlconstraints}}{\dummyst[i]^1} & \cdots &  \derivative{\Cfc_{\dimctrlconstraints}}{\dummyst[i]^\dimst}  
				 		\end{pmatrix}\in \R^{\dimctrlconstraints\times\dimst}
				 		\text{ and } \\
				 	  \derivative{\Cfc}{\dummyctrl[i]}
				 		& = \begin{pmatrix}
				 		\derivative{\Cfc_{1}}{\dummyctrl[i]^1} & \cdots &  \derivative{\Cfc_{1}}{\dummyctrl[i]^\dimctrl}\\  
				 		\vdots & \ddots  &\vdots \\
				 		\derivative{\Cfc_{\dimctrlconstraints}}{\dummyctrl[i]^1} & \cdots &  \derivative{\Cfc_{\dimctrlconstraints}}{\dummyctrl[i]^\dimctrl}  
				 		\end{pmatrix}\in \R^{\dimctrlconstraints\times\dimctrl}.
			\end{align*}
			We see at once that \(\derivative{\Cfc}{\dummyst[i]}=0  \) for all \(i= 0, \ldots \hor\) and \(\derivative{\Cfc}{\dummyctrl[i]}= \derivative{\ctrlconstraints}{\dummyctrl[i]} \) for all \(i= 0, \ldots \hor-1,\). Therefore,
			\[ 
				\derivative{\Cfc}{\proc}
						=\begin{pmatrix}
							\mathbf{0}_{\dimctrlconstraints \times \dimst(\hor+1)} & 
							\derivative{\ctrlconstraints}{\dummyctrl[0]} & \cdots &  \derivative{\ctrlconstraints}{\dummyctrl[\hor-1]} 
						 \end{pmatrix} \in \R^{\dimctrlconstraints \times \dimProc}.
				\]
			Moreover, from the definition \ref{eq:sys:CFC} of \(\ctrlconstraints,\) \(\derivative{\ctrlconstraints}{\dummyctrl[i]}=\derivative{\cfc[i]}{\dummyctrl[]}\) for \(i=0,\ldots,\hor-1.\) This implies, for a vector \(\multipliercfc \in \R^{\dimctrlconstraints}\), 
			\[
			\biggl(\derivative{\Cfc}{\proc}\biggr)\transpose\multipliercfc=\begin{pmatrix}
			\mathbf{0}_{\dimst(\hor+1)\times 1}\\
			\Bigl(\derivative{\cfc[0]}{\dummyctrl[]}\Bigr)\transpose \multipliercfc  \\
			\vdots \\
			\Bigl(\derivative{\cfc[\hor-1]}{\dummyctrl[]}\Bigr)\transpose \multipliercfc			
			\end{pmatrix} \in \R^{\dimProc}.
			\]
			Therefore, the third term in \eqref{eq:necessary condition optimization with tildev} for the projection \eqref{eq:projection of dummy direction} of given \(\dummydir[]\) can be written as 
		\begin{equation}\label{eq:st projection ctrl freq constraints}
		\begin{aligned}
		\inprod{\biggl(\derivative{\Cfc}{\proc}(\optProc)\biggr)\transpose\multipliercfc}{\dummydir[]}=\sum\limits_{i=0}^{\hor}\inprod{0}{\stprojection[i]{(\dummydir[])}}+\sum\limits_{i=0}^{\hor-1}\inprod{\biggl(\derivative{\cfc[i]}{\dummyctrl[]}(\optCtrl)\biggr)\transpose \multipliercfc }{\ctrlprojection[i]{(\dummydir[])}}=0 .
		\end{aligned}
		\end{equation} 
	\item The fourth and fifth term in \eqref{eq:necessary condition optimization with tildev} correspond to the point-wise constraints on the states and the control actions of the original optimal control problem, respectively. For the equivalent optimization problem \ref{e:opt prob}, they are  multipliers corresponding to the state constraints such that each \( \MultiplierSt \in \Ncs_{\admstHD_{t}}(\optProc)\), \( \MultiplierCon \in \Ncs_{\admctrlHD_{t}}(\optProc), \) and from \eqref{eq:Multipliers in HD} we see that\\ 
	 \[\MultiplierSt = 
				\pmat{0, \ldots, 0, \multiplierstNcs, 0, \ldots, 0} \quad \text{ for } t = 0, \ldots, \hor,
	\]
	\[ 
		\MultiplierCon  =\pmat{0, \ldots, 0, \multiplierctrlNcs, 0, \ldots, 0} \quad \text{ for }t =0, \ldots, \hor-1.  
	\]
		Therefore, for the given \(\dummydir[]\) we can write
		\begin{equation*}
		        \begin{aligned}
		        \inprod{\MultiplierSt[s]}{\dummydir[]} =\begin{cases}
		        									     \inprod{\multiplierstNcs}{\dir_t^{\st[]}} &\text{for }  s=t,\\
				                                        0  &  \text{otherwise }, 
			                                           \end{cases} 	\hspace{1cm}
				\inprod{\MultiplierCon[s]}{\dummydir[]}=0  \text{ for } s= 0, \ldots, \hor-1,  
				\end{aligned}
		\end{equation*}
		The fourth and fifth terms in \eqref{eq:necessary condition optimization with tildev} become 
		\begin{equation}\label{eq: multiplier for pointwise st constraints}
		\inprod{\biggl(\sum\limits_{i=0}^{\hor}\MultiplierSt[i] + \sum_{i=0}^{\hor-1}\MultiplierCon[i] \biggr)}{\dummydir[]}
		= \inprod{\multiplierstNcs}{\dir_t^{\st[]}}.
		\end{equation}
		\item Finally, the term on the right hand side of \eqref{eq:necessary condition optimization with tildev} is a sum of the directional derivatives of the  functions \(\funInprod_{i}\) along the direction \(\dummydir[]\). The functions \(\funInprod_{i}\), defined in \eqref{eq:function innerprod of dynamics and adjoint}, correspond to the dynamics of the original system at the \(i^\text{th}\) time instance. 
		Recall the definitions \eqref{eq:directional derivative} and \eqref{eq:function innerprod of dynamics and adjoint} of the directional derivative and the functions \((\funInprod_{i})_{i=0}^{\hor-1}\), respectively, and observe that 
		\[
		\dirder[\tilde{\dir}]{\funInprod_i}(\optProc)= \lim\limits_{\theta \downarrow 0} \frac{\funInprod_i(\optProc + \theta \tilde{\dir}) - \funInprod_i(\optProc)}{\theta} \text{ and } \funInprod_i(\proc)=\inprod{\adjoint[i]}{\stprojection[i+1]({\proc})-\dynamics[i](\stprojection[i]{(\proc)},\ctrlprojection[i](\proc))}. 
		\] 
		Therefore the directional derivative of each \(\funInprod_i\)  along \(\dummydir[] \) is given by 
		\[
		\dirder[\tilde{\dir}]{\funInprod_i}(\optProc)=\lim\limits_{\theta \downarrow 0}\frac{\inprod{\adjoint[i]}{\theta\stprojection[i+1](\dummydir[])-\dynamics[i](\optState[i]+\theta\stprojection[i](\dummydir[]),\optCtrl[i]+\theta\ctrlprojection[i](\dummydir[]))+\dynamics[i](\optState[i],\optCtrl[i])}}
		{\theta}.
		\]
		From the projections of \(\dummydir[]\) given in \eqref{eq:projection of dummy direction}, the above equation can be simplified to
		\[\dirder[\tilde{\dir}]{\funInprod_i}(\optProc)=\begin{dcases}
				                                       -\inprod{\adjoint}{\dirder[\dir^{\st[]}_{t}]{\dynamics(\cdot,\optCtrl)}(\optState)} & \text{ for } i=t,\\
				                                       \inprod{\adjoint[t-1]}{\dir^{\st[]}_{t}} & \text{ for } i=t-1,\\      
				                                        0&\text{otherwise.}                      
				                                        \end{dcases} 
		\]
	In other words, the term on the right hand side of \eqref{eq:necessary condition optimization with tildev} is given by
		\begin{align}
			 \sum_{i=0}^{\hor}\dirder[\tilde{\dir}]{\funInprod_i}(\optProc)
				 &=\begin{dcases}
				 		\dirder[\tilde{\dir}]{\funInprod_{0}}(\optProc) & \text{ for } t=0,\\
						\dirder[\tilde{\dir}]{\funInprod_{t-1}}(\optProc)
										 +\dirder[\tilde{\dir}]{\funInprod_t}(\optProc) & \text{ for } t=1,\ldots,\hor-1,\\
						\dirder[\tilde{\dir}]{\funInprod_{\hor-1}}(\optProc) & \text{ for } t=\hor,				
				 \end{dcases} \nonumber \\
				 \intertext{which leads to}
				\sum_{i=0}^{\hor} \dirder[\tilde{\dir}]{\funInprod_i}(\optProc)
				 &=\begin{dcases}
				 -\inprod{\adjoint[0]}{\dirder[\dir^{\st[]}_0]{\dynamics[0](\cdot,\optCtrl[0])}(\optState[0])} & \text{ for } t=0, \\
				 \inprod{\adjoint[t-1]}{\dir^{\st[]}_{t}}-\inprod{\adjoint}{\dirder[\dir^{\st[]}_{t}]{\dynamics(\cdot,\optCtrl)}(\optState)} 
				  & \text{ for } t=1,\ldots,\hor-1,
				 \\\inprod{\adjoint[\hor-1]}{\dir^{\st[]}_{\hor}}
				 & \text{ for } t=\hor.\label{eq: st projection of term corresponding to dynamics}
				 \end{dcases}
				 \end{align}
\end{itemize}
Substituting each term in \eqref{eq:necessary condition optimization with tildev} using equations \eqref{eq:st projection cost}, \eqref{eq:st projection st freq constraints}, \eqref{eq:st projection ctrl freq constraints}, \eqref{eq: multiplier for pointwise st constraints} and \eqref{eq: st projection of term corresponding to dynamics}, we arrive the following:
\begin{itemize}[label=\(\circ\), leftmargin=*]
 \item For \(t=0\),
	\begin{equation*}
	\inprod{-\multipliercost \derivative{\cost[0]}{\dummyst[]}(\optState[0],\optCtrl[0]) - \Bigl(\derivative{\sfc[0]}{\dummyst[]}(\optState[])\Bigr) \transpose \multipliersfc -  \multiplierstNcs[0]  }{\dir^{\st[]}_{0}}\leq -\inprod{\adjoint[0]}{\dirder[\dir^{\st[]}_{0}]{\dynamics[0](\cdot,\optCtrl[0])}(\optState[0])}.
	\end{equation*} 
 Since \( \dir^{\st[]}_{0} \in \R^\dimst \) is an arbitrary vector in \(\R^\dimst\), this means
\begin{align}
\inprod{-\multipliercost \derivative{\cost[0]}{\dummyst[]}(\optState[0],\optCtrl[0]) - \Bigl(\derivative{\sfc[0]}{\dummyst[]}(\optState[])\Bigr) \transpose \multipliersfc -  \multiplierstNcs[0]  }{\dirst}\leq -\inprod{\adjoint[0]}{\dirder[\dirst]{\dynamics[0](\cdot,\optCtrl[0])}(\optState[0])}
\end{align} 
for all \(\dirst \in \R^\dimst\).
\item For \(  t \in \set{1, \ldots, \hor-1}\) ,
\begin{equation*}
\begin{aligned}
\inprod{-\multipliercost \derivative{\cost}{\dummyst[]}(\optState,\optCtrl) - \Bigl(\derivative{\sfc}{\dummyst[]}(\optState[])\Bigr) \transpose \multipliersfc -  \multiplierstNcs  }{\dir^{\st[]}_{t}}\leq \inprod{\adjoint[t-1]}{\dir^{\st[]}_{t}} -\inprod{\adjoint}{\dirder[\dir^{\st[]}_{t}]{\dynamics(\cdot,\optCtrl)}(\optState)}.
\end{aligned}
\end{equation*}
Since \( \dir^{\st[]}_{t} \in \R^\dimst \) is an arbitrary vector in \(\R^\dimst\), this means
\begin{align}\label{eq:adjoint equation in proof}
\inprod{-\multipliercost \derivative{\cost}{\dummyst[]}(\optState,\optCtrl) - \Bigl(\derivative{\sfc}{\dummyst[]}(\optState[])\Bigr) \transpose \multipliersfc -  \multiplierstNcs  }{\dirst}\leq \inprod{\adjoint[t-1]}{\dirst} -\inprod{\adjoint}{\dirder[\dirst]{\dynamics(\cdot,\optCtrl)}(\optState)}.
\end{align}
\item For \(t=\hor\) 
\begin{equation*}
 \inprod{-\multipliercost \derivative{\cost[\hor]}{\dummyst[]}(\optState[\hor],\optCtrl[\hor])-\Bigl(\derivative{\sfc[\hor]}{\dummyst[]}(\optState[])\Bigr)\transpose \multipliersfc-\multiplierstNcs[\hor]}{\dir^{\st[]}_{\hor}}\leq  -\inprod{\adjoint[\hor-1]}{\dir^{\st[]}_{\hor}},
  \end{equation*}
  Since \( \dir^{\st[]}_{\hor} \in \R^\dimst \) is an arbitrary vector in \(\R^\dimst\), this means
 \begin{equation*}
  \inprod{-\multipliercost \derivative{\cost[\hor]}{\dummyst[]}(\optState[\hor],\optCtrl[\hor])-\Bigl(\derivative{\sfc[\hor]}{\dummyst[]}(\optState[])\Bigr)\transpose \multipliersfc-\multiplierstNcs[\hor]}{\dirst}\leq  -\inprod{\adjoint[\hor-1]}{\dirst} \text{ for all } \dirst \in \R^\dimst,
   \end{equation*} 
 which implies 
 \begin{equation} -\multipliercost \derivative{\cost[\hor]}{\dummyst[]}(\optState[\hor])-\Bigl(\derivative{\sfc[\hor]}{\dummyst[]}(\optState[])\Bigr)\transpose \multipliersfc-\multiplierstNcs[\hor]+\adjoint[\hor-1]=0.
 \end{equation}
  \end{itemize}
 Similarly, if we put  \(\dir=\hat{\dir}=(0,\cdots,0,\dir^{\ctrl[]}_t,0,\cdots,0)\in \R^\dimProc \) for some \(t\in\set{0,\cdots,\hor-1}\), in \eqref{eq:necessary condition for optimization problem}, then for all \(\dir_{t}^{\ctrl[]} \in \R^{\dimctrl}\), each term of \eqref{eq:necessary condition for optimization problem} equivalently translates to 

        \begin{equation}
        \begin{aligned}\label{eq:inner products}
        \begin{dcases}
       	\inprod{ \multipliercost \derivative{\Cost}{\proc}(\optProc)}{\hat{\dir}}
       	    	=\inprod{\multipliercost\derivative{\cost[t]}{\dummyctrl[]}(\optState[t],\optCtrl[t])}{\ctrlprojection[t]{(\hat{\dir})}}
       	=\inprod{\multipliercost\derivative{\cost[t]}{\dummyctrl[]}(\optState[t],\optCtrl[t])}{\dir^{\ctrl[]}_{t}}, \\
        \inprod{\Bigl(\derivative{\Sfc}{\proc}(\optProc)\Bigr)\transpose\multipliersfc}{\hat{\dir}}=\sum\limits_{i=0}^{\hor}\inprod{\Bigl(\derivative{\sfc[i]}{\dummyst[]}(\optCtrl)\Bigr)\transpose \multipliersfc }{\stprojection[i]{(\hat{\dir})}}+\sum\limits_{i=0}^{\hor-1}\inprod{0}{\ctrlprojection[i]{(\hat{\dir})}}=0, \\ 
        \inprod{\Bigl(\derivative{\Cfc}{\proc}(\optProc)\Bigr)\transpose\multipliercfc}{\hat{\dir}}
        	=\sum\limits_{i=0}^{\hor-1}\inprod{\Bigl(\derivative{\cfc[i]}{\dummyctrl[]}(\optCtrl[i])\Bigr)\transpose \multipliercfc}{\ctrlprojection[i]{(\hat{\dir})}} =\inprod{\Bigl(\derivative{\cfc[t]}{\dummyctrl[]}(\optCtrl[t])\Bigr)\transpose \multipliercfc}{\dir_{t}^{\ctrl[]}}, \\
        \inprod{\biggl(\sum\limits_{i=0}^{\hor}\MultiplierSt[i] + \sum_{i=0}^{\hor-1}\MultiplierCon[i] \biggr)}{\hat{\dir}}
				= \inprod{\multiplierctrlNcs}{\dir_t^{\ctrl[]}} ,
		\\ \dirder[\hat{\dir}]{\funInprod_t}(\optProc)= -\inprod{\adjoint}{\dirder[\dir^{\ctrl[]}_t]{\dynamics(\optState,\cdot)}(\optCtrl)}.
		\end{dcases}
		\end{aligned}
		\end{equation}
Therefore, the inequality \eqref{eq:necessary condition for optimization problem} can be rewritten as
\begin{equation}
\label{eq: Hamiltonian Maximization proof}
\begin{aligned}
	\inprod{ -\derivative{\cost}{\dummyctrl[]} \bigl( \optState, \optCtrl \bigr)-\derivative{\cfc}{\dummyctrl[]}(\optState,\optCtrl)- \multiplierctrlNcs}{\dir^{\ctrl[]}_t} \leq - \inprod{\adjoint}{\dirder[\dir^{\ctrl[]}_t]{\dynamics(\optState,\cdot)}(\optCtrl)} \\ \text{ for all } \dir^{\ctrl[]}_t\in \R^\dimctrl \text{ for } t=0,\ldots,\hor-1.
\end{aligned}
\end{equation}
Observe from \eqref{e:hamiltonian} that for \( t=0, \ldots, \hor-1,\) 

	for all \(\dirst \in \R^\dimst  \) we have,
		\begin{equation} \label{eq:dirder of Hamiltonian along in Rdimst}
			\begin{aligned}
			\dirder[\dirst]{\hamiltonian (\adjoint, t, \cdot, \optCtrl)}(\optState)=&\inprod{-\multipliercost \derivative{\cost}{\dummyst[]}(\optState,\optCtrl)  - \Big(\derivative{\sfc}{\dummyst[]}(\optState[])\Big) \transpose \multipliersfc}{\dirst}\\
 			& + \inprod{\adjoint}{\dirder[\dirst]{\dynamics(\cdot,\optCtrl)}(\optState)},
			\end{aligned}
		\end{equation}
and for all \(\dirctrl \in \R^{\dimctrl}\),
	\begin{equation}\label{eq:dirder of Hamiltonian along in Rdimctrl}
		\begin{aligned}
		\dirder[\dirctrl]{\hamiltonian (\adjoint, t, \optState, \cdot)}(\optCtrl)=&\inprod{-\multipliercost \derivative{\cost}{\dummyctrl[]}(\optState,\optCtrl)  - \
		\Bigl(\derivative{\cfc}{\dummyctrl[]}(\optCtrl[])\Bigr) \transpose \multipliercfc}{\dirctrl}\\
 		& + \inprod{\adjoint}{\dirder[\dirctrl]{\dynamics(\optState,\cdot)}(\optCtrl)}
		\end{aligned}.
\end{equation}

 Substituting \eqref{eq:dirder of Hamiltonian along in Rdimst} and \eqref{eq:dirder of Hamiltonian along in Rdimctrl} in \eqref{eq:adjoint equation in proof} and \eqref{eq: Hamiltonian Maximization proof} respectively, we get the adjoint equation and the Hamiltonian maximization condition in \ref{state and adjoint_equation}, \ref{Hamiltonian Maximization} in Theorem \ref{th:nonsmooth pmp}, respectively. Clearly, the state equations in \ref{state and adjoint_equation} follow from the definition of the Hamiltonian. Moreover, the non-negativity condition \ref{non-negativity} follows from \eqref{Opt_problem_necessary_condition}. Finally, the non-triviality condition \ref{non-triviality} follows from the non-triviality of \(\bigl(\multipliercost, \Multiplierfd, \multipliersfc, \multipliercfc \bigr) \) in \eqref{Opt_problem_necessary_condition}  and \(\Multiplierfd=(\adjoint[0],\ldots, \adjoint[\hor-1])\) in \eqref{Opt_problem_necessary_condition}.

	\section{Numerical Experiments}
	\label{sec:Numerics}
\subsection{Example 1}
\label{Ex: Inverted Pendulum}
\textbf{Inverted Pendulum on a Cart System:} In our first example, we consider an optimal control problem with constraints on frequency components of control action for a linear discrete time model of pendulum on a cart system. First we design an optimal control for the problem via classical PMP and in order to satisfy constraints on frequency of control action we filter out the corresponding frequency components. Next, we use the necessary conditions proposed in Corollary \ref{cor:smooth sys PMP} to design an optimal control. The control actions obtained via both the methods are fed to the system and the corresponding system trajectories are observed.

%
The specifications of the system considered are given below:
\begin{table}[h!]
	\label{tab:table1}
	\centering
	\begin{tabular}{l|l}
		\hline
		\textbf{Parameter} & \textbf{Value} \\ \hline
		Mass of cart (\(M\)) & \(2.5\) \si{\kilo\gram}\\
		Mass of pendulum (\(m\)) & \(0.6\) \si{\kilo\gram}\\%
		Half-length of pendulum (\(l\)) & \(0.25\) \si{\centi\meter}\\%
		Range of cart track (\(-L\) to \(L\)) & \(L = 0.5\) \si{\centi\meter}\\%
		Acceleration due to gravity (\(g\) )& \(9.8\) \si{\meter \per \square \second}\\
		\hline
	\end{tabular}
	\caption{System Specifications}
\end{table}

The continuous time linearized model of the pendulum on a cart system is described by the following differential equation:
\begin{equation}
\label{ex1:continuous time system}
	\dot{\st[]} = A_{c} \st[] + B_{c} \ctrl[],
\end{equation}
where \( \st[] \in \R^{4}\),
\[
A_{c} = \begin{pmatrix}  0 & 0 & 1 & 0 \\ 0 & 0 & 0 & 1 \\ 0 & - \frac{(ml)^{2}g}{(J + ml^{2})(m + M - (m^{2}l^{2}/J+ ml^{2}))} & 0 & 0 \\ 0 & \frac{mgl(m + M)}{(J + ml^{2})(m + M - (m^{2}l^{2}/J+ ml^{2}))} & 0 & 0 \end{pmatrix} \quad
\text{and} \quad B_{c}= \pmat{0 \\ 0 \\ \frac{1}{(m + M - (m^{2}l^{2}/J+ ml^{2}))} \\ \frac{-ml}{(J + ml^{2})(m + M - (m^{2}l^{2}/J+ ml^{2}))}}. 
\]
We obtain the discretized model using a zero-order-hold technique; assuming constant input over the sampling time \(T_{s}=0.1 \si{\second}\) and the corresponding discrete time linearized systems dynamics is governed by the following difference equation:
\begin{equation}\label{eq:DyPenCart}
	\st[t+1]= A \st + B \ctrl,
\end{equation} where 
\begin{itemize}[label=\( \circ\), leftmargin=*, align=left]
	\item \( \R^{4} \ni \st \Let \pmat{\st \kth[1],\st \kth[2],\st \kth[3],\st \kth[4]}\) is the state vector with \(\st \kth[1]\) denoting the position of the cart, \(\st \kth[2]\) denotes the angle of the pendulum with respect to y-axis, \(\st \kth[3]\) the linear velocity of the cart, and \(\st \kth[4]\) denotes the angular velocity of the pendulum, at \(t^{\text{th}}\) time instant,
	\item control \(\ctrl \in \R\) represents the force applied to the cart at \(t^{\text{th}}\) instant,
\item and \( A \in \R^{4\times4},\ B \in \R^{4} \) are the system matrices.
\end{itemize}

\textbf{Problem Description:}
Our objective is to drive the cart from a given initial condition \(\initSt\) to a specified final position \(\finSt\) in \(\hor\) time steps; while minimizing the control effort and satisfying the following constraints
\begin{equation}\label{eq:constraintsPenCart}
\begin{cases}
	\lvert \st \kth[i] \rvert \leq \boundSt{i} \quad & \text{for } i = 1, \ldots, 4, \text{ and } t = 0, \ldots, \hor,\\
	\ctrl \leq \boundCtrl & \text{for } t = 0, \ldots, \hor-1,\\
	\hat{\ctrl[]}_{i}=0 & \text{for } i \not \in \mathcal{F},\\
	\st[0] = \initSt, \quad \st[\hor] = \finSt,
\end{cases}
\end{equation}
where the bounds \(\boundSt{i}, \boundCtrl\) on states and control represent pointwise constraints and the set of allowable frequencies \(\mathcal{F}\) represents frequency constraints. The optimal control problem is written as
\begin{equation}
	\begin{aligned}
		& \minimize_{\seq{\ctrl}{t}{0}{\hor-1}} && \sum_{t=0}^{\hor-1} \ctrl^{2}\\
		& \sbjto &&
		\begin{cases}
			\text{dynamics } \eqref{eq:DyPenCart}\text{ and constraints }\eqref{eq:constraintsPenCart}.
		\end{cases}
	\end{aligned}
\end{equation}
For this problem, we have assumed state bounds \((\boundSt{1}, \boundSt{2}, \boundSt{3}, \boundSt{4}) =  (0.2, \frac{20\pi}{180}, 15, 30)\), control bound \(\boundCtrl = 5 \si{\newton}\), length of horizon \( \hor= 240 \) and the set \( \mathcal{F}\Let \{1, \ldots, 96, 144, \ldots, 240\}\) corresponding to the low pass filter with cut off frequency \( \frac{4\pi}{5}\).

Using the first order necessary conditions proposed in Corollary \ref{cor:smooth sys PMP}, we obtain a two point boundary value problem. The solution of the boundary value problem is an optimal control, say \(\ctrl[\mathrm{p}]\) and denote its frequency components by \(\hat{\ctrl[]}_{\mathrm{p}}\). Consider an optimal control obtained through the classical discrete time PMP incorporating only pointwise constraints on control and state (i.e., neglecting the frequency constraints on the control profile).  The control so obtained is then passed through a proper filter in order to satisfy the frequency constraints on control profile, say \(u_{\mathrm{f}}\). Let \( \hat{u}_{\mathrm{f}}\) denote the frequency components of \(\ctrl[\mathrm{f}]\).

\begin{figure}[h!]
\centering
	\begin{subfigure}{0.5\textwidth}
	\centering
	\includegraphics[width=\textwidth]{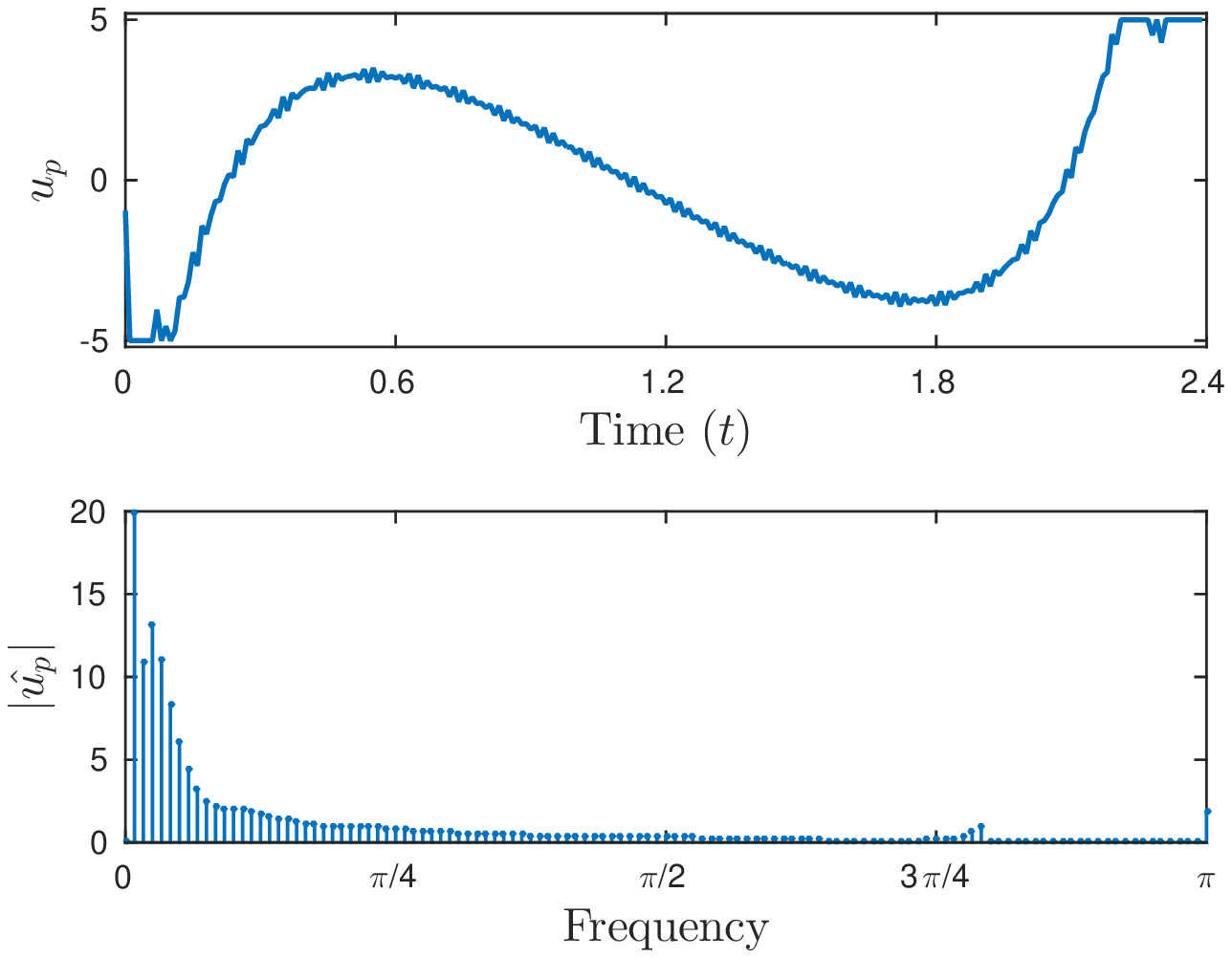}
	\caption{Control corresponding to frequency constrained PMP: Observe, \( \ctrl[\mathrm{p}] \) does not voialate the control bound}
    \label{fig:freqPMPCtrl}
	\end{subfigure} 
	~ \hspace{3mm}
	\begin{subfigure}{0.5\textwidth}
	\centering
	\includegraphics[width=\textwidth]{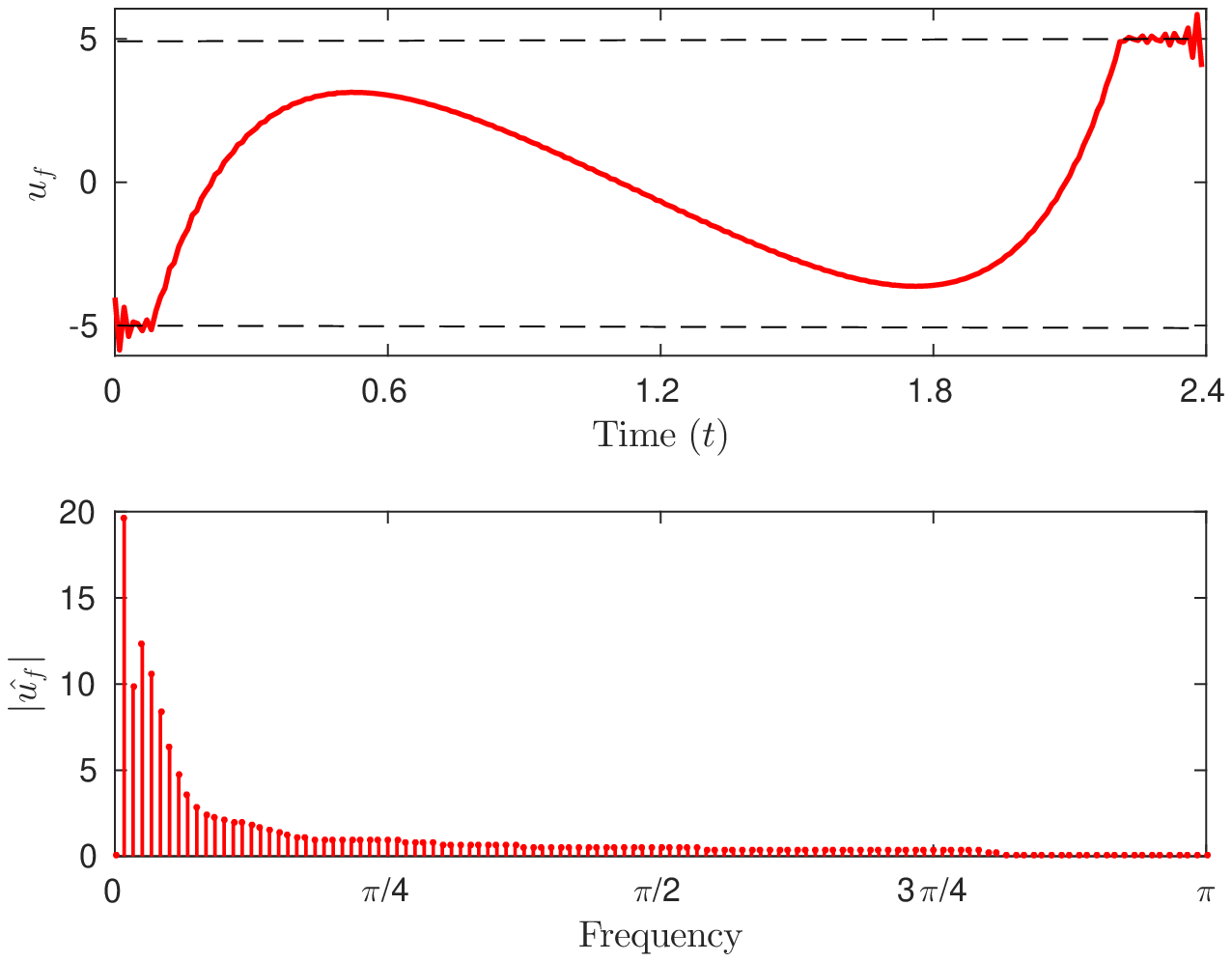}
	\caption{Filtered control: Observe, at the begining and at the end of the trajectory \( \ctrl[\mathrm{f}] \) violates the control bound}
	\label{fig:filtPMPCtrl}
	\end{subfigure}
	
		\begin{subfigure}{0.5\textwidth}
		\centering
		\includegraphics[width=\textwidth]{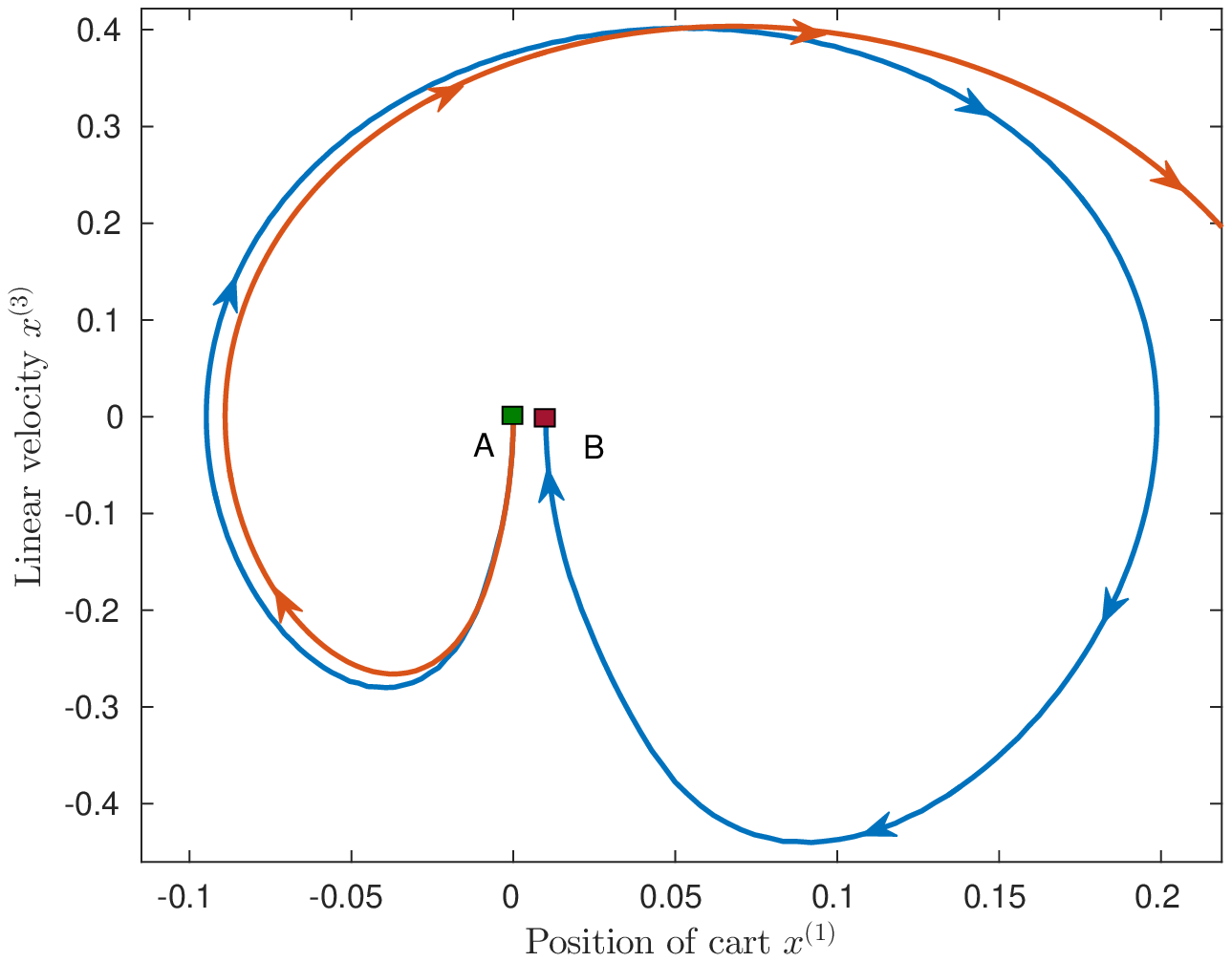}
		\caption{Phase portraits of the linear position of cart vs its linear velocity}
		\label{fig:Phase Portratit1}
	\end{subfigure}
	~ \hspace{3mm}
	\begin{subfigure}{0.5\textwidth}
		\centering
		\includegraphics[width=\textwidth]{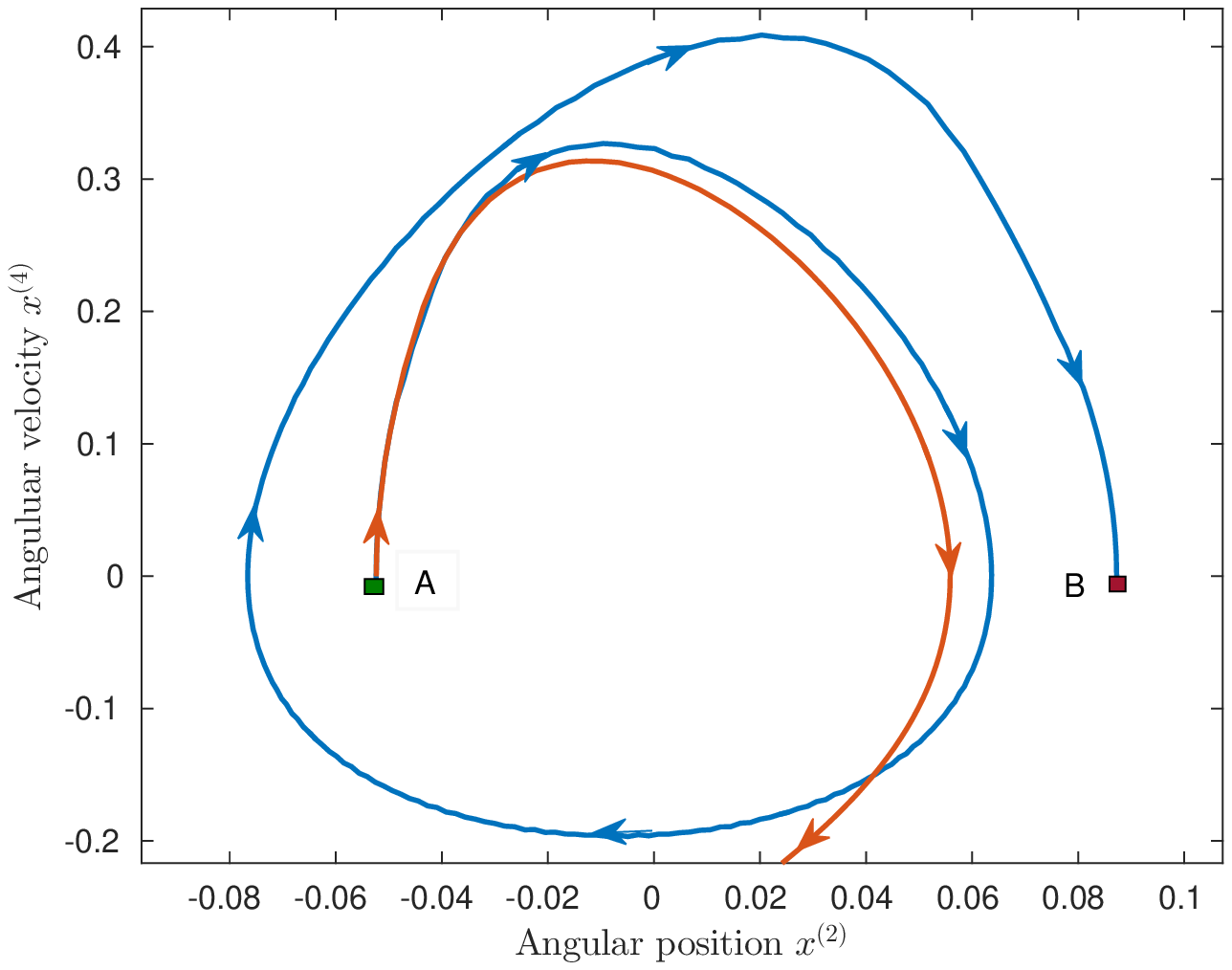}
		\caption{ Phase portraits of angular position of the pendulum vs its angular velocity}
		\label{fig:Phase Portratit2}
	\end{subfigure}
	\caption{Time and frequency domain profiles of the optimal control \(\ctrl[\mathrm{p}]\) (obtained using the proposed necessary conditions) (Fig. 3A), and \(\ctrl[\mathrm{f}]\) (the filtered optimal control) (Fig. 3B). Phase portraits of the system for the initial condition \(\initSt \) corresponding to the control inputs:  \( \ctrl[\mathrm{p}]\) 
	(in blue) and \(\ctrl[\mathrm{f}]\) (in red).} 
\label{fig:OptimalControl}
\end{figure}

The time domain and frequency domain profiles of both the control actions are shown in Figure \ref{fig:OptimalControl}. It is evident that the control \(\ctrl[\mathrm{p}]\) satisfies the constraints in the problem \ref{eq:constraintsPenCart}.

Both the controls \(\ctrl[\mathrm{p}]\) and \(\ctrl[\mathrm{f}]\) are fed to the system \eqref{eq:DyPenCart} and the  phase portraits corresponding to the initial condition \( \initSt \) are shown in Fig. 3C and Fig. 3D. It is clear from the plots that the control via frequency constrained PMP \(\ctrl[\mathrm{p}]\) respects pointwise state constraints and end point states are also attained. On the other hand the filtered control \(\ctrl[\mathrm{f}]\), because of removal of certain frequency components, is unable to maintain the pointwise state constraints and does not reach the final state.

Unlike the previous example, the next two examples incorporate non-smooth features in the system dynamics. In each case we present the necessary conditions of Theorem 4.1 for the particular case.

		\subsection{Example 2}
\label{subsec:Ex-2}
In our second example, we consider a discrete time system \cite[Example 5.8, p.\ 98]{ref:GruPan-17}; governed by the following difference equation
\begin{equation}
\label{eq:Ex2 System}
	\st[t+1] = \dynamics[](\st, \ctrl)  \quad \text{ for } t=0,\ldots, \hor-1,
\end{equation}
where \(\st \in \R^{2}, \ctrl \in \R\), and \(\R^2 \times \R \ni (\xi, \mu) \mapsto \dynamics[](\xi, \mu) \Let \pmat{\xi \kth[1] (1 - \mu)\\ \norm{\xi} \mu} \in \R^{2}\).
Our objective is to characterize a solution of the following problem
\begin{equation}
\label{eq:Ex2 Opt Con Prob}
	\begin{aligned}
		& \minimize_{\seq{\ctrl}{t}{0}{\hor-1}} && \sum_{t=0}^{\hor-1} \inprod{\st}{\st} + \inprod{\ctrl}{\ctrl}\\
		& \sbjto &&
		\begin{cases}
			\text{dynamics \eqref{eq:Ex2 System}},\\
			\st[0] = \initSt,\\
			\ctrl \in [0,1] \quad \text{for } t = 0, \ldots, \hor-1.\\
		\end{cases}
	\end{aligned}
\end{equation}
If \(\bigl(\seq{\optState}{t}{0}{\hor}, \seq{\optCtrl}{t}{0}{\hor-1}\bigr)\) is a solution of the problem \eqref{eq:Ex2 Opt Con Prob}, then Theorem \ref{th:nonsmooth pmp} gives the following first order necessary conditions:

there exist \(\multipliercost \in \set{0, 1}\) and a sequence of adjoint vectors \(\seq{\adjoint}{t}{0}{\hor-1}\) such that \( \bigl(\multipliercost, \seq{\adjoint}{t}{0}{\hor-1}\bigr) \neq 0\) and
\begin{itemize}[leftmargin=*]
	\item \textit{state and adjoint dynamics:}\\
	The function \( \dynamics[]\) in \eqref{eq:Ex2 System} is differentiable everywhere on \(\R^{2}\) except at a point \( (\st[] \kth[1], \st[] \kth[2])= (0, 0) \). Therefore, the adjoint dynamics on \(\R^{2} \setminus \{(0,0)\}\) is obtained from the condition \ref{smooth state and adjoint_equation} of the corollary \ref{cor:smooth sys PMP}  and the adjoint dynamics at \( (0,0) \) is obtained from the condition \ref{state and adjoint_equation} of Theorem \ref{th:nonsmooth pmp}.
	\begin{align}
		\label{eq:Ex2 State and adjoint dynamics}
		\text{for \(t = 0, \ldots, \hor-1,\)} \nonumber \\
		& \quad \optState[t+1] = \pmat{{\optState} \kth[1] (1 - \optCtrl)\\ \norm{\optState} \optCtrl},
		& \intertext{for \(t = 1, \ldots, \hor-1,\)} 
		\label{eq:Ex2 adjoint dynamics}
		& \begin{dcases}
			\adjoint[t-1]  = 2 \multipliercost \optState + \pmat{1 & 0\\ \dfrac{{\optState} \kth[1]}{\norm{\optState}} & \dfrac{{\optState} \kth[2]}{\norm{\optState}}} \transpose \adjoint \optCtrl \quad & \text{if } \optState \neq 0,\\
			\inprod{\adjoint[t-1]}{y} \geq \inprod{\adjoint}{\pmat{y \kth[1] (1-\optCtrl)\\ \norm{y} \optCtrl}} \quad \text{ for all } y \in \R^{\dimst} & \text{if } \optState = 0;
		  \end{dcases}
	\end{align}

	\item\textit{transversality:}\\
	\begin{equation}
	\label{eq:Ex2 transversality}
	\begin{dcases}
		2 \multipliercost \optState[0] + \pmat{1 & 0 \\ \dfrac{{\optState[0]} \kth[1]}{\norm{\optState[0]}} & \dfrac{{\optState[0]} \kth[2]}{\norm{\optState[0]}}} \transpose \adjoint[0] \optCtrl[0] = 0 & \text{if } \optState[0] \neq 0,\\
		\inprod{\adjoint[0]}{\pmat{y \kth[1] (1-\optCtrl[0])\\ \norm{y} \optCtrl[0]}} \leq 0 \quad \text{for all } y \in \R^{\dimst} \quad & \text{if } \optState[0] = 0;
	\end{dcases}
	\end{equation}

	\item \textit{Hamiltonian maximization:}\\
	The function governing dynamics of system \eqref{eq:Ex2 System} is smooth with respect to control variable \(\ctrl[]\). Therefore the Hamiltonian maximization condition obtained using the condition \ref{smooth Hamiltonian Maximization} of the Corollary \ref{cor:smooth sys PMP} is:
		\begin{equation}
		\label{eq:Ex2 H-Maximization}
			2 \multipliercost \optCtrl = \inprod{\adjoint}{\pmat{-1 \\ \norm{\optState}}}.
		\end{equation}
\end{itemize}

		\subsection{Example 3. Buck Converter:}

In \secref{subsec:Ex-2} we have considered a system whose dynamics is smooth with respect to the control and nonsmooth with respect to the states. In the current subsection we consider a more general example where the system dynamics is nonsmooth with respect to the states as well as the control. The following difference equation represents modified discrete time buck converter system \cite{Banarjee2000}. In particular, we  suppose the clock cycle and the reference current are given and consider the voltage as our control input.
\begin{equation}
\label{eq:buck dynamics}
\begin{aligned}
	\current[t+1]= \dynamics[](\current, \inpVolt)= 
	\begin{dcases}
		\exp \Bigl(\frac{-\Resistance \ClkPeriod}{\Inductance}\Bigr) \current + \bigg(1 - \exp \Bigl({\frac{-\Resistance \ClkPeriod}{\Inductance}}\Bigr)\bigg) \frac{\inpVolt}{\Resistance} \quad & \text{if } \current \leq \curBor (\inpVolt),\\
		\Big(\frac{\inpVolt - \current \Resistance}{\inpVolt - \curRef \Resistance}\Big)^{\bigl(\frac{\Resistance + \DiodeRes}{\Resistance}\bigr)} \curRef \exp {\Bigl(-\frac{\Resistance + \DiodeRes}{\Inductance} \ClkPeriod \Bigr)} &  \text{if } \current \geq \curBor (\inpVolt),
	\end{dcases}
\end{aligned}
\end{equation}
where  the different physical variables and parameters are as follows:
\begin{itemize}[leftmargin=*, label=\(\circ\), align=left]
	\item  \(\ClkPeriod\) is the time period of clock pulse,
	\item \(\current\) is the inductor current at \(t^{th}\) instant of time, and \(\inpVolt\) is the control input voltage,
	\item  \(\Resistance\) is the load resistance, \(\Inductance\) is the inductance of inductor, and  \(\DiodeRes\) is the  diode resistance,
	\item \(\curRef\) is the reference current, \(\curBor (\inpVolt[]) \Let \Big(\curRef - \dfrac{\inpVolt[]}{\Resistance}\Big) \exp \Big(\dfrac{\Resistance  \ClkPeriod}{\Inductance}\Big) + \dfrac{\inpVolt[]}{\Resistance}\) is the borderline current at input voltage \(\inpVolt[]\), and \(\vltBor (\current[]) \Let \Resistance \Big(1 - \exp \Big(\dfrac{\Resistance \ClkPeriod}{\Inductance} \Big) \Big)^{-1} \Big(\current[] - \curRef \exp \Big(\dfrac{\Resistance \ClkPeriod}{\Inductance}\Big)\Big)\) is the borderline voltagei when the current is \(\current[]\).
\end{itemize}

For more details on the dynamics, the reader may refer to \cite{Banarjee2000}. To further compress the notation, we define
\begin{itemize}[leftmargin=*, label=\(\circ\), align=left]
	\item \(\sysParA \Let \exp\Big(\dfrac{-\Resistance \ClkPeriod}{\Inductance}\Big)\), \(\sysParB \Let \dfrac{1 - \sysParA}{\Resistance}\), \(\sysParC \Let \Resistance \curRef \exp \Big(-\dfrac{\big(\Resistance + \DiodeRes\big)}{\Inductance}\ClkPeriod \Big)\),
	\item \(\sysParD \Let \curRef \exp \Big(-\dfrac{(\Resistance + \DiodeRes)}{\Inductance}\ClkPeriod\Big)\), \( \refVlt = \curRef \Resistance\)
\end{itemize}

Assuming that the diode resistance is very small compared to the load resistance, \(\frac{\Resistance + \DiodeRes}{\Resistance} \approx 1\), the dynamics is represented in a compact form as
\begin{equation}
\label{eq:buck dynamics_simpl}
\begin{aligned}
	\current[t+1]= \dynamics[](\current, \inpVolt) = \begin{dcases}
														\sysParA \current + \sysParB \inpVolt \quad & \text{if } \current \leq \curBor (\inpVolt),\\
														- \sysParC \Big(\frac{\current}{\inpVolt - \refVlt}\Big) + \sysParD \Big(\frac{\inpVolt}{\inpVolt - \refVlt}\Big) &  \text{if } \current \geq \curBor (\inpVolt).
	\end{dcases}
\end{aligned}
\end{equation}
Observe that the map \(\R \times \R \ni (\current[], \inpVolt[]) \mapsto \dynamics[](\current[], \inpVolt[]) \in \R\) in \eqref{eq:buck dynamics_simpl} governing the dynamics of the system is nonsmooth at \(\current[] = \curBor (\inpVolt[]).\)


We consider the following optimal control problem where we minimize the power loss in the load and input voltage while transferring the system from a given initial state \(\current[0] = \bar{i}\) to a given final state \(\current[\hor] = \current[\mathrm{f}]\) in \( \hor\) time steps.   
\begin{equation}
\label{eq:opti_cntrl_prb_buck_converter}
\begin{aligned}
	& \minimize && \sum\limits_{t=0}^{\hor-1} \current^{2}+\inpVolt^{2}\\
			& \sbjto && \begin{cases}
				\text{dynamics } \eqref{eq:buck dynamics_simpl}\\
				\current[0] = \current[i],\\
				\current[\hor] = \current[f].
			\end{cases}
\end{aligned}
\end{equation}

If \((\seq{\optCur}{t}{0}{\hor}, \seq{\optVlt}{t}{0}{\hor-1})\) is a solution of the problem \eqref{eq:opti_cntrl_prb_buck_converter}, then the first order necessary conditions in Theorem \ref{th:nonsmooth pmp} translate to the following conditions: there exist \(\multipliercost \in \set{0, 1}\) and a sequence \((\adjoint)_{t=0}^{\hor-1}\), not all zero simultaneously, satisfying
\begin{itemize}[leftmargin=*, label=\(\circ\), align=left]
	\item \emph{state and adjoint dynamics:}
		\begin{align}
		\label{eq:EX3 State and adjoint dynamics}
			\intertext{for \(t = 0, \ldots, \hor-1,\)} \nonumber \\
			& \optCur[t+1] = 
			  \begin{dcases}
				\sysParA \optCur + \sysParB \optVlt \quad & \text{if } \optCur < \curBor (\optVlt),\\
				- \sysParC \Big(\frac{\optCur}{\optVlt - \refVlt}\Big) + \sysParD  \Big(\frac{\optVlt}{\optVlt - \refVlt}\Big) &  \text{if } \optCur \geq \curBor (\optVlt),
			  \end{dcases}\\
			\intertext{for \(t = 1, \ldots, \hor-1\),} \nonumber \\
			\label{eq:EX3 adjoint dynamics}
			& \begin{dcases}
				\adjoint[t-1] = -2 \multipliercost \optCur - \multiplierstNcs + \sysParA \adjoint \quad  & \text{if } \optCur < \curBor,\\
				\adjoint[t-1] = -2 \multipliercost \optCur - \multiplierstNcs - \frac{\sysParC}{\optVlt - \refVlt} \adjoint & \text{if } \optCur > \curBor (\optVlt),\\
				- \frac{\refVlt \adjoint}{\optVlt - \refVlt} \leq \frac{\adjoint[t-1] + \multiplierstNcs + 2 \multipliercost \optCur}{\sysParA} \leq \adjoint & \text{if } \optCur = \curBor (\optVlt);
			  \end{dcases}
		\end{align}
		Notice that when \(\optCur \neq \curBor (\optVlt)\), the dynamics is smooth with respect to the states and the adjoint dynamics is given by the classical discrete time PMP and when \(\optCur = \curBor (\optVlt)\), the adjoint dynamics is given by the set inclusion in Theorem \ref{th:nonsmooth pmp}.

	\item \emph{transversality}
		\begin{equation}
		\label{eq:EX3 transversality}
		\begin{aligned}
		& \begin{dcases}
			2 \multipliercost \optCur[0] + \multiplierstNcs[0] = \sysParA \adjoint[0] \quad & \text{if } \optCur < \curBor (\optVlt),\\
			2 \multipliercost \optCur[0] + \multiplierstNcs[0] = - \sysParA \frac{\refVlt}{\optVlt - \refVlt} \adjoint[0] & \text{if } \optCur > \curBor (\optVlt),\\
			- \frac{\refVlt \adjoint[0]}{\optVlt[0] - \refVlt} \leq \frac{\multiplierstNcs[0] + 2 \multipliercost \optCur[0]}{\sysParA} \leq \adjoint[0] & \text{if } \optCur = \curBor (\optVlt),
		\end{dcases} \quad \text{and}\\
			& \quad \adjoint[\hor] = - \multiplierstNcs[\hor];
 		\end{aligned}
 		\end{equation}
	\item \emph{Hamiltonian maximization}
		\begin{equation}
		\label{eq:EX3 H maximization}
		\begin{aligned}
		\begin{dcases}
			\multipliercost \optVlt = \frac{ \sysParC \optCur + \sysParD \refVlt}{2 (\optVlt - \refVlt)^{2}} \adjoint \quad & \text{if } \optVlt < \vltBor (\optCur),\\
			\multipliercost \optVlt = \frac{\sysParB}{2} \adjoint & \text{if } \optVlt > \vltBor (\optCur),\\
			\sysParB  \adjoint \leq 2 \multipliercost \optVlt \leq \frac{\sysParC \optCur + \sysParD \refVlt}{(\optVlt - \refVlt)^{2}} \adjoint & \text{if } \optVlt = \vltBor (\optCur).
		\end{dcases}
		\end{aligned}
		\end{equation}
		The dynamics is smooth with respect to the control when \(\optVlt \neq \vltBor (\optCur)\) and hence the Hamiltonian maximization condition is same as the one given by Corollary \ref{cor:smooth sys PMP}. If \(\optVlt = \vltBor (\optCur)\), then the system dynamics is not differentiable with respect to the control and the Hamiltonian maximization condition corresponds to a set inclusion.
\end{itemize}

The adjoint equation \eqref{eq:Ex2 adjoint dynamics} and the transversality condition \eqref{eq:Ex2 transversality} for the problem \eqref{eq:Ex2 Opt Con Prob}, and the adjoint equation  \eqref{eq:EX3 adjoint dynamics}, the transversality condition \eqref{eq:EX3 transversality} and the Hamiltonian maximization condition \eqref{eq:EX3 H maximization} for the problem \eqref{eq:opti_cntrl_prb_buck_converter} involve set-theoretic inclusion conditions. Most commercially available algorithms for solving two point boundary value problems involve equalities instead of inclusions, and can, therefore, not be applied directly to these problems. However, techniques based on semismooth Newtons methods (see, e.g., \cite{ref:Hin-10}) appear to be promising directions for our problems; numerical algorithms based on these methods for synthesis of optimal control trajectories are under investigation and will be reported subsequently.

	\appendix
		\section{Nonsmooth Optimization and Convex Cones}
\label{Appendix A}
\begin{theorem} \cite[Theorem 10.47 on p.\ 221]{Clarke-13}
	\label{Clarke's multiplier rule}
	Consider an optimization problem
	\begin{equation}
	\label{gen opti problem}
	\begin{aligned}
		& \minimize && \Cost(\proc)\\
		& \sbjto && \begin{cases}
			\equalconstrs(\proc) = 0,\\
			\inequalconstrs(\proc) \leq 0,\\
			\proc \in \genSet,\\
		\end{cases}		
	\end{aligned}
	\end{equation}
	where the functions governing cost, equality constraints, and inequality constraints are given by the maps \(\R^{\genDim} \ni \proc \mapsto \Cost(\proc)\in \R\), \(\R^{\genDim} \ni \proc \mapsto \equalconstrs(\proc)\in \R^{\dimequal}\), and \( \R^{\genDim} \ni \proc \mapsto \inequalconstrs(\proc) \in \R^{\diminequal}\) respectively, and \(\genSet\) is a closed subset of \(\R^{\genDim}\). If \(\optimal{z}\) solves \eqref{gen opti 	problem} and \(\Cost,\equalconstrs,\inequalconstrs\) are Lipschitz near \(\optimal{z}\), then there exist \((\multipliercost, \multiplierequal, \multiplierinequal) \in \R \times \R^{\dimequal} \times \R^{\diminequal}\) satisfying
	\begin{enumerate}[label={\textup{(\roman*)}}, leftmargin=*, align=left, widest=iii]
		\item \label{Clarke:non-triviality} the nontriviality condition
			\[
				(\multipliercost,\multiplierequal,\multiplierinequal) \neq 0;
			\]

		\item \label{Clarke:non-negativity and complementary slackness} non-negativity and complementary slackness
			\[
				\multipliercost \in \set{0, 1}, \quad \multiplierinequal \geq 0, \quad \inprod{\multiplierinequal}{\inequalconstrs(\optimal{z})} = 0;
			\]

		\item \label{Clarke:stationary condition} and the stationarity condition
			\[
				0 \in \ggrad{\Bigl(\multipliercost \Cost +\inprod{\multiplierequal}{\equalconstrs} +\inprod{\multiplierinequal}{\inequalconstrs}\Bigr)}{\optimal{z}} +\Ncs_{\genSet}(\optimal{z}).
			\]
	\end{enumerate}
\end{theorem}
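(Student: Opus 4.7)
The plan is to prove Clarke's multiplier rule by exact penalization of the equality and inequality constraints combined with Ekeland's variational principle, thereby reducing the problem at each stage to a set-constrained minimization over \(\genSet\) alone, to which the basic necessary condition \(0 \in \ggrad{f}{\proc} + \Ncs_{\genSet}(\proc)\) for a Lipschitz minimum over a closed set applies. The multipliers are then recovered as limits of appropriately normalized penalty coefficients.

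Concretely, the first step is to introduce, for each \(k \in \N\), the locally Lipschitz penalty
\[
	F_k(\proc) \Let \Bigl(\bigl(\Cost(\proc) - \Cost(\optimal{z}) + \epsilon_k\bigr)_+\Bigr)^2 + \norm{\equalconstrs(\proc)}^2 + \norm{\inequalconstrs(\proc)_+}^2,
\]
with \(\epsilon_k \downarrow 0\) at a rate to be fixed, where \(y_+ \Let \max\{y, 0\}\) is applied componentwise. Since \(F_k \ge 0\) and \(F_k(\optimal{z}) = \epsilon_k^2\), the point \(\optimal{z}\) is an \(\epsilon_k^2\)-near minimizer of \(F_k\) over the closed set \(\genSet\). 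Ekeland's variational principle then produces \(\proc_k \in \genSet\) near \(\optimal{z}\) that exactly minimizes a perturbation of \(F_k\) by a Lipschitz term \(\lambda_k \norm{\proc - \proc_k}\) over \(\genSet\); the Ekeland parameters \(\epsilon_k, \lambda_k\) will be fixed at the end. The basic necessary condition for this Lipschitz set-constrained minimum, together with the sum and chain rules of Clarke's subdifferential calculus, yields
\[
	0 \in 2 \alpha_k \ggrad{\Cost}{\proc_k} + 2 \ggrad{\inprod{\beta_k}{\equalconstrs(\cdot)}}{\proc_k} + 2 \ggrad{\inprod{\gamma_k}{\inequalconstrs(\cdot)}}{\proc_k} + \lambda_k \closure(\ball[1]{0}) + \Ncs_{\genSet}(\proc_k),
\]
where \(\alpha_k \Let (\Cost(\proc_k) - \Cost(\optimal{z}) + \epsilon_k)_+ \ge 0\), \(\beta_k \Let \equalconstrs(\proc_k)\), and \(\gamma_k \Let \inequalconstrs(\proc_k)_+ \ge 0\) componentwise.

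The final step is normalization and limit passage. Set \(\mu_k \Let \sqrt{\alpha_k^2 + \norm{\beta_k}^2 + \norm{\gamma_k}^2}\) and divide through by \(2 \mu_k\); this is legitimate because \(\mu_k > 0\) (if \(\mu_k = 0\) then \(\proc_k\) would be feasible with \(\Cost(\proc_k) \le \Cost(\optimal{z}) - \epsilon_k\), contradicting the optimality of \(\optimal{z}\)), and because the Clarke normal cone is a cone that absorbs positive scalars. Defining \(\multipliercost_k \Let \alpha_k/\mu_k\), \(\multiplierequal_k \Let \beta_k/\mu_k\), \(\multiplierinequal_k \Let \gamma_k/\mu_k\), the normalized triple lies on the unit sphere, so a subsequence converges to a nonzero limit \((\multipliercost, \multiplierequal, \multiplierinequal)\), which will deliver non-triviality \ref{Clarke:non-triviality}. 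Signs \(\multipliercost \ge 0\) and \(\multiplierinequal \ge 0\) are inherited from the construction, a final rescaling enforces \(\multipliercost \in \{0, 1\}\), and complementary slackness follows because any index \(j\) with \(\inequalconstrs_j(\optimal{z}) < 0\) forces \(\gamma_{k, j} = 0\) for all large \(k\) by continuity, hence \(\multiplierinequal_j = 0\).

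The main obstacle is two-fold. First, the Ekeland parameters must be tuned so that the scaled perturbation \(\lambda_k/\mu_k\) vanishes in the limit; since \(\mu_k \gtrsim \alpha_k \gtrsim \epsilon_k/2\) for \(\proc_k\) close to \(\optimal{z}\) (which holds because Ekeland gives \(\norm{\proc_k - \optimal{z}} \le \lambda_k\)), the choice \(\epsilon_k = 1/k^2\) and \(\lambda_k = 1/k^3\) gives \(\lambda_k/\mu_k \lesssim 1/k \to 0\). Second, and more delicate, the limit passage inside the moving sets \(\ggrad{\inprod{\multiplierequal_k}{\equalconstrs(\cdot)}}{\proc_k}\), \(\ggrad{\inprod{\multiplierinequal_k}{\inequalconstrs(\cdot)}}{\proc_k}\), and \(\Ncs_{\genSet}(\proc_k)\) requires the closed-graph (upper-semicontinuity) property of Clarke's subdifferential as a set-valued map, \emph{jointly} in the base point and in the linear multiplier, together with the analogous closedness of the Clarke normal cone map; these are standard but nontrivial facts from Clarke's calculus that can be quoted from \cite[Chap.\ 10]{Clarke-13}. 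With both ingredients in place, letting \(k \to \infty\) yields the stationarity condition \ref{Clarke:stationary condition} and completes the proof.
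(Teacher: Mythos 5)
First, a point of reference: the paper does not prove this statement at all --- it is imported verbatim as \cite[Theorem 10.47]{Clarke-13} and used as a black box in the appendix --- so there is no in-paper argument to compare yours against. Your overall architecture (exact penalization of the functional constraints, Ekeland's variational principle to obtain exact minimizers of perturbed penalties over \(\genSet\), the basic stationarity condition \(0 \in \ggrad{F}{\proc} + \Ncs_{\genSet}(\proc)\) for a Lipschitz minimum over a closed set, and a normalized limit passage) is indeed the classical route to this multiplier rule, and your non-triviality, sign, and complementary-slackness arguments at the end are sound.

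The genuine gap is in the normalization step, and it is caused by squaring the penalties rather than taking the Euclidean norm of the violation vector. With \(F_k\) as you define it, the stationarity inclusion at the Ekeland point \(\proc_k\) reads \(0 \in 2\alpha_k\ggrad{\Cost}{\proc_k} + \cdots + c_k\,\closure(\ball[1]{0}) + \Ncs_{\genSet}(\proc_k)\), where by Ekeland the perturbation coefficient is \(c_k = \epsilon_k^2/\delta_k\) and \(\delta_k\) is the prescribed distance bound \(\norm{\proc_k - \optimal{z}} \le \delta_k\); you cannot take the distance bound and the perturbation coefficient both equal to \(\lambda_k\) unless \(\lambda_k = \epsilon_k\), so your choice \(\epsilon_k = 1/k^2\), \(\lambda_k = 1/k^3\) conflates the two. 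To secure your lower bound \(\mu_k \ge \alpha_k \ge \epsilon_k/2\) you need \(L\delta_k \le \epsilon_k/2\), i.e.\ \(\delta_k \le \epsilon_k/(2L)\) with \(L\) the Lipschitz rank of \(\Cost\); but then \(c_k \ge 2L\epsilon_k\), and in the worst case \(\mu_k\) is itself of order \(\epsilon_k\) (e.g.\ when the constraint violations at \(\proc_k\) are negligible), so the normalized error \(c_k/(2\mu_k)\) is bounded below by a constant of order \(L\) and does not vanish. A stationarity condition polluted by a ball of fixed radius is vacuous for Lipschitz data, so the limit passage does not deliver \ref{Clarke:stationary condition}. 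The standard repair --- and, in essence, Clarke's own proof --- is to penalize by the square root, \(\tilde F_k \Let \sqrt{F_k}\): optimality of \(\optimal{z}\) makes \(\tilde F_k\) strictly positive on \(\genSet\), hence Lipschitz where needed, the chain rule then produces multipliers \((\alpha_k, \beta_k, \gamma_k)\) \emph{already} on the unit sphere with the correct signs, no division by \(\mu_k\) is required, and Ekeland with error \(\epsilon_k\) and distance bound \(\sqrt{\epsilon_k}\) gives a perturbation coefficient \(\sqrt{\epsilon_k} \to 0\) outright. One further caveat for the limit passage: closedness of the graph of \(\proc \mapsto \Ncs_{\genSet}(\proc)\) is not a safe fact to quote for the Clarke normal cone (it is the limiting normal cone that enjoys it); the usual workaround is to keep an explicit distance-function penalty \(K\,\dist{\genSet}\), exploit the closed graph and local boundedness of \(\ggrad{\dist{\genSet}}{\cdot}\), and only convert to \(\Ncs_{\genSet}(\optimal{z})\) at the very end.
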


\begin{theorem}\cite[Theorem 3 on p.\ 7]{ref:Bol-75}
	\label{th:closure of convex hull}
	Let \( K_{1}, \ldots, K_{s}\)	 be closed convex cones in \(\R^{n}\) with vertex at \(0\). If the cone \(K = \chull \bigl(\bigcup_{i=1}^{s} K_{i} \bigr)\) is not closed, then there are vectors \(\lambda_{1} \in K_{1}, \ldots, \lambda_{s} \in K_{s}\), not all of them zero, such that \(\sum_{i=1}^{s} \lambda_{i} = 0\).
\end{theorem}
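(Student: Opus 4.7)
The plan is to argue the contrapositive via a normalization-and-compactness argument, after first reducing $K$ to a Minkowski sum of the $K_{i}$'s. First I would show that $K = K_{1} + K_{2} + \cdots + K_{s}$. One inclusion is immediate: if $\lambda_{i} \in K_{i}$, then writing $\sum_{i} \lambda_{i} = \sum_{i} (1/s)(s \lambda_{i})$, with each $s\lambda_{i} \in K_{i}$ by the cone property, exhibits this sum as a convex combination of elements of $\bigcup_{i} K_{i}$. For the other direction, any $x \in \chull(\bigcup_{i} K_{i})$ is a convex combination $x = \sum_{j} \alpha_{j} y_{j}$ with $y_{j} \in K_{i_{j}}$; grouping by the index $i$ and using that each $K_{i}$ is a convex cone (so a non-negative linear combination of its own elements stays in $K_{i}$) would collapse the contribution from $K_{i}$ into a single vector $\lambda_{i} \in K_{i}$, yielding $x = \sum_{i} \lambda_{i}$.

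Next, assuming $K$ is not closed, I would pick a sequence $(x^{(k)}) \subset K$ with $x^{(k)} \to x^{*} \notin K$, use the previous representation to write $x^{(k)} = \sum_{i=1}^{s} \lambda_{i}^{(k)}$ with $\lambda_{i}^{(k)} \in K_{i}$, and set $r_{k} \Let \max_{i} \norm{\lambda_{i}^{(k)}}$. The task is then to show that $r_{k} \to \infty$ along a subsequence. Otherwise the sequences $(\lambda_{i}^{(k)})$ would all be bounded, and successive extraction of subsequences would yield limits $\lambda_{i}^{(k)} \to \lambda_{i}^{*}$; closedness of each $K_{i}$ would give $\lambda_{i}^{*} \in K_{i}$, so that $x^{*} = \sum_{i} \lambda_{i}^{*} \in K$, contradicting $x^{*} \notin K$.

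With $r_{k} \to \infty$ secured, I would normalize by setting $\mu_{i}^{(k)} \Let \lambda_{i}^{(k)} / r_{k}$, which lies in $K_{i}$ by the cone property. Since $\norm{\mu_{i}^{(k)}} \leq 1$ and the maximum over $i$ equals $1$, compactness together with another diagonal extraction would produce limits $\mu_{i}^{(k)} \to \mu_{i}^{*} \in K_{i}$ (by closedness) with $\max_{i} \norm{\mu_{i}^{*}} = 1$, so that not all of the $\mu_{i}^{*}$ vanish. Finally, $\sum_{i} \mu_{i}^{*} = \lim_{k} x^{(k)} / r_{k} = 0$, because $(x^{(k)})$ is bounded while $r_{k} \to \infty$, which yields the desired non-trivial relation. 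The only mildly delicate step is the opening Minkowski-sum representation $K = \sum_{i} K_{i}$, which uses the cone and convexity properties in tandem; after that, the remainder is the standard normalization-and-compactness recipe, and I foresee no serious obstacle.
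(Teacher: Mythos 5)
Your proof is correct. Note, however, that the paper does not prove this statement at all --- it is imported verbatim as an auxiliary result, cited to Boltyanskii (Theorem 3, p.\ 7 of \cite{ref:Bol-75}) --- so there is no in-paper argument to compare against. Your route is the standard proof of this classical fact: the identification \(K = K_{1} + \cdots + K_{s}\) is justified exactly as you describe (convexity plus the cone property collapse each group of a convex combination into a single element of \(K_{i}\), with the empty group handled by \(0 \in K_{i}\)), and the normalization-and-compactness step is watertight, since \(\max_{i} \norm{\mu_{i}^{(k)}} = 1\) passes to the limit along the extracted subsequence while \(x^{(k)}/r_{k} \to 0\) because the convergent sequence \((x^{(k)})\) is bounded and \(r_{k} \to \infty\).
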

\begin{theorem}\cite[Theorem 4 on p.\ 8]{ref:Bol-75}
	\label{th:Dual cone of intersection is union of dual cone} Let \( K_1, \ldots, K_s \) be closed convex cones in \(\R^\dimst\) with vertex at \(\st[0]\). Then \(\dualCone{ \left(\bigcap_{i=1}^{s} K_i\right)}=\closure \left(\chull \left( \bigcup_{i=0}^{s} \dualCone{K_i} \right)\right)\).

\end{theorem}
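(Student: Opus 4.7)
The plan is to derive the identity from the bipolar theorem for closed convex cones, which asserts that $\dualCone{(\dualCone{C})} = C$ for every closed convex cone $C \subset \R^{\dimst}$ with vertex at the origin. By translating the ambient space by $-\st[0]$, I will assume without loss of generality that the common vertex is the origin; the inner-product condition defining polarity is invariant under this translation. With that reduction in hand, set $D \Let \closure\bigl(\chull\bigl(\bigcup_{i=1}^{s} \dualCone{K_i}\bigr)\bigr)$. By construction $D$ is closed and convex, and it is a cone at the origin because each $\dualCone{K_i}$ is a cone at the origin and both the convex-hull and closure operations preserve that property.

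The heart of the argument is to compute $\dualCone{D}$ directly. For any $y \in \R^{\dimst}$, the condition $y \in \dualCone{D}$ reads $\inprod{y}{x} \le 0$ for all $x \in D$. Because the map $x \mapsto \inprod{y}{x}$ is a continuous linear functional, the half-space $\set{x \in \R^{\dimst} \suchthat \inprod{y}{x} \le 0}$ is itself closed and convex; consequently the inequality $\inprod{y}{x} \le 0$ holds on all of $D$ if and only if it holds on the generating set $\bigcup_{i=1}^{s} \dualCone{K_i}$, since passing to convex combinations and to limits of sequences cannot violate a closed convex constraint. This latter condition is in turn equivalent to $y \in (\dualCone{K_i})^\circ$ for every $i = 1, \ldots, s$. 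Applying the bipolar theorem to each closed convex cone $K_i$ at the origin yields $(\dualCone{K_i})^\circ = K_i$, whence $\dualCone{D} = \bigcap_{i=1}^{s} K_i$.

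To finish, I apply the bipolar theorem a second time, now to the closed convex cone $D$ itself, obtaining $D = \dualCone{(\dualCone{D})} = \dualCone{\bigl(\bigcap_{i=1}^{s} K_i\bigr)}$, which is exactly the asserted equality. The only non-trivial ingredient is the bipolar theorem, which rests on the Hahn-Banach separation theorem; in the finite-dimensional setting $\R^{\dimst}$ this is completely standard, and I would simply cite it (e.g., from Rockafellar's \emph{Convex Analysis}, or from \cite{Clarke-13} already used in the paper) rather than reprove it. The only genuine book-keeping hazard is the handling of a non-zero common vertex $\st[0]$: one must verify that the polar operation as defined in the preliminaries is covariant with the translation $x \mapsto x - \st[0]$, but this is immediate from the definition, so I do not anticipate any serious obstacle beyond this minor check.
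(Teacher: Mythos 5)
The paper itself offers no proof of this statement: it is quoted directly from Boltyanskii \cite{ref:Bol-75} (Theorem~4, p.~8) as an auxiliary fact, so there is no in-paper argument to compare yours against. That said, your proof is correct, and it is the standard duality argument: set \(D \Let \closure\bigl(\chull\bigl(\bigcup_{i=1}^{s}\dualCone{K_i}\bigr)\bigr)\), observe that a closed half-space \(\set{x \suchthat \inprod{y}{x}\le 0}\) contains \(D\) exactly when it contains the generating union, deduce \(\dualCone{D}=\bigcap_{i=1}^{s}\bigl(\dualCone{K_i}\bigr)^{\circ}=\bigcap_{i=1}^{s}K_i\) by the bipolar theorem, and apply the bipolar theorem once more to \(D\). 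Each step checks out: \(D\) is indeed a closed convex cone at the origin (convex hull and closure preserve the cone property), and the bipolar theorem applies because all the cones involved are nonempty (the paper's definition of a cone requires nonemptiness, and a nonempty cone at the origin contains the origin). Three small remarks. First, the index \(i=0\) in the union in the statement is evidently a typo for \(i=1\); you silently use the correct range. Second, your translation-invariance check is even slightly easier than you suggest: with the paper's definition \(\dualCone{\ccone}=\set{y \suchthat \inprod{y}{\genVar-\vertex}\le 0 \text{ for all } \genVar\in\ccone}\), the polar of \(K_i\) is automatically a cone with vertex at the origin whatever the vertex \(\st[0]\) of \(K_i\) is, so after replacing \(K_i\) by \(K_i-\st[0]\) nothing on the right-hand side changes and the left-hand side is unaffected because \(\bigcap_i K_i - \st[0]=\bigcap_i(K_i-\st[0])\). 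Third, you should state explicitly that \(\bigcap_{i=1}^{s}K_i\) is itself a nonempty closed convex cone (it contains the origin after translation), since that is what licenses the final application of the bipolar theorem; this is immediate but worth a sentence.
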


\section{Auxiliary Lemmas}
\label{sec:Lemmas}

\begin{lemma}\cite[Chapter 10 on p.\ 201]{Clarke-13}
\label{ggrad_of Lipschitz_+_C1}
	Consider two functions \(\function_{1}, \function_{2}: \R^{\dimProc} \lra \R\) such that in an open neighbourhood of \(\proc \in \R^{\dimProc}\), \(\function_{1}\) is Lipschitz continuous and \(\function_{2}\) is continuously differentiable. If  \(\ggrad{\function}{\cdot}\) denotes the generalized differential then,
	\[
		\ggrad{(\function_{1} + \function_2)}{\proc} = \ggrad{\function_{1}}{\proc} + \derivative{\function_2}{\proc} (\proc).
	\]
\end{lemma}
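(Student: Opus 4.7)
The plan is to reduce the set equality of generalized gradients to an identity for generalized directional derivatives, and then exploit the polarity-style definition of \(\ggrad{\cdot}{\cdot}\) to pass from the directional derivative identity to the set identity. Throughout I write \(h^\circ(x;v) = \gdd[h]{x}{v}\) for brevity and use the standard definition
\[
    h^\circ(x;v) \Let \limsup_{y \to x,\; t \downarrow 0} \frac{h(y + tv) - h(y)}{t}
\]
from the excerpt.

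First I would establish the auxiliary identity
\[
    (\function_1 + \function_2)^\circ(\proc;\dir) = \function_1^\circ(\proc;\dir) + \inprod{\derivative{\function_2}{\proc}(\proc)}{\dir}
    \quad \text{for all } \dir \in \R^{\dimProc}.
\]
Since \(\function_2\) is continuously differentiable in a neighbourhood of \(\proc\), the difference quotient \(\bigl(\function_2(y + t\dir) - \function_2(y)\bigr)/t\) admits a bona fide limit as \((y, t) \to (\proc, 0^+)\), namely \(\inprod{\derivative{\function_2}{\proc}(\proc)}{\dir}\); this is a direct consequence of the mean value theorem together with the continuity of \(\derivative{\function_2}{\proc}\) at \(\proc\). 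I would then invoke the elementary fact that \(\limsup (a_\alpha + b_\alpha) = \limsup a_\alpha + \lim b_\alpha\) whenever \(b_\alpha\) converges, which converts the limsup defining \((\function_1 + \function_2)^\circ(\proc;\dir)\) into the displayed identity above. The Lipschitz hypothesis on \(\function_1\) guarantees that \(\function_1^\circ(\proc;\dir)\) is a finite real number, so no \(\pm\infty\) ambiguities arise.

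Next I would derive the set equality from the directional derivative identity. By definition, \(\xi \in \ggrad{(\function_1+\function_2)}{\proc}\) iff \(\inprod{\xi}{\dir} \le (\function_1+\function_2)^\circ(\proc;\dir)\) for all \(\dir \in \R^{\dimProc}\). Using the identity just established, this is equivalent to
\[
    \inprod{\xi - \derivative{\function_2}{\proc}(\proc)}{\dir} \le \function_1^\circ(\proc;\dir) \quad \text{for all } \dir \in \R^{\dimProc},
\]
which is the defining inequality for \(\xi - \derivative{\function_2}{\proc}(\proc) \in \ggrad{\function_1}{\proc}\). Rearranging produces the inclusion \(\ggrad{(\function_1+\function_2)}{\proc} \subseteq \ggrad{\function_1}{\proc} + \derivative{\function_2}{\proc}(\proc)\), and the reverse inclusion follows by reading the same chain of equivalences backwards.

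The main technical obstacle, as I see it, is the limsup-versus-sum step: one must be careful that \(\function_2\) being merely differentiable at \(\proc\) is insufficient, and that the \(C^1\) hypothesis is actually used to obtain a uniform (in \(y\) near \(\proc\)) limit of the difference quotient. Once this uniform convergence is in hand via the mean value theorem and continuity of \(\derivative{\function_2}{\proc}\), the rest of the argument is essentially book-keeping with the polarity definition of the generalized gradient, and the two inclusions follow symmetrically.
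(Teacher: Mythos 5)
Your proof is correct. Note that the paper itself does not prove this lemma at all \textemdash\ it is quoted verbatim from Clarke's book with a citation \textemdash\ and your argument is essentially the standard textbook one: reduce the set identity to the identity \( (\function_1+\function_2)^\circ(\proc;\dir)=\function_1^\circ(\proc;\dir)+\inprod{\derivative{\function_2}{\proc}(\proc)}{\dir} \) for generalized directional derivatives (using the mean value theorem and continuity of the gradient to get a genuine joint limit of the difference quotients of \(\function_2\) as \((y,t)\to(\proc,0^+)\), with finiteness of \(\function_1^\circ\) supplied by the Lipschitz hypothesis), and then transfer it to the generalized gradients through their support-function definition, where the chain of equivalences yields both inclusions at once. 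Your emphasis that mere differentiability of \(\function_2\) at the point would not suffice \textemdash\ the \(C^1\) (strict differentiability) hypothesis is what makes the joint limit exist and upgrades the one-sided sum rule to an equality \textemdash\ is exactly the right technical point.
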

\begin{lemma}
\label{Tcs to intersection is intersection of Tcs}
	Let \(N\) be a positive integer,  \(\genSet_{i} \subset \R^{\dimctrl} \) be closed and nonempty sets for \(i = 1, \ldots, N\). Then \(\Tcs_{\genSet}(\optProc) = \bigcap_{i=1}^{N} \Tcs_{\genSet_{i}}(\optProc)\) for  \(\optProc \in \genSet \Let \bigcap_{i=1}^{N} \genSet_{i} \),
\end{lemma}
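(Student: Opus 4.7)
The plan is to establish the equality by proving the two inclusions separately, using the sequential characterization of the Clarke tangent cone equivalent to the distance-function definition introduced in the text: $v \in \Tcs_{\genSet}(\optProc)$ iff for every sequence $\genSet \ni z_n \to \optProc$ and every $t_n \downarrow 0$ there exist $v_n \to v$ with $z_n + t_n v_n \in \genSet$.

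For $\Tcs_{\genSet}(\optProc) \subseteq \bigcap_{i=1}^{N} \Tcs_{\genSet_{i}}(\optProc)$ I would fix $v \in \Tcs_{\genSet}(\optProc)$, an index $i$, a sequence $\genSet_{i} \ni y_n \to \optProc$, and $t_n \downarrow 0$. The idea is to replace $y_n$ by a nearest point $\tilde y_n \in \genSet$ (which exists by closedness of $\genSet$ and the fact that $\optProc \in \genSet$, and which satisfies $\tilde y_n \to \optProc$), apply the sequential property of $v \in \Tcs_{\genSet}(\optProc)$ along $(\tilde y_n, t_n)$ to produce $\tilde v_n \to v$ with $\tilde y_n + t_n \tilde v_n \in \genSet \subseteq \genSet_{i}$, and then set $v_n \Let \tilde v_n + (\tilde y_n - y_n)/t_n$, so that $y_n + t_n v_n = \tilde y_n + t_n \tilde v_n \in \genSet_{i}$. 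The delicate point is that one needs $\dist{\genSet}(y_n) = o(t_n)$ in order to conclude $v_n \to v$, and this is exactly where a qualification coupling the $\genSet_{i}$'s must enter.

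For the reverse inclusion $\bigcap_{i=1}^{N} \Tcs_{\genSet_{i}}(\optProc) \subseteq \Tcs_{\genSet}(\optProc)$ I would fix $v$ lying in each $\Tcs_{\genSet_{i}}(\optProc)$ and pick $\genSet \ni z_n \to \optProc$, $t_n \downarrow 0$. Since each $z_n$ belongs to every $\genSet_{i}$, applying the hypothesis for each $i$ yields sequences $v_{i,n} \to v$ with $z_n + t_n v_{i,n} \in \genSet_{i}$. The crux is reconciling these $N$ perturbations into a single $v_n \to v$ such that $z_n + t_n v_n$ lies simultaneously in every $\genSet_{i}$; my plan is an inductive projection scheme in which, starting from $v_{1,n}$, the candidate point is successively projected onto each $\genSet_{i}$ using closedness, with the accumulated correction controlled by a metric subtransversality bound of the form $\dist{\genSet}(y) \leq C \max_{i} \dist{\genSet_{i}}(y)$ for $y$ in a neighborhood of $\optProc$.

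The main obstacle, which in fact governs both directions, is precisely the need for such a subtransversality estimate: in full generality the reverse inclusion can fail without it (for instance when two of the $\genSet_{i}$ meet tangentially at $\optProc$), so establishing this metric inequality for the $\genSet_{i}$'s at hand is the real content of the proof. Once it is available, the limit manipulations in the two inclusions above go through cleanly and the equality follows.
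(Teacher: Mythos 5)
Your proposal is not a proof: both inclusions are explicitly reduced to metric estimates \textemdash{} \(\dist{\genSet}(y_n)=o(t_n)\) in the forward direction and a subtransversality bound \(\dist{\genSet}(y)\le C\max_i \dist{\genSet_{i}}(y)\) in the reverse one \textemdash{} and you leave both unproven, acknowledging that they are ``the real content of the proof.'' Nothing has therefore been established. Worse, no such estimates can be supplied under the stated hypotheses, because the lemma is false for general closed sets. Take \(N=2\) in \(\R^2\): for \(\genSet_{1}=\set{x \suchthat x_2=0}\) and \(\genSet_{2}=\set{x \suchthat x_2=x_1^2}\), which meet tangentially at the origin, one has \(\genSet=\{0\}\), so \(\Tcs_{\genSet}(0)=\{0\}\) while \(\Tcs_{\genSet_{1}}(0)\cap\Tcs_{\genSet_{2}}(0)\) is the whole \(x_1\)-axis; and for \(\genSet_{1}\) the \(x_1\)-axis and \(\genSet_{2}\) the union of the two coordinate axes, \(\Tcs_{\genSet}(0)\) is the \(x_1\)-axis while \(\Tcs_{\genSet_{2}}(0)=\{0\}\), so even the ``easy'' inclusion \(\Tcs_{\genSet}(\optProc)\subseteq\bigcap_i\Tcs_{\genSet_{i}}(\optProc)\) fails (the Clarke tangent cone is not monotone under set inclusion). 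Your instinct that a transversality-type qualification is indispensable is thus correct, but the obstacle you flag is not a technicality to be finessed later: it is a counterexample to the statement as written, so the plan cannot be completed without strengthening the hypotheses.

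For comparison, the paper's own proof is a two-line computation with the generalized directional derivative of the distance function: it infers from \(\genSet\subset\genSet_{i}\) that \(\gdd[\dist{\genSet}]{\optProc}{\dir}=0\) forces \(\gdd[\dist{\genSet_{i}}]{\optProc}{\dir}=0\) \textemdash{} precisely the false monotonicity above, since the difference quotient defining the generalized directional derivative is not monotone in the set \textemdash{} and asserts the converse implication with no argument at all; it contains the same gap you identified, only hidden. The lemma is invoked in the paper only for the coordinate cylinders \(\admstHD_{t}\) and \(\admctrlHD_{t}\) of \eqref{eq: st constraints HD}--\eqref{eq: ctrl constraints HD}, whose defining constraints occupy pairwise disjoint blocks of coordinates; there \(\dist{\admProc}(y)^2=\sum_{t=0}^{\hor}\dist{\admstHD_{t}}(y)^2+\sum_{t=0}^{\hor-1}\dist{\admctrlHD_{t}}(y)^2\), your subtransversality inequality holds with \(C=\sqrt{2\hor+1}\), and your sequential argument (or a direct computation exploiting the product structure) does close the proof. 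To repair the proposal, either add such a qualification as a hypothesis and derive the metric inequality from it, or restate the lemma for the product-structured sets actually used.
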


\begin{proof}
	For \(\dir \in \Tcs_{\genSet}(\optProc)\), \(\gdd[\dist{\genSet}]{\optProc}{\dir}= 0\). Since \( \genSet \subset \genSet_{i}\), \(\gdd[\dist{\genSet_{i}}]{\optProc}{\dir} = 0\) for each \(i = 1, \ldots, N\) and hence \(\dir \in \bigl(\bigcap_{i=1}^{N} \Tcs_{\genSet_{i}}(\optProc)\bigr).\) Conversely, if \(\dir \in \bigcap_{i=1}^{N} \Tcs_{\genSet_{i}}(\optProc)\), then  \(\gdd[\dist{\genSet_{i}}]{\optProc}{\dir}= 0 \text{ for } i= 1, \ldots, N\). Thus, \(\gdd[\dist{\genSet}]{\optProc}{\dir} = 0\) and \(\dir \in \Tcs_{\genSet}(\optProc)\).
\end{proof}

		\label{Appendix}
	\bibliographystyle{amsalpha}
	 \bibliography{references}

\end{document}